\def\dOi{10(1:8)2014}
\subjclass{F.4.3 Formal Languages}
\newcommand{\operator}[2]{\newcommand{#1}{{{\ensuremath{\mathop{\mathrm{#2}}}\xspace}}}}
\newcommand{\domain}[2]{\newcommand{#1}{\ensuremath{\mathbb {#2}}}\xspace}
\operator{\fo}{FO}
\operator{\mso}{MSO}
\operator{\wmso}{WMSO}
\operator{\msou}{MSO+U}
\operator{\wmsou}{WMSO+U}
\operator{\wmsor}{WMSO+R}
\newcommand{\wB}{\ensuremath{\omega \mathrm{B}}\xspace}
\newcommand{\wS}{\ensuremath{\omega \mathrm{S}}\xspace}
\newcommand{\wT}{\ensuremath{\omega \mathrm{T}}\xspace}
\newcommand{\fT}{\ensuremath{\mathrm{T}}\xspace}
\newcommand{\fB}{\ensuremath{\mathrm{B}}\xspace}
\newcommand{\fS}{\ensuremath{\mathrm{S}}\xspace}
\domain{\N}{N}
\domain{\Z}{Z}
\domain{\Q}{Q}
\domain{\R}{R}
\domain{\C}{C}
\newcommand{\closure}{\mathrm{Closure}}
\newcommand{\prom}[1]{\ensuremath{\widehat{{#1}^\ast}}\xspace}
\newcommand{\aut}[1]{\ensuremath{\mathcal {#1}}\xspace}
\newcommand{\mathbold}[3]{\ensuremath{\mathbf{#1}^{#2}_{#3}}}
\newcommand{\bsigma}[1]{\mathbold{\Sigma}{0}{#1}}
\newcommand{\bpi}[1]{\mathbold{\Pi}{0}{#1}}
\newcommand{\asigma}[1]{\mathbold{\Sigma}{1}{#1}}
\newcommand{\api}[1]{\mathbold{\Pi}{1}{#1}}
\newcommand{\comment}[1]{}
\newcommand{\fun}[3]{\ensuremath{#1\colon #2 \to #3}}
\operator{\dom}{dom}
\operator{\id}{id}
\operator{\sgn}{sgn}
\operator{\close}{cl}
\operator{\inter}{int}
\newcommand{\comp}[1]{\ensuremath{\overline{#1}}}
\newcommand{\newthm}[2]{\newtheorem{#1}[test_thereom]{#2}}
\newcommand{\ident}[2]{\newcommand{#1}{\ensuremath{\mathrm{#2}}\xspace}}
\ident{\lang}{L}
\title
[Separation property for \wB- and \wS-regular languages]
{Separation property for \wB- and \wS-regular languages}
\newcommand{\thankncn}{\thanks{Work supported by the National Science Center (decision 
DEC-2012/07/D/ST6/02443)}}
\author{Micha{\l} Skrzypczak}
\address{Institute of Informatics, University of Warsaw, ul. Banacha 2, 02-097 Warsaw, Poland}
\email{mskrzypczak@mimuw.edu.pl}
\keywords{$\omega$-regular languages, counter automata, profinite monoid, \wS-regular languages}
\newcommand{\trans}[1]{\overset{#1}{\longrightarrow}}
\newcommand{\ceps}{\ensuremath{\mathbf{nil}}}
\newcommand{\cinc}{\ensuremath{\mathbf{inc}}}
\newcommand{\cres}{\ensuremath{\mathbf{reset}}}
\newcommand{\cop}{o}
\renewcommand{\setminus}{-}
\operator{\traces}{traces}
\operator{\sep}{sep}
\newcommand{\types}[1]{\ensuremath{\mathrm{Tp}^{#1}}\xspace}
\newcommand{\mtraces}{M_{\mathrm{trans}}}
\newcommand{\val}[1]{\mathrm{val}({#1})}
\newcommand{\shl}{\leftarrow}
\newcommand{\shr}{\rightarrow}
\newcommand{\finA}{w}
\newcommand{\finB}{z}
\newcommand{\seqA}{W}
\newcommand{\seqB}{Z}
\newcommand{\seqC}{Y}
\newcommand{\infA}{u}
\newcommand{\infB}{v}
\newcommand{\etype}{\mathrm{Etp}}
\begin{document}


\begin{abstract}
\noindent In this paper we show that \wB- and \wS-regular languages satisfy the following separation-type theorem
\begin{quote}
If $L_1,L_2$ are disjoint languages of $\omega$-words both recognised
by \wB- (resp. \wS)-automata then there exists an $\omega$-regular
language $L_\sep$ that contains $L_1$, and whose complement contains $L_2$.
\end{quote}

In particular, if a language and its complement are recognised by \wB- (resp. \wS)-automata then the language is $\omega$-regular.

The result is especially interesting because, as shown by Boja{\'n}czyk and Colcombet, \wB-regular languages are complements of \wS-regular languages. Therefore, the above theorem shows that these are two mutually dual classes that both have the separation property. Usually (e.g. in descriptive set theory or recursion theory) exactly one class from a pair $\mathcal C, \mathcal C^c$ has the separation property.

The proof technique reduces the separation property for $\omega$-word languages to profinite languages using Ramsey's theorem and topological methods. After that reduction, the analysis of the separation property in the profinite monoid is relatively simple. The whole construction is technically not complicated, moreover it seems to be quite extensible.

The paper uses a framework for the analysis of \fB- and \fS-regular languages in the context of the profinite monoid that was proposed by Toru{\'n}czyk.
\end{abstract}

\maketitle


\section{Introduction}

The classes of \wB- and \wS-regular languages are extensions of $\omega$-regular languages proposed by Boja{\'n}czyk and Colcombet in~\cite{bojanczyk_bounds}. The idea is to define \emph{asymptotic} properties of $\omega$-words. The standard example is the following \wB-regular language
\[\left\{a^{n_0}b a^{n_1}ba^{n_2}b\ldots:\ \text{the sequence $n_i$ is bounded}\right\}\subseteq\{a,b\}^\omega.\]

The main technical contribution of~\cite{bojanczyk_bounds} is the following theorem.
\begin{theorem}[Theorem~4.1 in~\cite{bojanczyk_bounds}]\label{th:duality}
The complement of an \wB-regular language is effectively \wS-regular and vice versa.
\end{theorem}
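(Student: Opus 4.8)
The plan is to prove the direction ``the complement of an \wB-regular language is effectively \wS-regular''; the other direction is handled by the same argument with the two kinds of acceptance condition interchanged (the $S$-condition, ``the values of the checked counter tend to $\infty$'', and the $B$-condition, ``the values of the checked counter stay bounded'', play dual roles throughout). So fix a nondeterministic \wB-automaton $\aut{A}$ over an alphabet $A$ with counter set $C$; the goal is to build an \wS-automaton recognising $\comp{L(\aut{A})}$.

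\emph{Step 1: reduce \wB-acceptance to a family of Büchi conditions by capping.} For each $N \in \N$ let $\aut{A}_N$ be the ordinary Büchi automaton obtained from $\aut{A}$ by simulating its transitions while blocking any transition that would push some counter strictly above $N$. Since a run of $\aut{A}$ is $B$-accepting exactly when its counter values admit a finite upper bound, $L(\aut{A}) = \bigcup_{N} L(\aut{A}_N)$ with $L(\aut{A}_0) \subseteq L(\aut{A}_1) \subseteq \cdots$. Consequently $\comp{L(\aut{A})} = \bigcap_{N} \comp{L(\aut{A}_N)}$, i.e.\ $w \in \comp{L(\aut{A})}$ iff $w$ is rejected by $\aut{A}_N$ for every $N$. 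A preliminary normalisation of $\aut{A}$ (bounding how counters change per transition, making the counters hierarchical) makes this capping well behaved.

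\emph{Step 2: complement the family $(\aut{A}_N)_N$ uniformly.} The automata $\aut{A}_N$ share a single finite ``shape'' and differ only in the numeric threshold, so I want one deterministic device: a Safra / latest-appearance-record determinisation of the underlying Büchi structure of $\aut{A}$ that simulates all of its runs at once, augmented so that each branch of the record tree carries a bookkeeping counter equal to the least cap $N$ for which that branch still supports a live ``accepting-so-far'' run of $\aut{A}$. Reading this deterministic automaton on $w$, the word $w$ is rejected by $\aut{A}_N$ precisely when every record branch is eventually discarded at a threshold exceeding $N$ (equivalently, the non-discarded branches have bookkeeping value above $N$ from some point on). Hence $w \in \comp{L(\aut{A})}$ — rejection for all $N$ — exactly when along the surviving branches the bookkeeping counter's value tends to infinity. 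Taking those bookkeeping counters as the $S$-counters, checked at the record resets, yields an \wS-automaton for $\comp{L(\aut{A})}$. Capping, the record-tree construction, and the adjunction and update of the bookkeeping counters are all finite effective operations, so the translation is effective.

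\emph{Main obstacle.} Step 2 is the crux: ordinary record-tree determinisation is purely qualitative, and grafting quantitative counter information onto it is delicate because $\aut{A}$'s own resets and the record tree's own merges and accepting resets occur asynchronously. One must show (a) that ``least cap at which a live run survives on a given branch'' is itself computable by a deterministic finite-memory-plus-counters process — i.e.\ that the minimum over a possibly large family of competing partial runs can be maintained with finitely many counters — and (b) that the asymptotics of that bookkeeping counter along the surviving branches are robust enough that the $\liminf = \infty$ ($S$-)condition captures exactly ``rejected by $\aut{A}_N$ for all $N$'' and nothing coarser. Establishing this interaction of counter resets with the record-tree dynamics is where essentially all of the work lies; the remaining steps are bookkeeping.
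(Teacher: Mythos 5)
First, a remark on scope: the paper does not prove this statement at all --- it is imported verbatim as Theorem~4.1 of Boja{\'n}czyk and Colcombet's \emph{Bounds in $\omega$-regularity}, so there is no in-paper proof to compare against. Judged on its own, your proposal has a genuine gap, and it sits exactly where you yourself flag the ``main obstacle''.

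Step~1 is fine: capping the counters at $N$ gives B\"uchi automata $\aut{A}_N$ with $L(\aut{A})=\bigcup_N L(\aut{A}_N)$, hence $\comp{L(\aut{A})}=\bigcap_N \comp{L(\aut{A}_N)}$. The problem is Step~2. You propose a Safra-style determinisation carrying, on each record branch, a counter equal to the least cap $N$ for which a live run survives on that branch, and you assert that membership in $\comp{L(\aut{A})}$ is equivalent to this counter tending to infinity along surviving branches. Two things are asserted here that are, in fact, the entire content of the theorem. (a)~The quantity ``minimum over all partial runs consistent with this branch of the maximum counter value seen so far'' is a min-max value of a nondeterministic \fB-automaton on a prefix; it is known from the theory of these automata (and of distance automata before them) that such values cannot in general be maintained exactly by a deterministic device with finitely many counters supporting only $\cinc$ and $\cres$ --- deterministic counter automata are strictly weaker, and the whole limitedness/determinisation theory exists precisely because this step is hard. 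You cannot wave it through as bookkeeping. (b)~Even granting (a), the negation of ``bounded'' is ``unbounded'' ($\limsup=\infty$), whereas the \wS-acceptance condition demands that values at resets \emph{tend} to infinity ($\liminf=\infty$). Converting a universally quantified ``every run is unbounded or fails the B\"uchi condition'' into an existentially quantified run whose counters genuinely tend to infinity is precisely the difficult direction of the complementation theorem; your sentence ``exactly when along the surviving branches the bookkeeping counter's value tends to infinity'' states the conclusion rather than deriving it, and the asynchrony between $\aut{A}$'s resets and the record tree's merges that you mention is only one of the ways this equivalence can fail.

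For reference, the actual proof in the literature does not determinise. It works with a Ramsey-type factorisation of the $\omega$-word into blocks labelled by elements of a finite algebra enriched with counter-behaviour information (essentially the ``types'' that the present paper also exploits in Section~\ref{s:reduction}), complements at the level of these finite types, and only then reassembles an \wS-automaton; the passage from $\limsup$ to $\liminf$ is absorbed by choosing the factorisation appropriately. If you want to salvage your outline, Step~1 can stay, but Step~2 needs to be replaced by an argument of that algebraic/combinatorial kind rather than by determinisation.
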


In this paper we show that both these classes admit the separation property. In general, a class of languages $\mathcal C$ has the \emph{separation property} with respect to a class $\mathcal D$, if the following condition holds:

\begin{quote}
For every pair of disjoint languages $L_1,L_2$ from $\mathcal C$ there exists a language $L_\sep\in\mathcal D$ such that\footnote{To distinguish the complement from the closure we denote the complement of a set $X$ by $X^c$.}
\[L_1\subseteq L_\sep\quad\text{and}\quad L_2\subseteq L_\sep^c.\]
\end{quote}

In that case we say that $L_\sep$ \emph{separates} $L_1$ and $L_2$. If not mentioned otherwise, the class $\mathcal D$ is taken as $\mathcal C\cap \mathcal C^c$.

Usually, one class from a pair of dual classes $\mathcal C, \mathcal C^c$ has the separation property and the other one does not. Below we recall some known separation-type theorems. The first one is a simple observation about Borel sets.

\begin{theorem}[Theorem~II~22.16 in~\cite{kechris_descriptive}]
Let $\eta<\omega_1$. Every two disjoint $\bpi \eta$ languages can be separated by a language that belongs to $\bpi \eta\cap \bsigma \eta$. On the other hand, there exists a pair of disjoint languages in $\bsigma \eta$ that cannot be separated as above.
\end{theorem}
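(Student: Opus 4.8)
\noindent\emph{Proof proposal.}\ The plan is to derive both assertions from the \emph{reduction property} of the class $\bsigma\eta$ over $X=\Sigma^\omega$, which is a zero-dimensional Polish space (I take $|\Sigma|\ge2$ and $1\le\eta<\omega_1$, the range in which the second assertion is meaningful). The reduction property states: for all $B_0,B_1\in\bsigma\eta$ there are disjoint $A_0,A_1\in\bsigma\eta$ with $A_i\subseteq B_i$ and $A_0\cup A_1=B_0\cup B_1$.

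To prove it, write $B_i=\bigcup_n P^i_n$ with each $P^i_n$ of Borel rank $<\eta$; this is the definition of $\bsigma\eta$ when $\eta$ is a successor or a limit, and when $\eta=1$ it uses zero-dimensionality of $X$ (take the $P^i_n$ to be basic clopen sets). Now interleave the two families into a single list of stages $P^0_0,P^1_0,P^0_1,P^1_1,\dots$ and put each point of $B_0\cup B_1$ on the side of the first stage containing it, breaking ties towards side $0$: formally $A_i=\bigcup_n\bigl(P^i_n\setminus S^i_n\bigr)$ with $S^0_n=\bigcup_{m<n}(P^0_m\cup P^1_m)$ and $S^1_n=\bigcup_{m\le n}P^0_m\cup\bigcup_{m<n}P^1_m$. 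Disjointness and $A_0\cup A_1=B_0\cup B_1$ are immediate from this ``first-stage'' description, and each $P^i_n\setminus S^i_n$ is a finite Boolean combination of rank-$<\eta$ sets, hence lies in $\bdelta\eta$, so indeed $A_0,A_1\in\bsigma\eta$. The first half of the theorem now follows at once: for disjoint $L_1,L_2\in\bpi\eta$, apply reduction to $L_1^c,L_2^c\in\bsigma\eta$; since $L_1\cap L_2=\emptyset$ their union is $X$, so the resulting $A_1,A_2$ satisfy $A_i\subseteq L_i^c$ and $A_1\cup A_2=X$, hence $A_2=A_1^c$ and $L_\sep:=A_2\in\bsigma\eta\cap\bpi\eta=\bdelta\eta$; moreover $A_1\subseteq L_1^c$ gives $L_1\subseteq A_1^c=L_\sep$ and $A_2\subseteq L_2^c$ gives $L_2\subseteq A_2^c=L_\sep^c$, so $L_\sep$ separates $L_1$ and $L_2$.

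For the second half I would diagonalise against every potential $\bdelta\eta$ separator by means of a universal set. Fix $\mathcal U\subseteq X\times X$ in $\bsigma\eta$ universal for $\bsigma\eta$ subsets of $X$ (a standard construction), write $\mathcal U_x=\{y:(x,y)\in\mathcal U\}$, and fix a homeomorphism $X\cong X\times X$, $z\leftrightarrow(z^0,z^1)$. The continuous maps $z\mapsto(z^i,z)$ pull $\mathcal U$ back to $\bsigma\eta$ sets $P_i:=\{z:(z^i,z)\in\mathcal U\}$, and applying reduction to $P_0,P_1$ yields disjoint $A_0,A_1\in\bsigma\eta$ with $A_i\subseteq P_i$ and $A_0\cup A_1=P_0\cup P_1$. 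I claim no $\bdelta\eta$ set separates $A_0$ from $A_1$. Suppose $A_0\subseteq C$ and $A_1\cap C=\emptyset$ for some $C\in\bdelta\eta$. Since $C,C^c\in\bsigma\eta$ there are $a,b\in X$ with $\mathcal U_a=C$ and $\mathcal U_b=C^c$; let $z^*$ be the unique point with $(z^*)^0=b$ and $(z^*)^1=a$. Then $z^*\in P_0\iff(b,z^*)\in\mathcal U\iff z^*\in C^c$, and likewise $z^*\in P_1\iff z^*\in C$; since $z^*\in X=C\cup C^c$ this forces $z^*\in P_0\cup P_1=A_0\cup A_1$. If $z^*\in A_0$, then $z^*\in P_0$ and $z^*\in C$ (as $A_0\subseteq P_0$ and $A_0\subseteq C$), so $z^*\in C^c\cap C=\emptyset$; if $z^*\in A_1$, then $z^*\in P_1$ and $z^*\in C^c$, so again $z^*\in C\cap C^c=\emptyset$. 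Either case is absurd, so $A_0,A_1$ are the required disjoint $\bsigma\eta$ languages that cannot be separated by a $\bdelta\eta$ set.

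The main obstacle is the reduction property, specifically its base case $\eta=1$: the interleaving construction delivers $\bdelta\eta$ pieces only because $X=\Sigma^\omega$ is zero-dimensional, so that its open sets split into clopen ones --- the analogue of the first assertion for $\bsigma1$ genuinely fails in connected spaces such as $\R$. The successor and limit steps of the reduction argument are then uniform, the positive half of the theorem is purely formal once reduction is available, and the negative half is a short diagonalisation modulo the routine existence of a $\bsigma\eta$-universal set.
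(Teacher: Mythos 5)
The paper does not prove this statement: it is quoted as classical background (Theorem~II~22.16 of Kechris) with no in-paper argument, so there is nothing to compare against except the textbook proof --- and yours is exactly that proof, correctly executed: reduction for $\bsigma \eta$ by interleaving countable decompositions into rank-$<\eta$ (hence $\bdelta \eta$) pieces, with zero-dimensionality invoked precisely where it is needed at $\eta=1$; the formal dualization yielding separation for $\bpi \eta$; and the failure of separation for $\bsigma \eta$ by reducing the two sections of a universal set and diagonalizing against any $\bdelta \eta$ separator. For what it is worth, the only point where the paper actually uses this theorem is Lemma~\ref{lm:s_separation}, which invokes the $\eta=1$ positive half in the zero-dimensional space $\prom{A}$ as a black box (and then also gives an independent compactness argument); your reduction-based proof specializes at $\eta=1$ to exactly what that lemma needs.
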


The following theorem is an important extension to the projective hierarchy.

\begin{theorem}[Lusin (see~\cite{kechris_descriptive}]
If $L_1,L_2\in\asigma 1$ are two disjoint analytic sets then there exists a Borel set separating them. There exists a pair of disjoint co-analytic (i.e. $\api 1$) sets that cannot be separated by any Borel set.
\end{theorem}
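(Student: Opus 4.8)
The first assertion is Lusin's classical separation theorem, and my plan is to run his tree argument. Call two sets $P,Q$ \emph{Borel-separable} if some Borel set contains $P$ and is disjoint from $Q$. The key lemma is a closure property: if $P=\bigcup_nP_n$ and $Q=\bigcup_mQ_m$ and every pair $(P_n,Q_m)$ is Borel-separable, then so is $(P,Q)$ — given Borel separators $C_{n,m}$, the Borel set $\bigcup_n\bigcap_mC_{n,m}$ contains $P$ and avoids $Q$. Assuming $L_1,L_2\neq\emptyset$ (otherwise $\emptyset$ or the whole space separates), I would fix continuous surjections $f\colon\omega^\omega\to L_1$ and $g\colon\omega^\omega\to L_2$ from the Baire space and set $A_s=f(N_s)$, $B_s=g(N_s)$ for $s\in\omega^{<\omega}$, where $N_s$ is the basic clopen set of extensions of $s$; then $A_\emptyset=L_1$, $B_\emptyset=L_2$, $A_s=\bigcup_kA_{s^\frown k}$ and similarly for $B$. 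If $(L_1,L_2)$ were not Borel-separable, then applying the contrapositive of the closure lemma at each stage produces branches $\alpha,\beta\in\omega^\omega$ with $(A_{\alpha\restr n},B_{\beta\restr n})$ never Borel-separable. But $f(\alpha)\in L_1$ and $g(\beta)\in L_2$ are distinct, hence contained in disjoint open sets $U,V$ of the ambient (Hausdorff) space; by continuity $A_{\alpha\restr n}\subseteq U$ and $B_{\beta\restr n}\subseteq V$ for $n$ large, so the open set $U$ Borel-separates them — a contradiction completing the first half.

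For the second assertion my plan is to derive the failure of separation for $\api1$ from the fact that co-analytic sets, unlike analytic ones, \emph{do} satisfy the reduction property. Fixing a set $\mathcal{U}\subseteq\omega^\omega\times\omega^\omega$ universal for $\api1$ subsets of $\omega^\omega$ and a homeomorphism $e\mapsto(e_0,e_1)$ of $\omega^\omega$ onto $\omega^\omega\times\omega^\omega$, the two $\api1$ sets $\{(e,x):(e_0,x)\in\mathcal{U}\}$ and $\{(e,x):(e_1,x)\in\mathcal{U}\}$ have sections, over all codes $e$, ranging through all pairs of co-analytic subsets of $\omega^\omega$; reducing them yields a disjoint pair $C_0,C_1$ of $\api1$ sets with the same union. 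One then checks that $C_0,C_1$ admit no Borel separator: a Borel $B$ separating them would, when $\mathcal{U}$ is evaluated at a code for $B$ (using a $\asigma1$-good parametrisation of the Borel $=\adelta1$ sets), make some $\api1$-hard set turn out to be $\asigma1$, contradicting $\asigma1\neq\api1$ — and $\asigma1\neq\api1$ holds because any universal analytic set is analytic but not Borel.

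I expect the entire difficulty to sit in the second half. The first half is a short, self-contained diagonalisation over the basic open sets of $\omega^\omega$, using nothing beyond the continuous-image characterisation of analytic sets and Hausdorffness of the target space. By contrast, the reduction property of $\api1$ and the existence of a well-behaved $\asigma1$-coded parametrisation of the Borel sets are the genuinely non-trivial ingredients; both are classical and can be taken from~\cite{kechris_descriptive}.
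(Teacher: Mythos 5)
This theorem is quoted from~\cite{kechris_descriptive} as classical background; the paper contains no proof of it, so your proposal can only be measured against the standard textbook arguments. Your first half is exactly Lusin's classical tree argument and is correct as written: the closure lemma with separator $\bigcup_n\bigcap_m C_{n,m}$, the continuous surjections $f,g$ from Baire space, the recursive choice of branches $\alpha,\beta$ along which separability keeps failing, and the final contradiction via disjoint open neighbourhoods of $f(\alpha)\neq g(\beta)$ together with $f(N_{\alpha\restr n})\subseteq U$ for large $n$ all check out.

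The second half is where the gap sits, and it is precisely in the one step you wave at. The skeleton (universal $\api 1$ set, code-doubling, reduction to a disjoint pair $C_0,C_1$) is the standard Novikov/Kechris route, but the claimed mechanism for the contradiction --- ``evaluate $\mathcal U$ at a code for $B$ using a $\asigma 1$-good parametrisation of the Borel sets, and some $\api 1$-hard set turns out to be $\asigma 1$'' --- is not an argument and is not how the contradiction actually arises; no parametrisation of the Borel sets is needed at all. The clean version works with one variable: put $P_i=\{x\in\omega^\omega:(x_i,x)\in\mathcal U\}$ for $i=0,1$, reduce to $C_0\subseteq P_0$, $C_1\subseteq P_1$ with $C_0\cap C_1=\emptyset$ and $C_0\cup C_1=P_0\cup P_1$, and suppose $B$ is Borel with $C_0\subseteq B$ and $C_1\cap B=\emptyset$. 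Since $B$ and $B^c$ are Borel, hence $\api 1$, they have codes $b$ and $a$ in $\mathcal U$; let $x$ be the point with $x_0=a$ and $x_1=b$. Then $x\in P_0\iff x\notin B$ and $x\in P_1\iff x\in B$, so in either case $x\in P_0\cup P_1=C_0\cup C_1$; if $x\in B$ then $x\notin P_0\supseteq C_0$, so $x\in C_1$, contradicting $C_1\cap B=\emptyset$, and if $x\notin B$ then $x\in C_0\subseteq B$ --- a direct pointwise self-reference, not a complexity collapse. Note also that with your two-variable sets $\{(e,x):(e_i,x)\in\mathcal U\}$ a Borel separator $D$ only yields that every disjoint pair of $\api 1$ sets is separated by some section $D_e$, which still requires a further diagonalisation to refute; the one-variable diagonal $x\mapsto(x_i,x)$ short-circuits this. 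So: right strategy, but the decisive diagonal step must be carried out, and in the form above rather than the one you describe.
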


The above theorem has its counterpart for regular languages of infinite trees. There is a correspondence between the parity index of a tree automaton and the topological complexity of the tree-language recognised by it. In particular, alternating $(1,2)$-parity tree automata (ATA) recognise analytic languages, $(0,1)$-parity ATA recognise co-analytic sets, while weak alternating automata recognise only Borel languages. The following two theorems are analogous to the above one in the tree-regular context.

\begin{theorem}[Rabin~\cite{rabin_separation}, also Kupferman, Vardi~\cite{kupferman_complementation}]\label{th:rabin_sep}
If $L_1,L_2$ are two disjoint regular tree languages recognised by $(1,2)$-parity ATA then there exists a language separating them recognisable by a weak alternating automaton (thus Borel).
\end{theorem}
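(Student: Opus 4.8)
The plan is to give the automata-theoretic proof in the spirit of Kupferman and Vardi~\cite{kupferman_complementation}; Rabin's original argument~\cite{rabin_separation} is less transparent, deriving the statement from his tree-automata complementation theorem. \textbf{Normalisation.} Fix $(1,2)$-parity ATA $\mathcal A$ and $\mathcal A'$ with $L(\mathcal A)=L_1$ and $L(\mathcal A')=L_2$, and let $\mathcal B$ be the dual automaton of $\mathcal A'$. Dualising a $(1,2)$-ATA --- swap $\wedge$ with $\vee$ and negate the acceptance condition --- produces a $(0,1)$-parity, i.e.\ ``co-B\"uchi'', ATA, so $L(\mathcal B)=L_2^c$, and the hypothesis $L_1\cap L_2=\emptyset$ becomes the inclusion $L(\mathcal A)\subseteq L(\mathcal B)$. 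It thus suffices to prove: for a B\"uchi ATA $\mathcal A$ and a co-B\"uchi ATA $\mathcal B$ with $L(\mathcal A)\subseteq L(\mathcal B)$ there is a \emph{weak} alternating automaton $\mathcal C$ with $L(\mathcal A)\subseteq L(\mathcal C)\subseteq L(\mathcal B)$ --- then $L_\sep:=L(\mathcal C)$ separates $L_1,L_2$, and $L_\sep$ is Borel since weak alternating automata recognise only Borel languages.

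\textbf{The separating automaton.} The automaton $\mathcal C$ reads the input tree and simulates, in parallel, the acceptance games of $\mathcal A$ and of $\mathcal B$; in addition it maintains a ``rank'' counter with values in $\{0,1,\dots,2|Q_{\mathcal B}|\}$, in the style of Kupferman--Vardi's rank analysis of co-B\"uchi automata. Along a branch the rank never increases, except that it may be reset to its maximum whenever $\mathcal A$ visits its accepting priority~$2$; its parity is tied to the $\mathcal B$-priority currently seen; and a branch is declared winning for the automaton player exactly when the rank is eventually constant and even. The point of this discipline, and the step one must get exactly right, is that the induced acceptance condition collapses to a \emph{weak} one: every strongly connected component of the transition structure of $\mathcal C$ should meet only a single priority, making $\mathcal C$ genuinely weak.

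\textbf{Correctness.} The inclusion $L(\mathcal C)\subseteq L(\mathcal B)$ is easy: a winning run of $\mathcal C$ on $t$ projects to a strategy in the $\mathcal B$-game, and on every branch the stable even rank forbids $\mathcal B$'s rejecting priority to appear infinitely often, so $t\in L(\mathcal B)=L_2^c$. For $L(\mathcal A)\subseteq L(\mathcal C)$, take $t\in L_1$; the automaton player wins the $\mathcal A$-game and, since $L(\mathcal A)\subseteq L(\mathcal B)$, also the $\mathcal B$-game, so form the product of the two winning strategies. One then decorates the $\mathcal B$-part of the resulting (finite-width) product run DAG with ranks bounded by $2|Q_{\mathcal B}|$ obeying the discipline above, by a well-founded induction along the DAG; the bound is delivered by a pumping argument --- between two consecutive $\mathcal A$-progress points the $\mathcal B$-component can only do boundedly much before a state repeats.

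\textbf{Main obstacle.} The crux is the design in the second paragraph together with the ranking argument in the third: one must arrange the reset/rank discipline so that (i) the condition it induces really reduces to a weak condition and (ii) a rank bound depending only on $|Q_{\mathcal B}|$ certifies \emph{every} tree of $L(\mathcal A)$, even though the number of ``bad'' moves of $\mathcal B$ along a branch is unbounded over trees. This finite-memory combinatorics --- running on K\"onig's lemma and pumping in the finite-state automaton $\mathcal B$ --- is where the content of the theorem lives; the dualisation, the parallel product, and ``weak $\Rightarrow$ Borel'' are routine.
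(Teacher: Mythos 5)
First, a point of comparison: the paper does not prove this theorem at all --- it is quoted from Rabin~\cite{rabin_separation} and Kupferman--Vardi~\cite{kupferman_complementation} as background, and the only thing the paper says about its proof is that it ``relies on the fact that every alternating $(1,2)$-parity tree automaton is equivalent to a nondeterministic $(1,2)$-parity tree automaton.'' That one remark already points at the main gap in your sketch. You never nondeterminise anything: you keep $\mathcal{A}$ alternating and take $\mathcal{B}$ to be the alternating dual of $\mathcal{A}'$, and then speak of ``simulating in parallel the acceptance games of $\mathcal{A}$ and of $\mathcal{B}$.'' For genuinely alternating automata this product is not a well-defined single game with a per-branch acceptance condition, and --- more importantly --- the Kupferman--Vardi ranking machinery you invoke lives on run DAGs of bounded width, which you only get when $\mathcal{A}$ is \emph{nondeterministic} (so that a run of $\mathcal{A}$ is a relabelling of the input tree) and $\mathcal{B}$ is the \emph{universal} co-B\"uchi dual of a nondeterministic automaton run on top of that relabelled tree. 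The step ``first replace both $(1,2)$-parity ATA by equivalent nondeterministic B\"uchi tree automata'' is not an optional normalisation; it is the ingredient the whole construction rests on, and it is missing.

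Second, the rank discipline you describe is internally inconsistent with weakness. You allow the rank to be ``reset to its maximum whenever $\mathcal{A}$ visits its accepting priority,'' and simultaneously claim that every strongly connected component of $\mathcal{C}$ meets a single priority. If the rank can jump back up to its maximum from an arbitrary value, the rank is not monotone along paths of the transition structure, the rank-indexed components are not linearly ordered by reachability, and the state space collapses into SCCs containing both parities --- i.e.\ $\mathcal{C}$ is not weak. In the actual Kupferman--Vardi construction the ranks are non-increasing along every path (that is precisely what makes the automaton weak), the B\"uchi obligation of $\mathcal{A}$ is absorbed into the meaning of the odd ranks rather than by resetting, and the bound on the ranks is obtained from the emptiness of the product B\"uchi tree automaton $\mathcal{A}\times\mathcal{A}'$, giving a bound in terms of $|Q_{\mathcal{A}}|\cdot|Q_{\mathcal{A}'}|$ --- not $2|Q_{\mathcal{B}}|$ via ``pumping between consecutive $\mathcal{A}$-progress points,'' which is the bound for complementing a co-B\"uchi \emph{word} automaton and does not transfer as stated. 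You have correctly identified where the content of the theorem lives, but the two steps you flag as ``the crux'' are exactly the ones your sketch gets wrong.
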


\begin{theorem}[Hummel, Michalewski, Niwi{\'n}ski~\cite{hummel_separation}]
There exists a pair of tree-languages recognised by $(0,1)$-parity ATA that cannot be separated by any Borel set.
\end{theorem}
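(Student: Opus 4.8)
This statement is a tree-automata incarnation of the classical failure of the separation property for the pointclass $\api 1$ of co-analytic sets --- the exact dual of Lusin's theorem recalled above: there are two disjoint $\api 1$ subsets of $2^\omega$ that no Borel set separates. Since, by the results behind Theorem~\ref{th:rabin_sep}, every language recognised by a weak alternating automaton is Borel, it is enough to produce two disjoint tree languages $L_0,L_1$, each recognised by a $(0,1)$-parity ATA, whose pair is at least as hard to separate, via a continuous reduction, as a fixed $\api 1$-inseparable pair.

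The plan has three steps. First I would recall that $(0,1)$-parity ATA are $\api 1$-complete: the canonical game tree language $W_{(0,1)}$ --- codes of parity games with priorities in $\{0,1\}$ that Eve wins --- is recognised by a $(0,1)$-parity ATA, and every $\api 1$ subset of a Polish space reduces to it by a continuous map. Second I would produce an $\api 1$-inseparable pair $A_0,A_1$ already in game-theoretic shape: on one family of coded game arenas, set $A_0=\{t:\text{Eve wins on }t\text{ under a first }(0,1)\text{-condition}\}$ and $A_1=\{t:\text{Eve wins on }t\text{ under a second }(0,1)\text{-condition}\}$, the two conditions being arranged --- by a diagonal argument against an $\api 1$-universal set, carried out against both coordinates at once --- so that $A_0\cap A_1=\emptyset$ while no Borel set separates $A_0$ from $A_1$. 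Third, composing with the tree coding that defines $W_{(0,1)}$ turns this family into disjoint tree languages $L_0,L_1$, each still recognised by a $(0,1)$-parity ATA, together with a continuous map $e$ satisfying $e^{-1}(L_i)=A_i$ for $i=0,1$.

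Given all this, suppose a Borel tree language $L_\sep$ satisfied $L_0\subseteq L_\sep$ and $L_1\subseteq L_\sep^c$. Then $e^{-1}(L_\sep)$ would be a Borel set with $A_0\subseteq e^{-1}(L_\sep)$ and $A_1\subseteq e^{-1}(L_\sep)^c$, contradicting the choice of $A_0,A_1$. Hence no Borel separator exists, and in particular none recognisable by a weak alternating automaton, which is the claim.

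The hard part is the second step. It is not enough that $(0,1)$-parity ATA capture one $\api 1$-complete language; one must realise an $\api 1$-inseparable \emph{pair} by two such automata, i.e.\ build a continuous reduction that simultaneously sends $A_0$ into $L_0$ and $A_1$ into $L_1$. The way around it is to present the abstract pair already inside the game formalism --- as two Eve-winning regions of a single arena under two $(0,1)$-conditions whose disjointness is forced by an explicit diagonal construction rather than obtained from determinacy --- so that the reduction is just the tree coding and is visibly $(0,1)$-ATA-implementable on each side. The technical core is then to verify that this explicit pair genuinely admits no Borel (equivalently, no $\adelta 1$) separator; once that is in place, the automata-theoretic bookkeeping is routine.
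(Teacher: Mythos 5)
First, a point of calibration: the paper does not prove this statement at all --- it is quoted as a known result of Hummel, Michalewski and Niwi{\'n}ski~\cite{hummel_separation} and used only as background in the introduction, so there is no internal proof to compare yours against. Judged against the argument in the cited reference, your strategy is the right one and is essentially the one used there: exhibit two globally disjoint game tree languages recognised by $(0,1)$-parity ATA, show that a Borel-inseparable pair of disjoint $\api 1$ sets reduces into them simultaneously by a single continuous map, and transfer inseparability backwards through preimages. The transfer step as you state it is sound: if $e$ is continuous with $A_i\subseteq e^{-1}(L_i)$ and $B$ were a Borel separator of $L_0$ from $L_1$, then $e^{-1}(B)$ would be a Borel separator of $A_0$ from $A_1$. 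Your location of the source of inseparability (the dual of Lusin's theorem, obtained from the reduction property of $\api 1$ plus a universal set and a diagonal argument) is also correct.

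The gap is that the one nontrivial step is announced rather than carried out, and it hides two distinct obligations that your sketch runs together. First, the languages $L_0,L_1$ must be disjoint \emph{as subsets of the whole tree space}, not merely have disjoint preimages under $e$; disjointness of $A_0,A_1$ gives you nothing off the range of $e$, so the two $(0,1)$-conditions must be designed to be mutually exclusive on every tree (e.g.\ by a flag that Eve must stabilise to $0$ in one language and to $1$ in the other), and one must check that each condition is still implementable by a $(0,1)$-parity ATA. Second, the ``simultaneous'' reduction is a genuine lemma: one needs a single continuous coding $e$ with $A_0\subseteq e^{-1}(L_0)$ and $A_1\subseteq e^{-1}(L_1)$ for a concrete inseparable pair (equivalently, that $(L_0,L_1)$ is complete for disjoint pairs of $\api 1$ sets), and this does not follow formally from $\api 1$-completeness of the single language $W_{(0,1)}$, since two separate Wadge reductions need not be mergeable into one map. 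Your last paragraph correctly identifies both issues, but a referee would count the proposal as a proof outline, not a proof, until that construction and its verification are supplied.
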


The above theorem is extended for higher levels of the alternating index hierarchy in~\cite{michalewski_separation}. Theorem~\ref{th:rabin_sep} relies on the fact that every alternating $(1,2)$-parity tree automaton is equivalent to a nondeterministic $(1,2)$-parity tree automaton.

In this work we show that both classes of \wB- and \wS-regular languages have the separation property with respect to $\omega$-regular languages. The proposed constructions are effective. The result is especially interesting since these are two mutually dual classes (see Theorem~\ref{th:duality} above). As a consequence of the separation properties we obtain the following corollary.

\begin{corollary}
If a given language of $\omega$-words $L$ and its complement $L^c$ are both \wB-regular (resp. \wS-regular) then $L$ is $\omega$-regular.
\end{corollary}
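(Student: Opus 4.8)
The plan is to obtain the corollary as an immediate consequence of the separation property proved in this paper, applied to the pair consisting of a language and its complement. Fix the \wB-regular case; the \wS-regular case is entirely analogous (and in fact also follows from the \wB-case via Theorem~\ref{th:duality}, since if $L$ and $L^c$ are both \wS-regular then both are also \wB-regular). Assume $L$ and $L^c$ are both \wB-regular. They are obviously disjoint, so the separation theorem provides an $\omega$-regular language $L_\sep$ with
\[
L \subseteq L_\sep \qquad\text{and}\qquad L^c \subseteq L_\sep^c .
\]
The second inclusion is equivalent to $L_\sep \subseteq L$, and together with the first inclusion this forces $L_\sep = L$. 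Since $L_\sep$ is $\omega$-regular, so is $L$.

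The only thing worth spelling out is that the separation theorem is being invoked with $L_1 := L$ and $L_2 := L^c$, \emph{both} taken from the same class: this is exactly the hypothesis of the corollary. Once this is observed there is no further content — the argument is the classical observation that a class with the separation property is ``self-complementary-free'', i.e.\ any language lying together with its complement in such a class must already lie in the separating class $\mathcal{C}\cap\mathcal{C}^c$, which here is the class of $\omega$-regular languages.

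Consequently, there is no real obstacle in the proof of the corollary itself; all the difficulty is concentrated in the separation theorem, whose proof constitutes the body of the paper (the reduction of the separation question for $\omega$-word languages to profinite languages by means of Ramsey's theorem and topological arguments, followed by the comparatively elementary analysis of separation inside the profinite monoid). If one keeps track of effectivity in that proof — which the constructions there do — the resulting $\omega$-regular language $L_\sep = L$ is obtained effectively, so the corollary can additionally be stated in effective form.
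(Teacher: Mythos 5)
Your proposal is correct and is essentially identical to the paper's own proof of this corollary: both apply Theorem~\ref{th:main} to the disjoint pair $L$, $L^c$ and observe that the separating $\omega$-regular language must equal $L$. The aside about reducing the \wS-case to the \wB-case via Theorem~\ref{th:duality} is valid but unnecessary, since Theorem~\ref{th:main} already covers both cases directly.
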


The above result in the case of \wB-regular languages was independently known by some researchers in the area. Nevertheless, to the best of the author's knowledge, it has never been published.

To prove the main result we reduce the separation property of $\omega$-word languages to the case of profinite words. For this purpose we use \fB- and \fS-automata introduced in~\cite{colcombet_stabilisation}. As shown in~\cite{torunczyk_limitedness} it is possible to define a language recognised by a \fB- or \fS-automaton as a subset of the profinite monoid $\prom{A}$. An intermediate step in our reasoning is proving the separation property for \fB- and \fS-regular languages of profinite words.

The paper is organised as follows. In Section~\ref{s:basic} we introduce basic notions. Section~\ref{s:automata} defines the automata models we use. In Section~\ref{s:profinite_sep} we prove separation results for languages of profinite words recognised by \fB- and \fS-automata. Section~\ref{s:reduction} contains the crucial technical tool, Theorem~\ref{th:recognition}, that enables to transfer separation results for languages of profinite words to the case of $\omega$-words. In Section~\ref{s:omega_sep} we use this theorem to show that \wB- and \wS-regular languages have the separation property. Section~\ref{s:direct} contains a direct and simpler proof of the separation property for the \wB-regular case. This proof was proposed by Thomas Colcombet, we present it here with his kind permission. Finally, in Section~\ref{s:ack} we give acknowledgements.


\section{Basic notions}\label{s:basic}

We work with two models (\wB and \wS) at the same time. Therefore, we introduce a notion \wT to denote one of the models: \wB or \wS. By \fT we denote the corresponding finite word automata (\fB or \fS). By $A$ we denote a finite alphabet. Elements of $A^\ast$ are called finite words while $A^\omega$ is the set of $\omega$-words.


\subsection{Monoids}

We use monoids and Ramsey's theorem to decompose $\omega$-words into finite ones.

\begin{definition}
A (finite) \emph{monoid} is a (finite) algebraic structure $M$ equipped with an operation $\fun{\cdot}{M^2}{M}$ that is associative ($a\cdot(b\cdot c)=(a\cdot b)\cdot c$) and with a distinguished element $1\in M$ that satisfies $1\cdot a = a\cdot 1 = a$.

The operation $\cdot$ is called \emph{product} and $1$ is called the \emph{neutral element}.

An element $e\in M$ is called \emph{idempotent} if $e\cdot e = e$.
\end{definition}

Observe that the set of all finite words $A^\ast$ has a natural structure of monoid with the operation of concatenation and $1$ defined as the empty word.

\begin{definition}
A function $\fun{h}{M}{N}$ is a \emph{homomorphism} between monoids $M,N$ if it preserves the product:
\[h(s\cdot t) = h(s)\cdot h(t).\]
\end{definition}

Now we define the monoid representing possible \emph{runs} of a nondeterministic automaton. It can be seen as an algebraic formalisation of the structure used by B\"uchi~\cite{buchi_decision} in his famous complementation lemma.

\begin{definition}
Let $\aut{A}$ be a nondeterministic automaton with states $Q$. Define $\mtraces(\aut{A})$ as $\mathcal P(Q\times Q)$. Let the neutral element be $\{(q,q):q\in Q\}$ and product:
\[s\cdot t\ =\ \left\{(p,r):\ \exists_{q\in Q}\ (p,q)\in s\ \wedge\ (q,r)\in t\right\}.\]

Let $\fun{h_{\aut{A}}}{A^\ast}{\mtraces(\aut{A})}$ map a given finite word $\finA$ to the set of pairs $(p,q)$ such that the automaton $\aut{A}$ has a run over $\finA$ starting in $p$ and ending in $q$.
\end{definition}

It is easy to check that $\mtraces(\aut{A})$ is a finite monoid and $h_\aut{A}$ is a homomorphism.

\subsection{Profinite monoid}

In this subsection we introduce the profinite monoid $\prom{A}$. A formal introduction to profinite structures can be found in~\cite{almeida_profinite} or~\cite{pin_profinite}. We refer to~\cite{pin_profinite}.

First we provide a construction of the profinite monoid $\prom{A}$. The idea is to enhance the set of all finite words by some \emph{virtual} elements representing sequences of finite words that are more and more similar.

Let $K_0,K_1,\ldots$ be a list of all regular languages of finite words. Let $X=2^\omega$. Each element $x\in X$ can be seen as a sequence of bits, the bit $x(n)$ indicates whether our \emph{virtual} word belongs to the language $K_n$.

Define $\fun{\mu}{A^\ast}{X}$ by the following equation:
\[\mu(\finA)_n\ =
\begin{cases} 1 & \text{if $w\in K_n$,}\\
0 &\text{if $w\notin K_n$.}
\end{cases} \]

The function $\mu$ defined above is an embedding of $A^\ast$ into $X$. Let $\prom{A}\subseteq X$ be the closure of $\mu(A^\ast)$ in $X$ with respect to the product topology of $X$. Therefore, $\prom{A}$ contains $\mu(A^\ast)$ and the limits of its elements. To simplify the notion we identify $w\in A^\ast$ with its image $\mu(w)\in\prom{A}$.

\begin{example}[Proposition~2.5 in~\cite{pin_profinite}]\label{ex:factorial}
Let $w_n=a^{n!}$ for $n\in\N$. A simple automata-theoretic argument shows that for every regular language $K$, either almost all words $(w_n)_{n\in \N}$ belong to $K$ or almost all do not belong to $K$. Therefore, the sequence $\left(\mu(w_n)\right)_{n\in\N}$ is convergent coordinate-wise in $X$. The limit of this sequence is an element of $\prom{A}\setminus \mu\left(A^\ast\right)$.
\end{example}

The following fact summarises basic properties of $\prom{A}$.

\begin{fact}[Proposition~2.1, Proposition~2.4, and Theorem~2.7 in~\cite{pin_profinite}]
$\prom{A}$ is a compact metric space. $A^\ast$ (formally $\mu\left(A^\ast\right)$) is a countable dense subset of $\prom{A}$. 
$\prom{A}$ has a structure of a monoid that extends the structure of $A^\ast$ and the concatenation is continuous.
\end{fact}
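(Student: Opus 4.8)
The plan is to verify the three assertions in turn; the first two are routine point-set topology built into the construction, and the third — the existence of a continuous monoid structure — is the substantive one.

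For the topological part, I would recall that $X=2^\omega$ with the product topology is the Cantor space, hence compact and metrizable, an explicit compatible metric being $d(x,y)=2^{-n}$ where $n$ is least with $x(n)\neq y(n)$ (and $d(x,x)=0$). Since $\prom{A}$ is by definition the closure of $\mu(A^\ast)$ in $X$, it is a closed subset of a compact metrizable space, hence itself compact and metrizable. For the density statement I would first observe that $\mu$ is injective: for distinct finite words $w\neq z$ the singleton $\{w\}$ is a regular language, so $\{w\}=K_n$ for some $n$, whence $\mu(w)_n=1\neq 0=\mu(z)_n$. Thus $\mu$ is an embedding and we may identify $A^\ast$ with $\mu(A^\ast)$; it is countable because $A$ is finite, and it is dense in $\prom{A}$ precisely because $\prom{A}$ is its closure.

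The heart of the argument is extending concatenation. The key lemma is that the concatenation on $A^\ast$, transported through $\mu$ to the partial map on $\mu(A^\ast)\times\mu(A^\ast)$ sending $(\mu(w),\mu(z))$ to $\mu(wz)$, is uniformly continuous. To see this, fix a bound $N$ and a regular language $K_n$ with $n\le N$. By Myhill--Nerode, $K_n$ admits a syntactic congruence $\equiv_n$ of finite index, and each $\equiv_n$-class is recognizable, hence regular, hence equal to $K_j$ for some $j$; let $B_N$ collect the finitely many indices $j$ arising this way as $n$ ranges over $\{0,\dots,N\}$. Pick $N'$ exceeding $N$ and all elements of $B_N$. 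If $\mu(w)$ and $\mu(\tilde w)$ agree on their first $N'$ coordinates, then $w\equiv_n\tilde w$ for all $n\le N$, and likewise for $z$ and $\tilde z$; since $\equiv_n$ is a congruence, $wz\equiv_n\tilde w\tilde z$, so $\mu(wz)$ and $\mu(\tilde w\tilde z)$ agree on coordinate $n$ for every $n\le N$. This is the required uniform continuity. Now a uniformly continuous map into the complete metric space $X$, defined on a dense subset of a metric space, extends uniquely to a continuous map on the closure; since the closure of $\mu(A^\ast)\times\mu(A^\ast)$ in $X\times X$ is $\prom{A}\times\prom{A}$, this yields a continuous map $\fun{\cdot}{\prom{A}\times\prom{A}}{X}$, whose image lies in $\prom{A}$ because $\prom{A}$ is closed and the extended product of two limits of finite words is a limit of products of finite words. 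Finally, associativity and the neutral-element laws are equalities between continuous functions that hold on the dense sets $A^\ast\times A^\ast\times A^\ast$ and $A^\ast$, hence hold identically; and by construction the new product restricts to concatenation on $A^\ast$.

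The main obstacle is exactly the uniform-continuity lemma: this is the step that genuinely uses the theory of regular languages, namely that recognizability equals regularity (so that the syntactic classes of each $K_n$ occur among the $K_j$) together with the finite-index property of the syntactic congruence. Everything after that — the extension of the product and the checking of the monoid axioms — is a standard density argument, and the compactness and metrizability claims are immediate from $\prom{A}$ being a closed subspace of the Cantor space.
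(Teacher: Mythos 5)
Your proof is correct. The paper itself gives no argument for this fact — it is imported wholesale from Pin's lecture notes — and what you have written is essentially the standard proof from that source, transported to the paper's concrete realisation of $\prom{A}$ as a closed subspace of $2^\omega$: compactness and metrizability come for free from Cantor space, injectivity of $\mu$ from the regularity of singletons, and the one substantive point — that concatenation is uniformly continuous on $\mu(A^\ast)\times\mu(A^\ast)$ because the syntactic congruence of each $K_n$ has finite index and regular classes, hence is controlled by finitely many coordinates — is exactly the lemma that makes the density-extension argument go through. No gaps.
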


It turns out that the operation assigning to every regular language of finite words $K\subseteq A^\ast$ its topological closure $\comp{K}\subseteq \prom{A}$ has good properties (see Theorem~\ref{th:iso}). Therefore, we introduce the following definition.

\begin{definition}
A \emph{profinite-regular} language is a subset of $\prom{A}$ of the form $\comp{K}$ for some regular language $K\subseteq A^\ast$.
\end{definition}

Using this definition, we can denote a generic profinite-regular language as $\comp{K}$ for $K$ ranging over regular languages. Using the definition of $\mu$ one can show the following easy fact.

\begin{fact}\label{ft:coordinates}
A language of profinite words $M\subseteq\prom{A}$ is profinite-regular if and only if it is of the form
\begin{equation}
M=\left\{x\in X:\ x\in\prom{A}\ \wedge\ x_n = 1\right\},
\label{eq:coords}
\end{equation}
for some $n\in\N$. In that case $M=\comp{K_n}$.
\end{fact}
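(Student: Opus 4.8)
The plan is to reduce both implications to a single core identity: for every $n\in\N$,
\[
\comp{K_n}\ =\ \{x\in\prom{A}:\ x_n = 1\}.
\]
Once this is established, the forward implication is immediate — any profinite-regular language is $\comp{K}$ for a regular $K$, and since $K_0,K_1,\ldots$ lists all regular languages we have $K=K_n$ for some $n$ — and the backward implication together with the ``in that case $M=\comp{K_n}$'' clause is exactly the core identity read in the other order.

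To prove $\comp{K_n}\subseteq\{x\in\prom{A}:x_n=1\}$, I would first note that the set $B_n:=\{x\in X:x_n=1\}$ is a basic cylinder set in the product space $X=2^\omega$, hence clopen, so $B_n\cap\prom{A}$ is clopen — in particular closed — in $\prom{A}$. Next, by the very definition of $\mu$ we have $\mu(w)_n=1$ precisely when $w\in K_n$, so $\mu(K_n)\subseteq B_n\cap\prom{A}$. Since $\comp{K_n}$ is by definition the closure of $\mu(K_n)$ inside $\prom{A}$, and $B_n\cap\prom{A}$ is a closed superset of $\mu(K_n)$, we get $\comp{K_n}\subseteq B_n\cap\prom{A}$.

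For the reverse inclusion, take $x\in\prom{A}$ with $x_n=1$. Because $\prom{A}$ is the closure of $\mu(A^\ast)$ in the compact metric space $X$, there is a sequence $w^{(k)}\in A^\ast$ with $\mu(w^{(k)})\to x$; convergence in the product topology is coordinatewise, so $\mu(w^{(k)})_n=1$, i.e. $w^{(k)}\in K_n$, for all sufficiently large $k$. Discarding the finitely many early terms leaves a sequence in $\mu(K_n)$ still converging to $x$, whence $x\in\comp{K_n}$. This gives $B_n\cap\prom{A}\subseteq\comp{K_n}$ and completes the identity.

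I do not anticipate a genuine obstacle here: the only points needing care are topological bookkeeping — that the coordinate sets of $X$ are clopen, that convergence in $X$ is coordinatewise, and that the closure of a subset of $\prom{A}$ taken inside the (closed) subspace $\prom{A}$ agrees with its closure taken in $X$ — all routine consequences of the construction of $\prom{A}$ recalled above. The statement is ``easy'' precisely because the coordinates of $X$ were designed to record membership in the enumerated regular languages.
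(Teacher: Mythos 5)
Your argument is correct and is exactly the intended one: the paper gives no proof of this fact, introducing it with ``Using the definition of $\mu$ one can show the following easy fact,'' and your core identity $\comp{K_n}=\{x\in\prom{A}:x_n=1\}$ is precisely that unpacking of the definition of $\mu$, the cylinder topology on $X=2^\omega$, and the closure operation. Both inclusions are handled properly (closedness of the cylinder for one direction, coordinatewise convergence and the eventual stabilisation of a $\{0,1\}$-valued coordinate for the other), so nothing is missing.
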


The structures of profinite-regular and regular languages are in some sense identical. This is expressed by the following fact.

\begin{theorem}[Theorem~2.4 in~\cite{pin_profinite}]\label{th:iso}
The function $K\mapsto \comp{K}\subseteq\prom{A}$ is an isomorphism of the Boolean algebra of regular languages and the Boolean algebra of profinite-regular languages. Its inverse is $M\mapsto \mu^{-1}(M)\subseteq{A^\ast}$ (when identifying $A^\ast$ with $\mu(A^\ast)$ we can write $M\mapsto M\cap A^\ast\subseteq A^\ast$).
\end{theorem}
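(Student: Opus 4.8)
The plan is to reduce every assertion to the behaviour of finitely many coordinates, exploiting Fact~\ref{ft:coordinates} together with the density of $\mu(A^\ast)$ in $\prom{A}$. First I would fix, for each regular language $K$, an index $n$ with $K=K_n$; Fact~\ref{ft:coordinates} then gives $\comp{K}=\{x\in\prom{A}:x_n=1\}$, and since the left-hand side is defined purely as a topological closure, this set does not depend on the particular index chosen. Directly from the definition of $\mu$, for $w\in A^\ast$ we have $\mu(w)\in\comp{K}$ iff $\mu(w)_n=1$ iff $w\in K_n=K$, so $\mu^{-1}(\comp{K})=K$. This single computation shows both that the claimed inverse $M\mapsto\mu^{-1}(M)=M\cap A^\ast$ recovers $K$ and that $K\mapsto\comp{K}$ is injective.

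The main work is the homomorphism property, and the uniform mechanism I would use for all Boolean operations is the following: any condition on $x\in X$ expressed as a finite Boolean combination of coordinates defines a clopen (hence closed) subset of $X=2^\omega$, and a closed set that contains the dense subset $\mu(A^\ast)$ must contain its closure $\prom{A}$. Concretely, given regular $K=K_i$ and $L=K_j$ and indices $m,m'$ with $K_m=K_i\cap K_j$ and $K_{m'}=A^\ast\setminus K_i$, the sets $\{x\in X:x_m=\min(x_i,x_j)\}$ and $\{x\in X:x_{m'}=1-x_i\}$ are clopen and, by the definition of $\mu$, contain $\mu(A^\ast)$; hence they contain all of $\prom{A}$. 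Intersecting these coordinate identities with $\prom{A}$ and invoking Fact~\ref{ft:coordinates} yields $\comp{K\cap L}=\comp{K}\cap\comp{L}$ and $\comp{A^\ast\setminus K}=\prom{A}\setminus\comp{K}$. The union case is even simpler, since the closure of a finite union always equals the union of the closures; equivalently it follows from the two identities above by De Morgan. Together with $\comp{A^\ast}=\prom{A}$ and $\comp{\emptyset}=\emptyset$, this shows that $K\mapsto\comp{K}$ preserves all Boolean operations.

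Finally I would assemble the pieces. Surjectivity onto the profinite-regular languages is immediate from their definition, and the homomorphism identities just established show that the profinite-regular languages are closed under union, intersection and complement in $\prom{A}$ and contain $\emptyset$ and $\prom{A}$; hence they genuinely form a Boolean algebra. The map $K\mapsto\comp{K}$ is therefore a surjective, injective, operation-preserving map between Boolean algebras, i.e. an isomorphism, and the first paragraph identifies its inverse as $M\mapsto M\cap A^\ast$.

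I expect the only real obstacle to be phrasing the clopen-transfer step carefully: one must verify that each condition truly depends on only finitely many coordinates into the discrete space $\{0,1\}$ (so that it is clopen), and that none of the resulting identities depends on the particular indices drawn from the enumeration $(K_n)_{n\in\N}$ — which is exactly the index-independence secured in the first step. Everything else is routine bookkeeping with set operations.
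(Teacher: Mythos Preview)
The paper does not prove this theorem: it is stated as Theorem~2.4 from~\cite{pin_profinite} and simply cited without argument. So there is no ``paper's own proof'' to compare against.

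Your argument is correct and self-contained, and in fact fits the paper's framework well since it relies only on Fact~\ref{ft:coordinates}, the definition of $\mu$, and the density of $\mu(A^\ast)$ in $\prom{A}$. The clopen-transfer device --- checking that a coordinate identity such as $x_m=\min(x_i,x_j)$ holds on $\mu(A^\ast)$ and then passing to the closure because the identity defines a closed subset of $2^\omega$ --- is exactly the right mechanism here, and it cleanly yields preservation of intersection and complement (with union following either topologically or by De~Morgan). Injectivity and the description of the inverse come out immediately from $\mu^{-1}(\comp{K})=K$, and surjectivity is by definition. The only cosmetic point: your remark that the closure $\comp{K}$ does not depend on the chosen index $n$ is true but needs no separate justification, since $\comp{K}$ is defined as the topological closure of $\mu(K)$, independently of any enumeration.
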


By the definition of $\prom{A}$ and fact that regular languages are closed under finite intersection, we obtain the following important fact.

\begin{fact}\label{ft:basis}
The family of regular languages of profinite words is a basis of the topology of $\prom{A}$.
\end{fact}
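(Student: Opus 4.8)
The plan is to work in the subspace topology that $\prom{A}$ inherits from $X=2^\omega$ with its product topology. A subbasis for the product topology on $X$ consists of the sets $\{x\in X:\ x_n=b\}$ for $n\in\N$ and $b\in\{0,1\}$, so a subbasis for the subspace topology on $\prom{A}$ is given by the sets $\{x\in\prom{A}:\ x_n=b\}$. It therefore suffices to show (i) that each such subbasic set is profinite-regular, and (ii) that profinite-regular languages are closed under finite intersection. Then the finite intersections of subbasic sets, which always form a basis of the topology, are all profinite-regular; and since by Fact~\ref{ft:coordinates} every profinite-regular language $\comp{K_n}$ equals the open set $\{x\in\prom{A}:\ x_n=1\}$, the whole family of profinite-regular languages is a basis.

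For (i) I would split on the bit $b$. When $b=1$, Fact~\ref{ft:coordinates} gives directly $\{x\in\prom{A}:\ x_n=1\}=\comp{K_n}$, which is profinite-regular by definition. When $b=0$, pick an index $m$ with $K_m=K_n^c$ — such an index exists because the enumeration $K_0,K_1,\ldots$ lists \emph{all} regular languages and regular languages are closed under complement. Since a profinite word $x\in\prom{A}$ satisfies $x_m=1$ precisely when it satisfies $x_n=0$, Fact~\ref{ft:coordinates} yields $\{x\in\prom{A}:\ x_n=0\}=\comp{K_m}$, again profinite-regular. For (ii), the point worth care is that topological closure does not commute with intersection in general, so one should not argue $\comp{K_1}\cap\comp{K_2}=\comp{K_1\cap K_2}$ by naive point-set topology; instead I would invoke Theorem~\ref{th:iso}, which says $K\mapsto\comp{K}$ is an isomorphism of Boolean algebras and hence preserves finite meets, so $\comp{K_1}\cap\cdots\cap\comp{K_\ell}=\comp{K_1\cap\cdots\cap K_\ell}$, with $K_1\cap\cdots\cap K_\ell$ regular because regular languages are closed under finite intersection.

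Putting these together finishes the argument: an arbitrary basic open subset of $\prom{A}$ is a finite intersection of subbasic sets $\{x\in\prom{A}:\ x_{n_i}=b_i\}$, each profinite-regular by (i), and the intersection is profinite-regular by (ii); conversely each profinite-regular language is open, being of the form $\{x\in\prom{A}:\ x_n=1\}$. Hence the profinite-regular languages form a basis of the topology of $\prom{A}$. The whole proof is routine; the only subtlety is the appeal to Theorem~\ref{th:iso} to carry Boolean operations through the closure operator, which is why the closure under finite intersection of regular languages is exactly the hypothesis the statement advertises.
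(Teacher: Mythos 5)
Your proof is correct and follows essentially the same route the paper intends: the paper dispenses with this fact in one sentence ("by the definition of $\prom{A}$ and the fact that regular languages are closed under finite intersection"), and your argument is exactly that observation spelled out — coordinate subbasic sets are profinite-regular via Fact~\ref{ft:coordinates} (using closure of regular languages under complement for the $b=0$ case), and finite intersections stay profinite-regular via the Boolean-algebra isomorphism of Theorem~\ref{th:iso}. No gaps; your caution about not naively commuting closure with intersection is well placed but resolved correctly.
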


The topology of $\prom{A}$ is the product topology. Therefore, a sequence of finite words $\seqA=\finA_0, \finA_1,\ldots$ is convergent to $\finA\in\prom{A}$ if and only if $\left(\mu(\finA_n)\right)_{n\in\N}\subseteq X$ is convergent coordinate-wise to $\finA$. The following fact formulates this condition in a more intuitive way.

\begin{fact}\label{ft:conv_seq}
A sequence of finite words $\seqA=\finA_0,\finA_1,\ldots$ is convergent to $\finA\in\prom{A}$ if and only if for every profinite-regular language $\comp{K}$ either:
\begin{itemize}
\item $\finA\in \comp{K}$ and almost all words $\finA_n$ belong to $K$,
\item $\finA\notin \comp{K}$ and almost all words $\finA_n$ do not belong to $K$.
\end{itemize}
\end{fact}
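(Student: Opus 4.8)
The plan is to unwind the definition of the product topology on $X = 2^\omega$ and combine it with Fact~\ref{ft:coordinates}. First I would recall that, since $\prom{A}$ carries the subspace topology inherited from $X$ and the topology of $X$ is the product topology, a sequence of finite words $\finA_0,\finA_1,\ldots$ converges to $\finA \in \prom{A}$ if and only if it converges coordinate-wise, i.e.\ for every $m \in \N$ we have $\mu(\finA_n)_m = \finA_m$ for all but finitely many $n$ (here I use that $\finA\in\prom{A}$ is given, so convergence in the ambient space is the same as convergence in $\prom{A}$).

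Next I would translate the coordinate condition into the language of profinite-regular languages. Fix $m \in \N$ and let $K = K_m$ be the $m$-th regular language in the enumeration. By definition of $\mu$ we have $\mu(\finA_n)_m = 1$ iff $\finA_n \in K_m$, and by Fact~\ref{ft:coordinates} we have $\finA \in \comp{K_m}$ iff $\finA_m = 1$. Hence the statement ``$\mu(\finA_n)_m = \finA_m$ for almost all $n$'' is literally the same as: if $\finA \in \comp{K_m}$ then almost all $\finA_n$ belong to $K_m$, and if $\finA \notin \comp{K_m}$ then almost all $\finA_n$ do not belong to $K_m$.

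For the ``only if'' direction I would then argue that convergence of $(\finA_n)$ to $\finA$ gives convergence at every coordinate $m$, so by the translation above the required dichotomy holds for every $K = K_m$; since every regular language of finite words occurs somewhere in the list $K_0,K_1,\ldots$, the dichotomy holds for every profinite-regular language $\comp{K}$. Conversely, if the dichotomy holds for every profinite-regular language then in particular it holds for each $\comp{K_m}$, which by the same translation gives coordinate-wise convergence at every $m$; as $\finA \in \prom{A}$, this means $(\finA_n)$ converges to $\finA$ in $\prom{A}$.

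There is essentially no hard step here: the whole argument is bookkeeping about the product topology, the definition of $\mu$, and the characterisation of profinite-regular languages from Fact~\ref{ft:coordinates}. The only point one must not overlook is that the quantification ``for every profinite-regular language $\comp{K}$'' ranges, via the enumeration $K_0,K_1,\ldots$, over exactly the coordinates of $X$, which is precisely what makes coordinate-wise convergence and the stated dichotomy the same condition.
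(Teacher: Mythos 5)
Your argument is correct and is exactly the route the paper takes: the paper states this fact without a formal proof, simply noting in the preceding sentence that convergence in $\prom{A}$ is coordinate-wise convergence in the product topology of $X$, and that the fact merely reformulates this via the definition of $\mu$ and Fact~\ref{ft:coordinates}. Your write-up just spells out the bookkeeping the paper leaves implicit, including the one point worth making explicit, namely that the enumeration $K_0,K_1,\ldots$ puts profinite-regular languages in bijection with the coordinates of $X$.
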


The topology of $\prom{A}$ is defined in such a way that it corresponds precisely to profinite-regular languages. The following fact summarises this correspondence.

\begin{fact}[Proposition~4.2 in~\cite{pin_profinite}]\label{ft:clopen}
A language $M\subseteq\prom{A}$ is profinite-regular if and only if it is a closed and open (clopen) subset of $\prom{A}$.
\end{fact}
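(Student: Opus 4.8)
The plan is to prove both implications, using compactness of $\prom{A}$ together with Facts~\ref{ft:coordinates} and~\ref{ft:basis} and the Boolean-algebra isomorphism of Theorem~\ref{th:iso}.

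For the direction \emph{profinite-regular $\Rightarrow$ clopen}: by Fact~\ref{ft:coordinates} a profinite-regular language $M$ has the form $M=\{x\in\prom{A}:x_n=1\}$ for some $n\in\N$. The set $\{x\in X:x_n=1\}$ is a basic clopen subset of the product space $X=2^\omega$, hence its intersection with the subspace $\prom{A}$ is clopen in $\prom{A}$. (Alternatively: $M=\comp{K_n}$ is closed by construction as a topological closure, and its complement is $\comp{K_n^c}$ by Theorem~\ref{th:iso}, which is likewise closed, so $M$ is open as well.)

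For the direction \emph{clopen $\Rightarrow$ profinite-regular}: let $M$ be clopen. Since $\prom{A}$ is compact and $M$ is closed, $M$ is compact. Since $M$ is open and the profinite-regular languages form a basis of the topology (Fact~\ref{ft:basis}), for every $x\in M$ there is a regular language $K_x$ with $x\in\comp{K_x}\subseteq M$. By compactness of $M$, finitely many of these, say $\comp{K_{x_1}},\ldots,\comp{K_{x_m}}$, already cover $M$, so $M=\comp{K_{x_1}}\cup\cdots\cup\comp{K_{x_m}}$. As topological closure distributes over finite unions, and by Theorem~\ref{th:iso}, this equals $\comp{K_{x_1}\cup\cdots\cup K_{x_m}}$; a finite union of regular languages is regular, so $M$ is profinite-regular.

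The only point requiring a little care is that "basis of the topology" really does supply, for each point of an open set, a basis element trapped between the point and the set, so that the covering pieces $\comp{K_{x_i}}$ genuinely lie inside $M$; combined with compactness this yields the finite subcover. After that the argument is elementary — distributivity of closure over finite unions and closure of the regular languages under Boolean operations — so there is no real obstacle here.
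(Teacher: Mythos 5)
Your proof is correct and follows essentially the same route as the paper: the forward direction via the coordinate description of Fact~\ref{ft:coordinates}, and the converse via covering the clopen set by basic profinite-regular languages and extracting a finite subcover by compactness. The extra details you supply (the alternative closure/complement argument and the remark about basis elements sitting inside the open set) are harmless elaborations of the same argument.
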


\begin{proof}
First assume that $M=\comp{K}$ is a regular language of profinite words. Equation \eqref{eq:coords} in Fact~\ref{ft:coordinates} defines a closed and open set.

Now assume that $M$ is a closed and open subset of $\prom{A}$. Recall that profinite-regular languages form a basis for the topology of $\prom{A}$ (Fact~\ref{ft:basis}). Since $M$ is open so it is a union of base sets $\bigcup_{j\in J}\ \comp{K_j}$. Since $M$ is a closed subset of a compact space $\prom{A}$, $M$ is compact. Therefore, only finitely many languages among $\left\{\comp{K_j}\right\}_{j\in J}$ form a cover of $M$. But a finite union of profinite-regular languages is a profinite-regular language. Therefore, $M$ is profinite-regular.
\end{proof}

\subsection{Ramsey-type arguments}

In this subsection we recall Ramsey's theorem and show its application to finite monoids. This technique was used by B\"uchi in his complementation lemma~\cite{buchi_decision}. Additionally, we recall some extensions of Ramsey's theorem to compact spaces. In the following, by $[\N]^2$ we denote the set of all unordered pairs of natural numbers.

\begin{theorem}[Ramsey]\label{th:ramsey}
Assume that $\fun{\alpha}{[\N]^2}{C}$ is a function that assigns to every pair of numbers $\{n,m\}\in[\N]^2$ a \emph{colour} $\alpha(\{n,m\})\in C$. Additionally, assume that the set of colours $C$ is finite. Then there exists an infinite \emph{monochromatic} set $S\subseteq\N$: a set $S$ such that for some colour $c\in C$ and every pair of numbers $\{n,m\}\subset S$ we have
\[\alpha(\{n,m\})=c.\]
\end{theorem}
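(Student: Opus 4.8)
The plan is to prove the theorem by the standard iterative pigeonhole argument: build a single infinite set together with a suitable nested family of infinite sets that records, step by step, the colour of pairs involving one fixed new element.

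First I would construct, by induction on $i$, a decreasing sequence of infinite sets $\N = A_0 \supseteq A_1 \supseteq A_2 \supseteq \cdots$, a strictly increasing sequence of natural numbers $n_0 < n_1 < n_2 < \cdots$, and a sequence of colours $c_0, c_1, \ldots \in C$, as follows. Given an infinite set $A_i$, let $n_i$ be its least element. Partition the infinite set $A_i \setminus \{n_i\}$ into at most $|C|$ blocks according to the value $\alpha(\{n_i, m\})$; since $C$ is finite, at least one block is infinite. Let $A_{i+1}$ be such a block and let $c_i \in C$ be the common colour of all pairs $\{n_i, m\}$ with $m \in A_{i+1}$. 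By construction the sets $A_i$ are nested, $n_i < n_{i+1}$, and $\alpha(\{n_i, m\}) = c_i$ for every $m \in A_{i+1}$; in particular $\alpha(\{n_i, n_j\}) = c_i$ whenever $j > i$, since then $n_j \in A_j \subseteq A_{i+1}$.

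Next I would apply the pigeonhole principle a second time, this time to the infinite sequence of colours $(c_i)_{i \in \N}$: as $C$ is finite, there is a colour $c \in C$ and an infinite set of indices $I \subseteq \N$ with $c_i = c$ for all $i \in I$. I would then set $S = \{n_i : i \in I\}$, which is infinite because $i \mapsto n_i$ is injective. For any pair $\{n_i, n_j\} \subseteq S$ with $i < j$, both $i, j \in I$, and the observation from the previous paragraph yields $\alpha(\{n_i, n_j\}) = c_i = c$. Hence $S$ is monochromatic with colour $c$, which proves the theorem.

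The argument uses nothing beyond finiteness of $C$ and the pigeonhole principle applied countably many times, so there is no genuine obstacle; the only point that needs care is the bookkeeping in the induction, namely that each later witness $n_j$ must land in the set $A_{i+1}$ attached to every earlier witness $n_i$, which is exactly what the nesting $A_{i+1} \supseteq A_{i+2} \supseteq \cdots$ guarantees.
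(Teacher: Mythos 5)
Your argument is the standard and correct proof of the infinite Ramsey theorem for pairs: the iterated pigeonhole construction of the nested sets $A_i$ and witnesses $n_i$, followed by a second application of pigeonhole to the colours $c_i$, is exactly the classical textbook argument, and every step you give checks out (in particular $n_j \in A_j \subseteq A_{i+1}$ does guarantee $\alpha(\{n_i,n_j\}) = c_i$ for $j > i$). The paper itself states this theorem as a known classical result and gives no proof, so there is nothing to compare against; your proof fills that in correctly and there is no gap.
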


The following theorem shows an application of Ramsey's theorem to the $\omega$-word case.

\begin{theorem}\label{th:ramsey_decomposition}
Let $M$ be a finite monoid and $\fun{h}{A^\ast}{M}$ be a homomorphism. Then for every $\omega$-word $\infA\in A^\omega$ there exists a sequence of finite words $\finA_0,\finA_1,\finA_2,\ldots$ and two elements $s,e$ of the monoid $M$ such that:
\begin{enumerate}[label=(\roman*)]
\item $\infA=\finA_0 \finA_1 \finA_2\ldots$,
\item $h(\finA_0) = s$,
\item $h(\finA_n) = e$ for every $n > 0$,
\item $s\cdot e = s$ and $ e\cdot e = e$.
\end{enumerate}
\end{theorem}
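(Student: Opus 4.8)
The plan is to apply Ramsey's theorem to a colouring of $[\N]^2$ induced by the homomorphism $h$, and then refine the resulting monochromatic decomposition so that the first block also has a stable value. First I would fix the $\omega$-word $\infA = a_0 a_1 a_2 \ldots$ with $a_i \in A$ and, for each natural number $n$, write $\infA[0,n)$ for the prefix $a_0 a_1 \cdots a_{n-1}$. I define a colouring $\fun{\alpha}{[\N]^2}{M}$ by setting $\alpha(\{n,m\}) = h(a_n a_{n+1} \cdots a_{m-1})$ for $n < m$, i.e. the image under $h$ of the infix strictly between positions $n$ and $m$. Since $M$ is finite, Ramsey's theorem (Theorem~\ref{th:ramsey}) yields an infinite monochromatic set $S = \{k_0 < k_1 < k_2 < \cdots\} \subseteq \N$ and a colour $e \in M$ such that $h(a_{k_i} \cdots a_{k_j - 1}) = e$ for all $i < j$.

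Next I would verify that this $e$ is idempotent: taking any three indices $k_i < k_j < k_l$ from $S$, the infix from $k_i$ to $k_l$ factors as the concatenation of the infix from $k_i$ to $k_j$ and the infix from $k_j$ to $k_l$, so $e = h(\text{infix } k_i \text{ to } k_l) = h(\text{infix } k_i \text{ to } k_j)\cdot h(\text{infix } k_j \text{ to } k_l) = e \cdot e$, as $h$ is a homomorphism. Then I set $\finA_0 = a_0 a_1 \cdots a_{k_0 - 1}$ (the prefix up to the first element of $S$) and, for $n \geq 1$, $\finA_n = a_{k_{n-1}} a_{k_{n-1}+1} \cdots a_{k_n - 1}$ (the infix between consecutive elements of $S$). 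Property~(i), $\infA = \finA_0 \finA_1 \finA_2 \cdots$, is immediate from the construction, and property~(iii), $h(\finA_n) = e$ for $n > 0$, holds by monochromaticity. For $s$ I simply put $s = h(\finA_0)$, which gives property~(ii).

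It remains to establish $s \cdot e = s$ in property~(iv). This is where a small extra argument is needed, because a priori $h(\finA_0)\cdot e$ need not equal $h(\finA_0)$ for an arbitrary choice of $S$; but we may always pass to an infinite subset. Concretely, the map $m \mapsto h(\finA_0 \cdot a_{k_0}\cdots a_{k_m - 1}) = s \cdot e^{m}$ — wait, more carefully: consider for each $i \geq 1$ the prefix value $h(a_0 \cdots a_{k_i - 1}) = s \cdot e$ (since the infix from $k_0$ to $k_i$ has value $e$, hmm this already gives it). Let me restate the clean version: by monochromaticity, for every $i \geq 1$ we have $h(a_0 \cdots a_{k_i-1}) = h(\finA_0)\cdot h(a_{k_0}\cdots a_{k_i-1}) = s \cdot e$; and also $h(a_0\cdots a_{k_0-1}) \cdot h(a_{k_0}\cdots a_{k_1-1})\cdot h(a_{k_1}\cdots a_{k_i-1}) = s \cdot e \cdot e = s \cdot e$, consistent but not yet $s\cdot e = s$. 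So the genuine fix is to replace $S$ by $S' = \{k_1 < k_2 < \cdots\}$, absorbing the first block $a_{k_0}\cdots a_{k_1-1}$ into $\finA_0$: with $\finA_0' = a_0 \cdots a_{k_1 - 1}$ we get $s' := h(\finA_0') = s \cdot e$, and then $s' \cdot e = s \cdot e \cdot e = s \cdot e = s'$, so property~(iv) holds with $s'$ in place of $s$. Relabelling $S'$ as $S$ and $s'$ as $s$ completes the proof.

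I expect the main obstacle to be precisely this last normalisation step: one must notice that raw monochromaticity gives $e \cdot e = e$ for free but does not automatically give $s \cdot e = s$, and that the remedy is the standard trick of shifting one block from the infinite monochromatic sequence into the prefix. Everything else — setting up the colouring, invoking Ramsey, checking that $h$ being a homomorphism propagates the colour through concatenation — is routine bookkeeping with finite words and the monoid operation.
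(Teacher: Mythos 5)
Your proof is correct: the colouring $\alpha(\{n,m\})=h(a_n\cdots a_{m-1})$, the extraction of an idempotent colour $e$ via Ramsey, and the absorption of one block into the prefix to force $s\cdot e=s$ constitute exactly the standard linked-pair argument. The paper itself states this theorem without proof (deferring to the classical literature on linked pairs), and your argument is precisely the intended one, with the only genuinely non-routine step --- the normalisation $s\mapsto s\cdot e$ --- handled correctly.
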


A pair $(s,e)$ satisfying the above constraints is often called a \emph{linked pair}, see~\cite{perrin_pin_words}. To simplify the properties in the above theorem we introduce the following definition.

\begin{definition}
For a given homomorphism $\fun{h}{A^\ast}{M}$ we say that the \emph{type} (or \emph{$h$-type}) of a decomposition $\infA=\finA_0\finA_1\ldots$ is $t= (s,e)$ if: $s\cdot e = s$, $e\cdot e=e$, $h(\finA_0)=s$, and $h(\finA_n)=e$ for all $n>0$.
\end{definition}

Using the above definition we can restate Theorem~\ref{th:ramsey_decomposition} as: for every $\omega$-word $\infA$ and homomorphism $h$ there exists some decomposition of $\infA$ of some type $t=(s,e)$. A priori there may be two decompositions of one $\omega$-word of two distinct types.

There is an extension of finite-colour Ramsey's theorem to the case where colours form a compact metric space. To state it formally we use the following definitions.

\begin{definition}
Assume that $\seqA = \finA_0, \finA_1,\ldots$ is a sequence of finite words. We say that $\seqB = \finB_0,\finB_1,\ldots$ is a \emph {grouping} of $\seqA$ if there exists an increasing sequence of numbers $0=i_0<i_1<\ldots$ such that for every $n\in\N$ we have
\[\finB_n=\finA_{i_n}\finA_{i_n+1}\ldots\finA_{i_{n+1}-1}.\]
\end{definition}

Note that if $W$ is a decomposition of an $\omega$-word $\infA$ and $W$ is of $h$-type $t=(s,e)$ then every grouping of $W$ is also a decomposition of $\infA$ of $h$-type $t$. The notion of grouping introduces a stronger version of convergence.

\begin{definition}
We say that a sequence of finite words $\seqA=\finA_0,\finA_1,\ldots$ is \emph{strongly convergent} to a profinite word $\finA$ if every grouping of $\seqA$ is convergent to $\finA$.
\end{definition}

The following result can be seen as a simple extension of the Ramsey theorem to the case of the profinite monoid.

\begin{theorem}[Boja{\'n}czyk, Kopczy{\'n}ski, Toru{\'n}czyk~\cite{bojanczyk_ramsey_compact}]\label{th:ramsey_compact}
Let $\seqA=\finA_0,\finA_1,\ldots$ be an infinite sequence of finite words. There exists a grouping $\seqB$ of $\seqA$ such that $\seqB$ strongly converges in $\prom{A}$.
\end{theorem}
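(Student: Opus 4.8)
The plan is to run Ramsey's theorem ``colour by colour'' inside the compact metric space $\prom A$ and then extract a diagonal subsequence whose induced grouping cannot be spoiled by any further regrouping. First I would attach to the sequence $\seqA=\finA_0,\finA_1,\ldots$ the colouring $\fun{\alpha}{[\N]^2}{\prom A}$ given, for $i<j$, by $\alpha(\{i,j\})=\mu(\finA_i\finA_{i+1}\cdots\finA_{j-1})$: the profinite word obtained by concatenating the block of $\seqA$ strictly between positions $i$ and $j$. The reason for colouring \emph{pairs} is that the blocks occurring in an arbitrary grouping of $\seqA$ are precisely the values of $\alpha$, and that concatenating consecutive such blocks is again a value of $\alpha$.

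Next, colour-wise Ramsey and diagonalisation. Since $\prom A$ is a compact metric space, for each $k\in\N$ it is covered by finitely many balls of radius $\tfrac{1}{2k}$; colouring a pair by (the index of) a ball containing its $\alpha$-value refines $\alpha$ to a finite colouring, so Theorem~\ref{th:ramsey} produces an infinite set on which the $\alpha$-image has diameter at most $\tfrac1k$. Performing this inside the previously obtained set, I would build a decreasing chain $S_1\supseteq S_2\supseteq\cdots$ of infinite sets with $S_k$ being $\tfrac1k$-homogeneous in this sense, and then pick a strictly increasing sequence $s_0<s_1<\cdots$ with $s_k\in S_k$; put $S=\{s_0,s_1,\ldots\}$. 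Because $s_m\in S_k$ whenever $m\ge k$, any two pairs taken from $\{s_k,s_{k+1},\ldots\}$ have $\alpha$-values within $\tfrac1k$ of one another, so $\bigl(\alpha(\{s_k,s_{k+1}\})\bigr)_k$ is Cauchy and converges to some $\finA\in\prom A$; moreover $\alpha(\{a,b\})\to\finA$ whenever $a,b$ range over $S$ with $\min(a,b)\to\infty$.

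Now the grouping and strong convergence. Let $\seqB$ be the grouping of $\seqA$ whose cut points are $0$ together with all elements of $S$, so that every block of $\seqB$ except possibly the first has the shape $\finA_{s_m}\cdots\finA_{s_{m+1}-1}=\alpha(\{s_m,s_{m+1}\})$. Take any grouping $\seqC$ of $\seqB$. Since the cut points of $\seqC$ form a subset of the cut points of $\seqB$, past its first block every block of $\seqC$ has the form $\finA_a\cdots\finA_{b-1}=\alpha(\{a,b\})$ with $a,b\in S$, and the left endpoints tend to infinity. By the previous paragraph these blocks converge in $\prom A$ to $\finA$, hence $\seqC$ converges to $\finA$. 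As $\seqC$ was arbitrary, $\seqB$ strongly converges to $\finA$, which is what we want.

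The one genuinely delicate point — and the reason for colouring pairs and for the diagonal construction rather than a single application of Theorem~\ref{th:ramsey} to one fixed finite quotient of $A^\ast$ — is that strong convergence quantifies over \emph{all} groupings of the chosen $\seqB$, not merely over $\seqB$ itself. Making the homogeneity uniform over all pairs inside each $S_k$, together with the observation that regrouping $\seqB$ only merges blocks along cut points already lying in $S$, is exactly what forces one and the same limit $\finA$ to work for every grouping. The remaining points — that $\mu$ of a concatenation of consecutive blocks is read off by $\alpha$, and that the metric convergence in $\prom A$ is the convergence described in Fact~\ref{ft:conv_seq} — are routine.
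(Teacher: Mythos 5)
Your proof is correct, and it follows the same overall architecture as the paper's --- Ramsey's theorem applied to a colouring of pairs of cut points, followed by a diagonalisation over a countable family --- but it instantiates that architecture differently. The paper colours each pair $\{i,j\}$ by the single bit ``does the block $\finA_i\cdots\finA_{j-1}$ belong to $K$?'', runs one Ramsey extraction per regular language $K_0,K_1,\ldots$, and diagonalises by taking the first word of each successive regrouping; convergence of every grouping is then read off from Fact~\ref{ft:conv_seq}, i.e.\ directly from the clopen basis of profinite--regular languages. You instead colour pairs by the approximate location of $\mu(\finA_i\cdots\finA_{j-1})$ in a finite cover of $\prom{A}$ by balls of radius $\tfrac{1}{2k}$, obtain nested infinite sets $S_1\supseteq S_2\supseteq\cdots$ of cut points, and diagonalise to a single set $S$; the limit then comes from completeness of the compact metric space via a Cauchy argument. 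What your version buys is generality: it never mentions regular languages and proves verbatim the statement the paper only remarks on, namely that the theorem holds with $\prom{A}$ replaced by an arbitrary compact metric space (or monoid). What the paper's version buys is that the homogeneity statement $(\ast)$ is phrased directly in terms of membership in each $K_i$, which meshes with how convergence in $\prom{A}$ is used everywhere else in the text. Two small points you handled correctly but should keep explicit if you write this up: Ramsey's theorem as stated requires finitely many colours, so the passage from the $\prom{A}$-valued $\alpha$ to the ball-index colouring is needed at every level $k$; and since a grouping of $\seqB$ must start at position $0$, only its first block can fail to be of the form $\alpha(\{a,b\})$ with $a,b\in S$, which is harmless because convergence is a tail property.
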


For the sake of completeness we give a proof of this fact below. The theorem holds in general, where instead of $\prom{A}$ is any compact metric monoid. Also, the notion of convergence can be strengthened in the thesis of the theorem: all the groupings of $W$ converge in a \emph{uniform way}. In this paper we use only the above, simplified form.

\begin{proof}
Let $K$ be a regular language and $\seqB=\finB_0,\finB_1,\ldots$ be a sequence of finite words. Define a function $\fun{\alpha_{K,\seqB}}{[\N]^2}{\{0,1\}}$ that takes a pair of numbers $i<j$ and returns $1$ if and only if $\finB_i\finB_{i+1}\ldots \finB_{j-1}$ belongs to $K$. By Theorem~\ref{th:ramsey}, there exists a monochromatic set $S\subseteq \N$ with colour $c\in\{0,1\}$ such that for every pair $i<j\in S$ we have $\alpha_{K,\seqB}(\{i,j\})=c$.

Now, take a sequence of finite words $\seqA$. Let $K_0,K_1,\ldots$ be an enumeration of all regular languages and let $\seqA^0=\seqA$. We proceed by induction for $i=0,1,\ldots$. Assume that $\seqA^i=\finA^i_0,\finA^i_1,\ldots$ is defined. First define $\finB_i$ as $\finA^i_0$. Now, let $S=\{n_0,n_1,\ldots\}$ be an infinite monochromatic set with respect to $\alpha_{K_i,\seqA^i}$. Define 
\[\seqA^{i+1}= \left(\finA^i_{n_0}\finA^i_{n_0+1}\ldots\finA^i_{n_1-1}\right),  \left(\finA^i_{n_1}\finA^i_{n_1+1}\ldots\finA^i_{n_2-1}\right),  \left(\finA^i_{n_2}\finA^i_{n_2+1}\ldots\finA^i_{n_3-1}\right), \ldots\]

Note that $\seqA^{i+1}$ is a suffix of a grouping of $\seqA^{i}$. Since $S$ is monochromatic and by the definition of $\alpha_{K,\seqB}$, we know that:\\
$(\ast)$ For every grouping of $\seqA^{i+1}$ either all words in the grouping belong to $K_i$ or all of them do not belong.

We claim that our sequence $\seqB=\finB_0,\finB_1,\ldots$ is strongly convergent. Let $\seqC$ be a grouping of $\seqB$ and let $K=K_i$ be a regular language. Observe that almost all words in $\seqC$ (all except first at most $i$ words) are obtained by grouping words in $\seqA^{i+1}$. Therefore, by $(\ast)$, either almost all words of $\seqC$ belong to $K$ or almost all of them do not belong to $K$. Fact~\ref{ft:conv_seq} implies that $\seqC$ is convergent in $\prom{A}$.

Now observe that almost all words in $\seqC$ belong to $K_i$ if and only if almost all words in $\seqB$ belong to $K_i$. Therefore, the limit of $\seqC$ does not depend on the choice of $\seqC$. It means that $\seqB$ is strongly convergent in $\prom{A}$.
\end{proof}

\subsection{Notation}

In this paper we deal with three types of languages: of finite words, of profinite words, and of $\omega$-words. To simplify reading of the paper, we use the following conventions:
\begin{itemize}
\item finite and profinite words are denoted by $\finA,\finB$,
\item sequences of finite words are denoted by $\seqA,\seqB,\seqC$,
\item $\omega$-words are denoted by $\infA,\infB$,
\item regular languages of finite words are denoted by $K$,
\item profinite-regular languages are, using Theorem~\ref{th:iso}, denoted by $\comp{K}$,
\item general languages of profinite words are denoted by $M$,
\item languages of $\omega$-words (both $\omega$-regular and not) are denoted by $L$.
\end{itemize}


\section{Automata}\label{s:automata}

In this section we provide definitions of four kinds of automata: $\omega$-word models \wB- and \wS-automata and their finite word variants \fB- and \fS-automata. 

The \wB- and \wS-automata models were introduced in~\cite{bojanczyk_bounds}, we follow the definitions from this work. The \fB- and \fS-automata models were defined in~\cite{colcombet_stabilisation}. For the sake of simplicity, we use only the operations $\{\ceps,\cinc,\cres\}$ (without the \emph{check} operation). As noted in Remark~1 in~\cite{colcombet_stabilisation}, this restriction does not influence the expressive power. 

The four automata models we study here are part of a more general theory of regular cost functions that is developed by Colcombet~\cite{colcombet_stabilisation, colcombet_hab}. In particular, the theory of \fB- and \fS-automata has been extended to finite trees in~\cite{colcombet_cost_trees}.

All four automata models we deal with are built on the basis of a \emph{counter automaton}. The difference is the acceptance condition that we introduce later.

\begin{definition}
A \emph{counter automaton} is a tuple $\aut{A}=\left<A, Q, I, \Gamma, \delta\right>$, where:
\begin{itemize}
\item $A$ is an input alphabet,
\item $Q$ is a finite set of states,
\item $I\subseteq Q$ is a set of initial states,
\item $\Gamma$ is a finite set of counters,
\item $\delta\subseteq Q\times A\times \left\{\ceps,\cinc,\cres\right\}^\Gamma\times Q$ is a transition relation.
\end{itemize}
\end{definition}

All counters store natural numbers and cannot be read during a run. The values of the counters are only used in an acceptance condition.

In the initial configuration all counters equal $0$. A transition $(p,a,\cop,q)\in \delta$ (sometimes denoted $p\trans{a,\cop}q$) means that if the automaton is in a state $p$ and reads a letter $a$ then it can perform counter operations $\cop$ and go to the state $q$. For a counter $c\in\Gamma$ a counter operation $\cop(c)$ can:

\vspace{0.2cm}
\begin{tabular}{ l | l }
$\cop(c)=\ceps$ & leave the counter value unchanged, \\
\hline
$\cop(c)=\cinc$ & increment the counter value by one, \\
\hline
$\cop(c)=\cres$ & reset the counter value to $0$.
\end{tabular}
\vspace{0.5cm}

A run $\rho$ of the automaton $\aut{A}$ over a word (finite or infinite) is a sequence of transitions as for standard nondeterministic automata. Given a run $\rho$, a counter $c\in\Gamma$, and a position $r_c$ of a word where the counter $c$ is reset, we define $\val{c, \rho, r_c}$ as the value stored in the counter $c$ at the moment before the reset $r_c$ in $\rho$.

To simplify the constructions we allow $\epsilon$-transitions in our automata. The only requirement is that there is no cycle consisting of $\epsilon$-transitions only. $\epsilon$-transitions can be removed using nondeterminism of an automaton and by combining a sequence of counter operations into one operation. Such a modification may change the exact values of counters, for instance when we replace $\cinc,\cres$ by $\cres$. However, the limitary properties of the counters are preserved (the values may be disturbed only by a linear factor).

\subsection{\wB- and \wS-automata}

First we deal with automata for $\omega$-words, following the definitions in~\cite{bojanczyk_bounds}. An \emph{\wT-automaton} (for $\wT\in\{\wB,\wS\}$) is just a counter automaton. A run $\rho$ of an \wT-automaton over an $\omega$-word $\infA$ is accepting if it starts in an initial state in $I$, every counter is reset infinitely many times, and the following condition is satisfied:
\begin{description}
\item[\wB-automaton] the values of all counters are bounded during the run,
\item[\wS-automaton] for every counter $c$ the values of $c$ during subsequent resets in $\rho$ tend to infinity (i.e. the limit of the values of $c$ is $\infty$).
\end{description}

\noindent An \wT-automaton $\aut{A}$ accepts an $\omega$-word if it has an accepting run on it. The set of all $\omega$-words accepted by $\aut{A}$ is denoted $\lang(\aut{A})$.

\begin{example}
\begin{figure}
\begin{tikzpicture}[shorten >=1pt,node distance=1.5cm,on grid,auto]
\node[state, initial] (qI) at (-3,0) {$q_I$};
\node[state] (qM) at (0,0) {$q_M$};

\path[->] 
    (qI) edge [loop above] node {$a, \ceps$} (qI)
    (qI) edge [loop below] node {$b, \ceps$} (qI)
    (qI) edge [bend left] node {$b, \ceps$} (qM)
    (qM) edge [loop above] node {$a, \cinc$} (qM)
    (qM) edge [bend left] node {$b, \cres$} (qI);
\end{tikzpicture}
\caption{An example of an \wB-automaton $\aut{A}_{\wB}$.}
\label{fig:wB}
\end{figure}

Consider the \wB-automaton $\aut{A}_{\wB}$ depicted on Figure~\ref{fig:wB}. $\aut{A}_{\wB}$ guesses (by moving to the state $q_M$) to measure the length of some blocks of letters $a$. It accepts an $\omega$-word $\infA$ if and only if it is of the form
\[\infA=a^{n_0}b a^{n_1} b\ldots\quad\text{with}\quad\liminf_{i\to\infty} n_i <\infty.\]

We can also treat $\aut{A}_\wB$ as an \wS-automaton. In that case the language recognised by $\aut{A}_{\wB}$ is
\[\{\infA\in\{a,b\}^\omega:\ \infA=a^{n_0}b a^{n_1} b\ldots\quad\text{and}\quad\limsup_{i\to\infty} n_i =\infty\}.\]
\end{example}

It is easy to check that a nondeterministic B\"uchi automaton can be transformed into an equivalent \wB- (resp. \wS)-automaton. Therefore, all $\omega$-regular languages are both \wB- and \wS-regular.

\subsection{\fB- and \fS-automata} In the finite word models the situation is a little more complicated than in the \wB- and \wS-automata models. The automaton not only accepts or rejects a given word but also it assigns a \emph{value} to a word.

Formally, a \emph{\fT-automaton} (for $\fT\in\{\fB,\fS\}$) is a counter automaton that is additionally equipped with a set of final states $F\subseteq Q$. An accepting run $\rho$ of an automaton over a finite word $\finA$ is a sequence of transitions starting in some initial state in $I$ and ending in some final state in $F$.

The following equations define $\val{\aut{A}, \finA}$ --- the value assigned to a given finite word by a given automaton. We use the convention that if a set of values is empty then the minimum of this set is $\infty$ and the maximum is $0$. The variable $\rho$ ranges over all accepting runs, $c$ ranges over counters in $\Gamma$, while $r_c$ ranges over positions where the counter $c$ is reset in $\rho$. As noted at the beginning of this section, we do not allow explicit \emph{check} operation, we only care about the values of the counters before resets.

\begin{description}
\item[\fB-automaton $\aut{A}_\fB$] \[\val{\aut{A}_\fB, \finA} = \min_{\rho}\val{\rho}\quad\text{and}\quad\val{\rho}=\max_{c}\max_{r_c}\ \val{c, \rho, r_c},\]
\item[\fS-automaton $\aut{A}_\fS$] \[\val{\aut{A}_\fS, \finA} = \max_{\rho}\val{\rho}\quad\text{and}\quad\val{\rho}=\min_{c}\min_{r_c}\ \val{c, \rho, r_c}.\]
\end{description}

The following simple observation is crucial in the subsequent definitions.

\begin{lemma}\label{lm:val_regular}
For a given number $n$, a \fB-automaton $\aut{A}_\fB$, and an \fS-automaton $\aut{A}_\fS$ the following languages of finite words are regular:
\begin{eqnarray*}
\lang(\aut{A}_\fB\leq n)&=&\left\{w:\val{\aut{A}_\fB, w}\leq n \right\},\\
\lang(\aut{A}_\fS> n)&=&\left\{w:\val{\aut{A}_\fS, w} > n \right\}.
\end{eqnarray*}
\end{lemma}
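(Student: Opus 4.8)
The plan is to bound the amount of information the counters can meaningfully carry and then encode that information into finite memory. For the \fB-automaton and a fixed threshold $n$, the key observation is that if a run never exceeds the value $n$ on any counter before a reset, then each counter takes values in the finite set $\{0,1,\ldots,n\}$ and — crucially — we only need to track, along an accepting run, whether each counter's current value is still at most $n$. So I would form a product automaton whose states are pairs $(q,f)$ where $q\in Q$ is a state of $\aut{A}_\fB$ and $f\colon\Gamma\to\{0,1,\ldots,n,\top\}$ records the current value of each counter, with $\top$ a sink value meaning ``this counter has exceeded $n$''. The transition relation of this product simulates $\delta$: applying $\ceps$ leaves $f(c)$ unchanged, $\cinc$ maps $k\mapsto k+1$ for $k<n$ and $k\mapsto\top$ for $k\geq n$ (and $\top\mapsto\top$), and $\cres$ maps everything back to $0$. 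The initial states are $(i,\mathbf 0)$ for $i\in I$, and the accepting states are $(q,f)$ with $q\in F$ and $f(c)\neq\top$ for every $c$. A word $w$ is accepted by this finite automaton iff $\aut{A}_\fB$ has an accepting run on $w$ all of whose counter values before resets are at most $n$, which by the definition $\val{\aut{A}_\fB,w}=\min_\rho\max_c\max_{r_c}\val{c,\rho,r_c}$ is precisely the condition $\val{\aut{A}_\fB,w}\leq n$. Hence $\lang(\aut{A}_\fB\leq n)$ is regular.

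For the \fS-automaton I would do the dual construction, but here I must be a little more careful because the value of a run is a $\min$ over counters and over reset positions, and we want this to be strictly greater than $n$. A run $\rho$ satisfies $\val{\rho}>n$ iff for every counter $c$ and every reset position $r_c$ in $\rho$ we have $\val{c,\rho,r_c}\geq n+1$; note this includes the requirement that a reset of $c$ that happens while $c$ holds a value $\leq n$ immediately disqualifies the run. So I track for each counter its current value in $\{0,1,\ldots,n+1\}$, capping at $n+1$ (once a counter reaches $n+1$ we no longer need to distinguish larger values, since any subsequent reset is ``safe''). A $\cres$ operation on a counter currently holding value $k\leq n$ sends the whole run to a rejecting sink; a $\cres$ on a counter holding $n+1$ is allowed and sends that counter back to $0$. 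The accepting states are $(q,f)$ with $q\in F$. Then the finite automaton accepts $w$ iff $\aut{A}_\fS$ has \emph{some} accepting run with all counter-values-before-resets at least $n+1$, i.e.\ $\max_\rho\val{\rho}>n$, i.e.\ $\val{\aut{A}_\fS,w}>n$. So $\lang(\aut{A}_\fS>n)$ is regular.

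In both cases one should also note the boundary convention: if there are no accepting runs, the empty $\min$ is $\infty$ and the empty $\max$ is $0$, so a word with no accepting run is in $\lang(\aut{A}_\fB\leq n)$ vacuously and not in $\lang(\aut{A}_\fS>n)$. This matches the constructed automata, since absence of any accepting run of $\aut{A}_\fB$ (in particular any $n$-bounded one) is the same as the product automaton having no accepting run, giving the empty language's membership the right polarity — wait, rather: the product automaton described for the $\fB$ case does reject such $w$, so we must take its language to be $\{w:\val{\aut{A}_\fB,w}\leq n\}$ with the understanding that no accepting run means value $\infty>n$, hence $w\notin\lang(\aut{A}_\fB\leq n)$, consistent with the product automaton rejecting. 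I'd double-check the excerpt's convention here; the statement as written treats $\lang(\aut{A}_\fB\leq n)$ literally as $\{w:\val{\aut{A}_\fB,w}\leq n\}$, and the product automaton computes exactly that.

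The $\epsilon$-transition handling is a routine wrinkle: since there are no $\epsilon$-cycles, each $\epsilon$-transition still carries a counter operation, so the product construction applies verbatim (the new automaton also has no $\epsilon$-cycles, and $\epsilon$-transitions in a finite automaton can be eliminated in the standard way, or one simply observes regularity is preserved). I expect the main obstacle — really the only subtle point — is getting the \fS-case capping right: one must notice that a $\cres$ fired on a counter whose value is still $\leq n$ must kill the run (not merely be ignored), because that reset witnesses $\val{c,\rho,r_c}\leq n$ and so pulls the $\min$ defining $\val{\rho}$ down to $\leq n$. Once that is correctly encoded, everything else is bookkeeping.
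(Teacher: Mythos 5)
Your overall approach is exactly the paper's --- its entire proof is the one sentence ``we can encode a bounded valuation of the counters into a state of a finite automaton'' --- and your \fS-construction is correct, but the \fB-construction as you have written it does not compute $\lang(\aut{A}_\fB\leq n)$: it fails in both directions. The root cause is that $\val{\rho}$ depends only on the counter values \emph{at the moments of reset} (the index $r_c$ ranges over reset positions, and the max over an empty set of resets is $0$ by the stated convention), whereas your product automaton (a) forgives a reset fired while a counter holds $\top$, sending it back to $0$, and (b) penalises a counter that climbs above $n$ but is never reset afterwards, by demanding $f(c)\neq\top$ in the final state. For (a): take the one-state automaton with transitions $q_0\trans{a,\cinc}q_0$ and $q_0\trans{b,\cres}q_0$, with $q_0$ initial and final, one counter $c$, the word $w=aab$ and $n=1$; the unique run resets $c$ at value $2$, so $\val{\aut{A}_\fB,w}=2>1$, yet your product runs $f(c)\colon 0\to 1\to\top\to 0$ and accepts. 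For (b): with the same automaton and $w=aa$, the unique run never resets $c$, so $\val{\rho}=0$ and $w\in\lang(\aut{A}_\fB\leq 1)$, yet your product ends with $f(c)=\top$ and rejects. Your own summary sentence (``accepted iff there is an accepting run all of whose counter values before resets are at most $n$'') states the right condition, but the automaton you built does not implement it.

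The fix is precisely the mirror image of the care you correctly took in the \fS case: a $\cres$ fired on a counter whose tracked value is $\top$ must kill the run (no transition, or a rejecting sink), while the final states should be all $(q,f)$ with $q\in F$ and no condition on $f$. With that change the product accepts $w$ if and only if some accepting run of $\aut{A}_\fB$ resets every counter only at values $\leq n$, which is exactly $\min_\rho\val{\rho}\leq n$, i.e.\ $\val{\aut{A}_\fB,w}\leq n$; the empty-min-is-$\infty$ convention you discuss is then also handled correctly. Everything else in the proposal --- the $\epsilon$-transition remark and the entire \fS half, including the observation that a reset at value $\leq n$ must disqualify the run rather than be ignored --- is fine and matches what the paper intends.
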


\begin{proof}
We can encode a bounded valuation of the counters into a state of a finite automaton.
\end{proof}

\subsection{Languages}

The above definitions give a semantics of a \fT-automaton in terms of a function $\fun{\val{\aut{A}, .}}{A^\ast}{\N\cup\{\infty\}}$. As noted in~\cite{torunczyk_limitedness}, it is possible to define the language recognised by such an automaton as a subset of the profinite monoid $\prom{A}$. We successively define it for \fB-automata and \fS-automata. In both cases the construction is justified by Lemma~\ref{lm:val_regular}.

{\bf \fB case:} Fix a \fB-automaton $\aut{A}_\fB$ and define
\begin{equation}
\lang(\aut{A}_\fB):= \bigcup_{n\in\N}\ \comp{\lang(\aut{A}_\fB\leq n})\subseteq \prom{A}.
\label{eq:def_B}
\end{equation}

{\bf \fS case:} Fix an \fS-automaton $\aut{A}_\fS$ and define
\begin{equation}
\lang(\aut{A}_\fS):=\bigcap_{n\in\N}\ \comp{\lang(\aut{A}_\fS>n)}\subseteq \prom{A}.
\label{eq:def_S}
\end{equation}

\noindent Note that the sequences of languages in the above equations are monotone: increasing in \eqref{eq:def_B} and decreasing in \eqref{eq:def_S}.

\newcommand{\vfun}{\mathrm{val}_{\aut{A}}}

There exists another, equivalent way of defining languages recognised by these automata \cite{torunczyk_limitedness}. One can observe that the function $\val{\aut{A}, .}$ assigning to every finite word its value has a unique continuous extension on $\prom{A}$. 
The languages recognised by \fB- and \fS-automata can be defined as $\val{\aut{A}, .}^{-1}(\N)$ and $\val{\aut{A}, .}^{-1}(\{\infty\})$ respectively. In this work we only refer to the definitions \eqref{eq:def_B} and \eqref{eq:def_S}.

\begin{example}
Consider the \fS-automaton $\aut{A}_\fS$ depicted in Figure~\ref{fig:S}. The automaton measures the number of letters $a$ in a given word. Then it guesses that the word is finished and moves to the accepting state. For every finite word $\finA$ the value $\val{\aut{A}_\fS,\finA}$ equals the number of letters $a$ in $\finA$.

The language $\lang(\aut{A}_\fS)$ does not contain any finite word. It contains a profinite word $\finA$ if for every $n$ the word $\finA$ belongs to the profinite-regular language defined by the formula ``the word contains more than $n$ letters $a$'' (i.e. $\finA\in\comp{\lang(\aut{A}_\fS > n)}$). In particular, the limit of the sequence $(a^{n!})_{n\in\N}$ from Example~\ref{ex:factorial} belongs to $\lang(\aut{A}_\fS)$.

\begin{figure}
\begin{tikzpicture}[shorten >=1pt,node distance=1.5cm,on grid,auto]
\node[state, initial] (qI) at (-3,0) {$q_I$};
\node[state, accepting] (qF) at (0,0) {$q_F$};

\path[->] 
    (qI) edge [loop above] node {$a, \cinc$} (qI)
    (qI) edge [loop below] node {$b, \ceps$} (qI)
    (qI) edge [] node {$\epsilon, \cres$} (qF);
\end{tikzpicture}
\caption{An example of an \fS-automaton $\aut{A}_\fS$.}
\label{fig:S}
\end{figure}
\end{example}

\begin{lemma}\label{lm:open_closed}
Every \fB-regular language is an open subset of \prom{A} and dually every \fS-regular language is closed.
\end{lemma}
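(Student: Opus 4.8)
The plan is to unwind the two definitions \eqref{eq:def_B} and \eqref{eq:def_S} and observe that each building block is clopen, so that the stated topological properties follow from elementary closure properties of open and closed families. Concretely, for a \fB-automaton $\aut{A}_\fB$ we have $\lang(\aut{A}_\fB)=\bigcup_{n\in\N}\comp{\lang(\aut{A}_\fB\leq n)}$, and for an \fS-automaton $\aut{A}_\fS$ we have $\lang(\aut{A}_\fS)=\bigcap_{n\in\N}\comp{\lang(\aut{A}_\fS>n)}$.

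First I would invoke Lemma~\ref{lm:val_regular}: for each fixed $n\in\N$ the languages $\lang(\aut{A}_\fB\leq n)$ and $\lang(\aut{A}_\fS>n)$ are \emph{regular} languages of finite words. Hence, by the definition of a profinite-regular language, each of $\comp{\lang(\aut{A}_\fB\leq n)}$ and $\comp{\lang(\aut{A}_\fS>n)}$ is a profinite-regular subset of $\prom{A}$, and by Fact~\ref{ft:clopen} each of these sets is clopen in $\prom{A}$ --- in particular open, and in particular closed.

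Then I would finish with the topological bookkeeping. An arbitrary union of open sets is open, so $\lang(\aut{A}_\fB)=\bigcup_{n\in\N}\comp{\lang(\aut{A}_\fB\leq n)}$ is open. Dually, an arbitrary intersection of closed sets is closed, so $\lang(\aut{A}_\fS)=\bigcap_{n\in\N}\comp{\lang(\aut{A}_\fS>n)}$ is closed. This proves the lemma.

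There is no real obstacle here: the statement is a direct corollary of Lemma~\ref{lm:val_regular} together with Fact~\ref{ft:clopen}, and the only thing to be careful about is matching each of the two cases to the correct half of ``clopen'' (open for the union in the \fB~case, closed for the intersection in the \fS~case). It is worth noting in passing that the two sequences of profinite-regular sets are monotone --- increasing in the \fB~case, decreasing in the \fS~case --- so the union and intersection are genuine limits, but monotonicity is not needed for the argument.
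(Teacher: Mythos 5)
Your proof is correct and follows essentially the same route as the paper: unwind the definitions \eqref{eq:def_B} and \eqref{eq:def_S}, note via Lemma~\ref{lm:val_regular} and Fact~\ref{ft:clopen} that each constituent set is clopen, and conclude by closure of open sets under unions and of closed sets under intersections. No issues.
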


\begin{proof}
By equations~\eqref{eq:def_B} and~\eqref{eq:def_S}, a \fB-regular language is a sum of profinite-regular languages and an \fS-regular language is an intersection of profinite-regular languages. By Fact~\ref{ft:clopen}, profinite-regular languages are closed and open, therefore their sum is open and the intersection is closed.
\end{proof}

The converse of Lemma~\ref{lm:open_closed} is false as there are uncountably many open subsets of \prom{A}.

We finish the definitions of automata models by recalling the following theorem.

\begin{theorem}[Fact~2.6 and Corollary~3.4 in~\cite{bojanczyk_bounds}, Theorem~8 and paragraph \emph{Closure properties} in~\cite{torunczyk_limitedness}]\label{th:effective_disjoint}

Let $\fT\in\{\fB,\fS,\wB,\wS\}$. The class of \fT-regular languages is effectively closed under union and intersection. The emptiness problem for \fT-regular languages is decidable.

Therefore, it is decidable whether given two \fT-regular languages are disjoint.
\end{theorem}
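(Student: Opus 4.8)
The plan is to derive the final sentence from the two preceding assertions, and to indicate how those are obtained for each of the four models. Given \fT-regular languages $L_1,L_2$, effective closure under intersection produces a \fT-automaton recognising $L_1\cap L_2$, and then $L_1,L_2$ are disjoint if and only if $L_1\cap L_2=\emptyset$, which the emptiness procedure decides. So everything reduces to effective closure under union and intersection, together with decidability of emptiness.

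For the $\omega$-word models, closure under union is witnessed by the nondeterministic disjoint union of two \wT-automata: each run of the result is a run of one of the summands, and the acceptance condition (all counters bounded, for \wB; all counters diverging, for \wS) is read off that single run. Closure under intersection is witnessed by the product automaton carrying the disjoint union $\Gamma_1\uplus\Gamma_2$ of the two counter sets: a run projects to a pair of runs, and since the acceptance condition is a conjunction over individual counters, the run is accepting exactly when both projections are. These are the constructions behind Fact~2.6 and Corollary~3.4 of~\cite{bojanczyk_bounds}. For the finite-word models one uses the equivalent semantics via the continuous extension of the value function, namely $\lang(\aut{A}_\fB)=\val{\aut{A}_\fB,.}^{-1}(\N)$ and $\lang(\aut{A}_\fS)=\val{\aut{A}_\fS,.}^{-1}(\{\infty\})$: it then suffices to realise the pointwise minimum and maximum of two value functions by a single automaton. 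The disjoint union realises $\min(\val{\aut{A}_1,.},\val{\aut{A}_2,.})$ in the \fB-semantics (minimum over accepting runs of both automata), hence gives union for \fB and, dually, intersection for \fS; the product over $\Gamma_1\uplus\Gamma_2$ realises $\max(\val{\aut{A}_1,.},\val{\aut{A}_2,.})$ (a run of the product has value the max of the values of its two projected runs, and the runs are chosen independently), hence gives intersection for \fB and union for \fS. All these constructions are effective; this is Theorem~8 and the paragraph \emph{Closure properties} in~\cite{torunczyk_limitedness}.

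For emptiness, the finite-word cases reduce to classical decidable problems. By Theorem~\ref{th:iso}, $\lang(\aut{A}_\fB)\cap A^\ast=\bigcup_n\lang(\aut{A}_\fB\leq n)$ is exactly the set of finite words accepted by $\aut{A}_\fB$, and since the topological closure of the empty language is empty, $\lang(\aut{A}_\fB)=\emptyset$ if and only if the underlying nondeterministic automaton (ignoring counters) accepts no finite word --- a reachability check. For \fS, density of $A^\ast$ in $\prom{A}$ and continuity of the extended value function give $\lang(\aut{A}_\fS)=\emptyset$ if and only if $\val{\aut{A}_\fS,.}$ is bounded on $A^\ast$ (if it is bounded by $N$, the closed set where the extension is $\leq N$ contains $A^\ast$, hence is all of $\prom{A}$; if it is unbounded, a convergent subsequence of witnesses yields a profinite word of value $\infty$). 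This is the limitedness problem, decidable by the results underlying the cost-function framework~\cite{colcombet_stabilisation}. For \wB and \wS, decidability of emptiness is one of the main results of~\cite{bojanczyk_bounds}.

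The main obstacle is exactly this last step: the Boolean-closure constructions are routine bookkeeping, but the limitedness test for \fS-automata and the emptiness test for \wB- and \wS-automata rest on substantial prior work, which in this exposition is used as a black box.
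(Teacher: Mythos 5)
The paper gives no proof of this theorem beyond the final sentence---the closure and emptiness results are imported as black boxes from \cite{bojanczyk_bounds} and \cite{torunczyk_limitedness}---and your derivation of that final sentence (build the intersection automaton, then run the emptiness test) is exactly the paper's implicit argument. Your additional sketches of the cited ingredients are mostly sound: the $\omega$-word constructions, the reduction of \fB-emptiness to reachability in the underlying automaton, and the compactness argument identifying \fS-emptiness with boundedness of the value function on $A^\ast$ are all correct.

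There is, however, a concrete error in your \fS closure constructions: the two witnesses are swapped. For an \fS-automaton the outer aggregation over accepting runs is a \emph{maximum}, so the disjoint union of $\aut{A}_1,\aut{A}_2$ realises $\max\bigl(\val{\aut{A}_1,\cdot},\val{\aut{A}_2,\cdot}\bigr)$, not the minimum; since membership means value $=\infty$, the disjoint union recognises the \emph{union} of the two \fS-languages (equivalently, $\lang(\aut{A}>n)=\lang(\aut{A}_1>n)\cup\lang(\aut{A}_2>n)$, and a pigeonhole argument on the decreasing intersection $\bigcap_n\bigl(\comp{\lang(\aut{A}_1>n)}\cup\comp{\lang(\aut{A}_2>n)}\bigr)$ shows this equals $\lang(\aut{A}_1)\cup\lang(\aut{A}_2)$). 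Dually, the product over $\Gamma_1\uplus\Gamma_2$ realises the \emph{minimum} of the two \fS-value functions, because $\max_{(\rho_1,\rho_2)}\min\bigl(\val{\rho_1},\val{\rho_2}\bigr)=\min\bigl(\max_{\rho_1}\val{\rho_1},\max_{\rho_2}\val{\rho_2}\bigr)$ when the runs range independently; hence it recognises the \emph{intersection}. So, exactly as for \fB and for classical automata, disjoint union gives union and product gives intersection. Your duality heuristic fails because the acceptance condition and the aggregation over runs dualise together, and the two dualisations cancel. The slip does not endanger the theorem---both constructions exist either way---but as written your claimed witness for \fS-union computes the intersection and vice versa.
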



\section{Separation for profinite languages}\label{s:profinite_sep}

In this section we show the following theorem.

\begin{theorem}\label{th:separation}
Let $\fT\in\{\fB,\fS\}$. Assume that the languages of profinite words $M_1,M_2\subseteq \prom{A}$ are recognised by \fT-automata and $M_1\cap M_2=\emptyset$. Then there exists a profinite-regular language $\comp{K_\sep}\subseteq \prom{A}$ such that
\[M_1\subseteq \comp{K_\sep}\quad\mathrm{and}\quad M_2\subseteq \comp{K_\sep}^c.\]
\end{theorem}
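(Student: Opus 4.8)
The plan is to handle the two cases $\fT=\fS$ and $\fT=\fB$ separately, and in both cases to build the separating language $\comp{K_\sep}$ directly out of the profinite-regular languages $\comp{\lang(\aut{A}_i > n)}$, resp.\ $\comp{\lang(\aut{A}_i \leq n)}$, furnished by the definitions \eqref{eq:def_B} and \eqref{eq:def_S}, so that the construction is moreover effective. The two structural facts I will rely on are that $\prom{A}$ is a Stone space, i.e.\ compact with profinite-regular $=$ clopen sets forming a basis (Fact~\ref{ft:basis}, Fact~\ref{ft:clopen}), and Lemma~\ref{lm:open_closed} (every \fS-regular language is closed, every \fB-regular language is open). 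Let $\aut{A}_1,\aut{A}_2$ be the \fT-automata with $M_i=\lang(\aut{A}_i)$.

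\textbf{The \fS case.} Here $M_i=\bigcap_{n}\comp{\lang(\aut{A}_i>n)}$ is a decreasing intersection of clopen sets. The family of closed subsets of $\prom{A}$ consisting of all $\comp{\lang(\aut{A}_1>n)}$ and all $\comp{\lang(\aut{A}_2>m)}$ has total intersection $M_1\cap M_2=\emptyset$, so by compactness some finite subfamily already has empty intersection; since each of the two families is nested, this yields indices $n,m$ with $\comp{\lang(\aut{A}_1>n)}\cap\comp{\lang(\aut{A}_2>m)}=\emptyset$. Then $K_\sep:=\lang(\aut{A}_1>n)$ works: $M_1\subseteq\comp{\lang(\aut{A}_1>n)}=\comp{K_\sep}$, while $M_2\subseteq\comp{\lang(\aut{A}_2>m)}\subseteq\comp{\lang(\aut{A}_1>n)}^c=\comp{K_\sep}^c$. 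For effectiveness one enumerates pairs $(n,m)$ and at each step tests disjointness of the two regular languages $\lang(\aut{A}_1>n)$ and $\lang(\aut{A}_2>m)$ (regular by Lemma~\ref{lm:val_regular}, so decidable), the search terminating by the compactness argument above.

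\textbf{The \fB case.} This is where the actual content lies, because two disjoint \emph{open} subsets of a Stone space need not be separable by a clopen set in general; the point is that a \fB-regular language is not an arbitrary open set but the closure of a \emph{single} regular language. Concretely, $M_i=\bigcup_{n}\comp{\lang(\aut{A}_i\leq n)}$, and since every accepting run of $\aut{A}_i$ over a \emph{finite} word has a finite value, the increasing union $\bigcup_n \lang(\aut{A}_i\leq n)$ is exactly the regular language $K_i\subseteq A^\ast$ of finite words accepted by $\aut{A}_i$ once the counters are ignored. Disjointness of $M_1$ and $M_2$ forces $K_1\cap K_2=\emptyset$: a finite word in $K_1\cap K_2$, viewed as an element of $\prom{A}$, would lie in $\comp{\lang(\aut{A}_1\leq \val{\aut{A}_1,w})}\subseteq M_1$ and likewise in $M_2$. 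I then take $K_\sep:=K_1$. Monotonicity of the closure operation (Theorem~\ref{th:iso}) gives $\comp{\lang(\aut{A}_1\leq n)}\subseteq\comp{K_1}=\comp{K_\sep}$ for every $n$, hence $M_1\subseteq\comp{K_\sep}$; and since $\lang(\aut{A}_2\leq n)\subseteq K_2$ and $K_1\cap K_2=\emptyset$, the Boolean-algebra isomorphism of Theorem~\ref{th:iso} yields $\comp{\lang(\aut{A}_2\leq n)}\cap\comp{K_\sep}=\comp{\lang(\aut{A}_2\leq n)\cap K_1}=\emptyset$ for every $n$, so $M_2\cap\comp{K_\sep}=\emptyset$. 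Since $K_1$ is computable from $\aut{A}_1$, this is effective as well.

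\textbf{Where the difficulty is.} The \fS half is essentially the standard fact that disjoint closed sets in a Stone space are clopen-separated; the only thing to get right is the finite-intersection step that extracts a single pair $(n,m)$. I expect the one genuine (if modest) idea to be needed in the \fB half: an arbitrary open subset of $\prom{A}$ need not be clopen-separable from a disjoint one, so the argument must exploit that a \fB-regular language equals the closure of its regular support — and that support, taken as $K_\sep$, already separates. Beyond recognising this, the verification is routine and stays uniform in the two dual formats.
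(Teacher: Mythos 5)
Your proposal is correct and follows essentially the same route as the paper: the \fS case is the paper's compactness argument (the paper extracts a single index $n_0$ from an open cover of the compact set $M_1$, you symmetrically extract a pair $(n,m)$ via the finite intersection property), and the \fB case hinges on exactly the paper's key construction, namely taking as separator the regular language of the counter-erased automaton for $\aut{A}_1$ and using that every finite word it accepts has finite value and hence already lies in $M_1$ (the paper's Lemma~\ref{lm:closed}). The only cosmetic difference is in verifying $M_2\cap\comp{K_1}=\emptyset$: the paper argues that a nonempty open subset of $\prom{A}$ must contain a finite word by density of $A^\ast$, whereas you apply the Boolean-algebra isomorphism of Theorem~\ref{th:iso} level by level to the sets $\lang(\aut{A}_2\leq n)$; both are sound.
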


The proof of the theorem consists of two parts, for the two cases of $\fT\in\{\fB,\fS\}$: Lemma~\ref{lm:s_separation} and Theorem~\ref{th:b_separation}.

First we prove the case when $\fT=\fS$. The presented proof uses a general topological fact: the separation property of closed (i.e. $\bpi 1$) sets in a zero-dimensional Polish space.

\begin{lemma}\label{lm:s_separation}
A pair of disjoint \fS-regular languages of profinite words can be separated by a profinite-regular language.
\end{lemma}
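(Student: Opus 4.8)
The plan is to use the fact that two disjoint \fS-regular languages are closed subsets of $\prom A$ (by Lemma~\ref{lm:open_closed}) and that $\prom A$ is a compact metric space in which the profinite-regular languages form a basis of clopen sets (Facts~\ref{ft:basis} and~\ref{ft:clopen}). In a compact (hence Lindel\"of) zero-dimensional metric space, any two disjoint closed sets can be separated by a clopen set; combined with Fact~\ref{ft:clopen}, this clopen separator is exactly a profinite-regular language, which is what we want. So the only thing to really prove is this topological separation statement for the specific space $\prom A$.

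First I would record that $M_1$ and $M_2$ are disjoint closed subsets of the compact metric space $\prom A$. Then I would run the standard argument: for each point $x\in M_1$, since $x\notin M_2$ and $M_2$ is closed, and since profinite-regular languages form a basis (Fact~\ref{ft:basis}), there is a profinite-regular language $\comp{K_x}$ with $x\in\comp{K_x}$ and $\comp{K_x}\cap M_2=\emptyset$ — here one uses that basic open sets are clopen, so a basic neighbourhood of $x$ disjoint from $M_2$ exists because $M_2$ is closed and $x$ lies in the open set $M_2^c$. The family $\{\comp{K_x}\}_{x\in M_1}$ covers the compact set $M_1$, so finitely many of them, say $\comp{K_{x_1}},\dots,\comp{K_{x_k}}$, already cover $M_1$. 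Put $\comp{K_\sep}=\comp{K_{x_1}}\cup\cdots\cup\comp{K_{x_k}}$. A finite union of profinite-regular languages is profinite-regular (Theorem~\ref{th:iso}, or directly via Fact~\ref{ft:clopen} since a finite union of clopen sets is clopen), so $\comp{K_\sep}$ is a legitimate separator: it contains $M_1$ by construction, and it is disjoint from $M_2$ because each $\comp{K_{x_i}}$ is disjoint from $M_2$, hence $M_2\subseteq\comp{K_\sep}^c$.

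There is no serious obstacle here — the statement is essentially a packaging of well-known facts, and the main content is simply identifying which facts from the preliminaries apply. The only point that needs a little care is the step producing the basic clopen neighbourhood $\comp{K_x}$ of $x$ disjoint from $M_2$: one must invoke that $\prom A$ has a basis of clopen sets (so that "open neighbourhood of $x$ inside $M_2^c$" can be refined to a clopen, i.e.\ profinite-regular, neighbourhood), and this is exactly Facts~\ref{ft:basis} and~\ref{ft:clopen}. After that, compactness of the closed subset $M_1$ of the compact space $\prom A$ does all the remaining work, and closure of profinite-regular languages under finite unions finishes the argument.
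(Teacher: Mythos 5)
Your proof is correct: $M_1,M_2$ are disjoint closed subsets of the compact space $\prom{A}$ by Lemma~\ref{lm:open_closed}, the pointwise choice of clopen basic neighbourhoods $\comp{K_x}$ avoiding $M_2$ is legitimate by Facts~\ref{ft:basis} and~\ref{ft:clopen}, and compactness of $M_1$ plus closure of profinite-regular languages under finite unions (Theorem~\ref{th:iso}) finishes the job. This is in essence the same topological argument as the paper's, but packaged differently: the paper first invokes the $\bpi 1$-separation property for zero-dimensional Polish spaces as a black box, and then gives an alternative compactness argument that does not cover $M_1$ pointwise but instead exploits the canonical presentation $M_2=\bigcap_{n}\comp{\lang(\aut{A}_\fS>n)}$, covering $M_1$ by the increasing sequence of complements $N_n$ and extracting a single $N_{n_0}$ as the separator. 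What you prove from scratch is exactly the general fact the paper cites, which is a perfectly valid and self-contained route. What the paper's version of the compactness argument buys in exchange for being less generic is effectiveness: since the cover is the monotone sequence $N_0\subseteq N_1\subseteq\cdots$ coming from the automaton, one can compute $n_0$ by testing emptiness of $M_1\cap\comp{\lang(\aut{A}_\fS>n)}$ for increasing $n$ (Remark~\ref{rm:effect_S}), whereas your pointwise choice of the sets $\comp{K_x}$ is not a priori algorithmic. Since the lemma as stated does not demand effectiveness, your proof fully establishes it; just be aware that the paper later relies on the effective variant.
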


\begin{proof}
Take two \fS-regular languages $M_1,M_2\subseteq \prom{A}$.

Since $\prom{A}$ is a zero-dimensional Polish space, the $\bpi 1$-separation property holds for $\prom{A}$ (see Theorem~22.16 in~\cite{kechris_descriptive}). By Lemma~\ref{lm:open_closed} every \fS-regular language is $\bpi 1$ in $\prom{A}$, therefore $M_1,M_2$ can be separated in $\prom{A}$ by a set $M_\sep$ that is closed and open. By Fact~\ref{ft:clopen}, the language $M_\sep$ is profinite-regular.
\end{proof}

Instead of using the $\bpi 1$-separation property, one can provide the following straightforward argument that uses the compactness of $\prom{A}$. We know that $M_1$ is a closed subset of a compact space $\prom{A}$ so $M_1$ is itself compact. Assume that $M_2$ is recognised by an \fS-automaton $\aut{A}_\fS$. By \eqref{eq:def_S} we obtain
\[M_2=\bigcap_{n\in\N}\ \comp{\lang(\aut{A}_\fS>n)}\subseteq \prom{A}.\]

For $n\in\N$ define $N_n:=\comp{\lang(\aut{A}_\fS>n)}^c$ ---  the complement of the profinite-regular language $\comp{\lang(\aut{A}_\fS>n)}$. Clearly $M_1\subseteq \bigcup_n N_n$ because $M_1$ and $M_2$ are disjoint. Fact~\ref{ft:clopen} and Lemma~\ref{lm:val_regular} imply that the sets $N_n$ are open subsets of $\prom{A}$. Therefore, the family $\left(N_n\right)_{n\in\N}$ is an open cover of $M_1$. Since $M_1$ is compact, there is $n_0\in\N$ such that
\[M_1\subseteq N_0\cup N_1\cup\ldots\cup N_{n_0} = N_{n_0}.\]

Therefore, $N_{n_0}$ is a profinite-regular language that separates $M_1$ and $M_2$.

\begin{remark}\label{rm:effect_S}
The language $N_{n_0}$ can be computed effectively.
\end{remark}

\begin{proof}
It is enough to observe that $n_0$ can be taken as the minimal $n$ such that $M_1$ does not intersect the profinite-regular language $\comp{\lang(\aut{A}_\fS>n)}$. Such $n$ exists by the above argument.
\end{proof}

Now we proceed with the separation property for \fB-regular languages. By Lemma~\ref{lm:open_closed} we know that \fB-regular languages are open sets in $\prom{A}$. An easy exercise shows that in general open sets do not have the separation property. Thus, to show the following theorem we need an argument that is a bit more involved than in the case of \fS-regular languages.

\begin{theorem}\label{th:b_separation}
A pair of disjoint \fB-regular languages of profinite words can be separated by a profinite-regular language.
\end{theorem}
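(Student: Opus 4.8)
The plan is to exploit the fact that $\fB$-regular languages, although merely open in $\prom{A}$, are countable increasing unions of profinite-regular (clopen) languages, and to combine this with the compactness of $\prom{A}$ together with the separation property already established for the $\fS$ case. Concretely, write $M_1 = \bigcup_{n} \comp{K^1_n}$ and $M_2 = \bigcup_{n} \comp{K^2_n}$ as in \eqref{eq:def_B}, where both sequences $(\comp{K^1_n})_n$ and $(\comp{K^2_n})_n$ are increasing sequences of profinite-regular sets. The key observation is that for each fixed $n$, the complements $(\comp{K^2_m})^c$ form a decreasing family of closed sets whose intersection $M_2^c$ contains the compact set $\comp{K^1_n}$ (since $M_1 \cap M_2 = \emptyset$ gives $\comp{K^1_n} \subseteq M_1 \subseteq M_2^c$). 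Wait — that intersection argument goes the wrong way for a direct compactness cover, so instead I would argue as follows: $\comp{K^1_n}$ is clopen hence compact, and $M_2 = \bigcup_m \comp{K^2_m}$ is an open cover of nothing relevant; rather, I want to separate $\comp{K^1_n}$ from $M_2$. Since $\comp{K^1_n}$ and $M_2$ are disjoint, $\comp{K^1_n}$ is compact, and $M_2$ is $\bpi 1$... no — $M_2$ is open. So here the clean route is: $\comp{K^1_n}$ compact, $M_2$ open, and they are disjoint, but two disjoint sets one compact one open need not be separated by a clopen set unless we use more.

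So the real plan must invoke the $\fS$-separation machinery via duality (Theorem~\ref{th:duality}) or the profinite analogue. Here is the cleaner approach I would actually pursue. By Lemma~\ref{lm:open_closed}, $M_2$ is open; its complement $M_2^c$ is closed in the compact space $\prom{A}$, hence compact. For each $n$, $\comp{K^1_n}$ is a clopen subset of $M_1 \subseteq M_2^c{}^c$... I keep circling, which tells me the genuine obstacle. Let me commit: the plan is to use that $M_1$ is open and $M_2^c$ is closed with $M_1 \cap M_2^c = M_1$ — no. The correct disjointness is $M_1 \subseteq M_2^c$, i.e. the open set $M_1$ sits inside the closed set $M_2^c$. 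We want a clopen set sandwiched between $M_1$ and $M_2^c$. Equivalently, $M_2 \subseteq M_1^c$ with $M_2$ open, $M_1^c$ closed, and we want a clopen set between them. This is the statement "a $\bsigma 1$ set contained in a $\bpi 1$ set can be trapped by a clopen set" — which is \emph{false} for general open/closed pairs, so we must use that $M_1$ is not arbitrary open but $\fB$-regular, and that $M_1^c \supseteq M_2$ where $M_2$ is $\fB$-regular, i.e. $M_1^c$ is the complement of an $\fB$-regular set.

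Thus the actual engine is: write $M_1 = \bigcup_n \comp{K^1_n}$ (increasing clopen) and $M_2 = \bigcup_n \comp{K^2_n}$ (increasing clopen). The compact set $M_2^c = \bigcap_n (\comp{K^2_n})^c$ is covered — no longer circular now — by... hmm, $M_1 \subseteq M_2^c$ is what we have, not a cover of $M_2^c$. Instead: $M_2 \cap \comp{K^1_n} = \emptyset$ for every $n$, because $\comp{K^1_n} \subseteq M_1$ and $M_1 \cap M_2 = \emptyset$. Since $\comp{K^1_n}$ is compact and $\comp{K^1_n} \subseteq \bigcap_m (\comp{K^2_m})^c$, and the $(\comp{K^2_m})^c$ are a \emph{decreasing} sequence of closed (in fact clopen) sets, compactness of $\comp{K^1_n}$ alone does not terminate the intersection. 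But we do not need it to: we need a single clopen $\comp{K_\sep}$ with $M_1 \subseteq \comp{K_\sep}$ and $\comp{K_\sep} \cap M_2 = \emptyset$. I would construct it by a diagonal/interleaving argument: since $\prom{A}$ is compact metric and $M_1, M_2$ are disjoint with $M_1$ open, consider the function $f\colon \prom{A} \to \{1,2,\ast\}$ and use the $\fT$-automata directly. Given the $\fB$-automata $\aut{A}_1, \aut{A}_2$ for $M_1, M_2$, form the automaton tracking both runs and observe that on the disjoint-support assumption, for each word exactly one of "$\aut{A}_1$-value is small" eventually holds; using Lemma~\ref{lm:val_regular} and compactness, there is a uniform bound — \emph{this} uniformity is the crux. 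Precisely: I claim there exists $N$ with $M_1 \subseteq \comp{\lang(\aut{A}_1 \le N)}$ relative to the relevant region, or more carefully, $\comp{\lang(\aut{A}_1 \le N)} \cap M_2 = \emptyset$; then $\comp{\lang(\aut{A}_1 \le N)}$ separates. To get such $N$: the sets $V_n := \comp{\lang(\aut{A}_1 \le n)} \cap M_2$ are closed (intersection of closed with... $M_2$ is open, so this is not closed). Replace $M_2$ by its exhaustion: $V_{n,m} := \comp{\lang(\aut{A}_1 \le n)} \setminus \comp{\lang(\aut{A}_2 \le m)}$; these are clopen, and for fixed $n$ they decrease in $m$ to $\comp{\lang(\aut{A}_1 \le n)} \setminus M_2^{\text{above-}m}$... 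I expect the paper does exactly this double exhaustion plus compactness to extract a finite $N$, and then the separating language is $\comp{\lang(\aut{A}_1 \le N)}$, possibly shrunk by removing a clopen piece of $M_2$. \textbf{The main obstacle} I anticipate is precisely establishing the uniform bound $N$: ruling out, via compactness, the possibility that arbitrarily large $\aut{A}_1$-values are needed to cover points of $M_1$ that lie ever closer to $M_2$ — this is where the openness of $M_1$ and the clopen exhaustion of both languages must be combined carefully, likely by showing that the nested family of clopen "bad" sets has empty intersection and invoking compactness to make the intersection finite.

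Finally, once the uniform $N$ is in hand, I would set $\comp{K_\sep} := \comp{\lang(\aut{A}_1 \le N)}$ (or this set minus an appropriate clopen approximant of $M_2$, to also guarantee $M_2 \subseteq \comp{K_\sep}^c$ rather than merely disjointness of $M_1$'s witness), check $M_1 \subseteq \comp{K_\sep}$ directly from \eqref{eq:def_B} and monotonicity, check $\comp{K_\sep} \cap M_2 = \emptyset$ from the choice of $N$, and conclude via Fact~\ref{ft:clopen} that $\comp{K_\sep}$ is profinite-regular. Effectiveness, as in Remark~\ref{rm:effect_S}, follows by taking $N$ minimal, which is computable using Theorem~\ref{th:effective_disjoint} since disjointness of $\fB$-regular (hence of $\comp{\lang(\aut{A}_1 \le N)}$ with $M_2$) is decidable.
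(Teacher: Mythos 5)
Your plan has a genuine gap at exactly the point you flag as the ``crux'': the uniform bound $N$ you are hoping for does not exist. Writing $M_1=\bigcup_n \comp{\lang(\aut{A}_1\le n)}$ as in \eqref{eq:def_B}, this is in general a strictly increasing union of clopen sets (if it stabilised, every \fB-regular language would already be profinite-regular), so no single $N$ satisfies $M_1\subseteq\comp{\lang(\aut{A}_1\le N)}$. The weaker statement $\comp{\lang(\aut{A}_1\le N)}\cap M_2=\emptyset$ holds trivially for \emph{every} $N$, since $\comp{\lang(\aut{A}_1\le N)}\subseteq M_1$, but then your candidate separator fails to contain $M_1$. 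Compactness cannot rescue this: it is the right tool for the \fS\ case, where $M_1$ is closed hence compact and the complements of the sets defining $M_2$ form an open cover of it, but here both languages are open and there is no compact set to cover. The repeated aborted attempts in the middle of your proposal are symptoms of this: no exhaustion-plus-compactness argument terminates for two open sets.

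The idea you are missing is that the separator is not taken from the exhaustion of $M_1$ at all. The paper (Lemma~\ref{lm:closed}) takes the \fB-automaton $\aut{A}_1$ recognising $M_1$, erases all counters and counter operations to obtain an ordinary nondeterministic automaton $\aut{A}_R$, and sets $\comp{K_\sep}=\comp{\lang(\aut{A}_R)}$. Two facts make this work. First, on a \emph{finite} word $\finA$ any accepting run of $\aut{A}_1$ has all counter values bounded by $|\finA|$, so $\lang(\aut{A}_R)=M_1\cap A^\ast$; hence $M_1\subseteq\comp{K_\sep}$ and $\comp{K_\sep}\cap A^\ast=M_1\cap A^\ast$. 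Second, if $M_2\cap\comp{K_\sep}$ were nonempty it would be a nonempty open set (by Lemma~\ref{lm:open_closed} and Fact~\ref{ft:clopen}), hence by density of $A^\ast$ in \prom{A} it would contain a finite word, which would then lie in $M_1\cap M_2$ --- a contradiction. So the argument hinges on the density of finite words and on the triviality of the \fB-condition on finite words, not on any uniformity over the bounds $n$.
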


We obtain the above theorem by applying the following observation.

\begin{lemma}\label{lm:closed}
For every \fB-regular language $M_\fB\subseteq{\prom{A}}$ there exists a profinite-regular language $\comp{K_R}\subseteq \prom{A}$ such that
\[M_\fB\subseteq \comp{K_R}\quad \mathrm{and}\quad M_\fB\cap A^\ast \ = \ \comp{K_R}\cap A^\ast.\]

Moreover, the language $\comp{K_R}$ can be computed effectively.
\end{lemma}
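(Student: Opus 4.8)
The plan is to exploit the definition \eqref{eq:def_B} of an \fB-regular language as an increasing union $M_\fB=\bigcup_n\comp{\lang(\aut{A}_\fB\leq n)}$ and to show that one single member of this chain already captures all the \emph{finite} words of $M_\fB$. First I would observe that for a finite word $w\in A^\ast$ we have $w\in M_\fB$ if and only if $\val{\aut{A}_\fB,w}<\infty$, and that $\val{\aut{A}_\fB,\cdot}$ takes values in $\N\cup\{\infty\}$; so $M_\fB\cap A^\ast=\{w\in A^\ast:\val{\aut{A}_\fB,w}<\infty\}$. The key point is that this set is itself a \emph{regular} language of finite words: whether a counter automaton has \emph{some} accepting run at all (ignoring counter values, replacing every counter operation by $\ceps$) is a regular condition, and on an accepting run over a finite word every counter is incremented only finitely often, so its value is automatically finite. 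Hence $M_\fB\cap A^\ast=\lang(\aut{B})$ for an ordinary finite automaton $\aut{B}$ obtained from $\aut{A}_\fB$ by forgetting the counters, and this automaton is computable from $\aut{A}_\fB$.

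Next I would set $K_R:=\lang(\aut{B})$, a regular language of finite words, and let $\comp{K_R}\subseteq\prom{A}$ be its profinite closure. By Theorem~\ref{th:iso} we get immediately $\comp{K_R}\cap A^\ast=K_R=M_\fB\cap A^\ast$, which is the second required equality, and the effectivity claim follows since $\aut{B}$ (hence $K_R$, hence $\comp{K_R}$ via Fact~\ref{ft:coordinates}) is computable. It remains to prove the inclusion $M_\fB\subseteq\comp{K_R}$. Here I would argue by continuity and density. Fix $x\in M_\fB$; by \eqref{eq:def_B} there is $n$ with $x\in\comp{\lang(\aut{A}_\fB\leq n)}$. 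Since $A^\ast$ is dense in $\prom{A}$ (and $\comp{\lang(\aut{A}_\fB\leq n)}$ is clopen by Fact~\ref{ft:clopen}), $x$ is a limit of finite words $w_k\in\lang(\aut{A}_\fB\leq n)$; each such $w_k$ satisfies $\val{\aut{A}_\fB,w_k}\leq n<\infty$, so $w_k\in M_\fB\cap A^\ast=K_R$. Therefore $x$ is a limit of elements of $K_R$, i.e.\ $x\in\comp{K_R}$.

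The only genuine subtlety is the claim that $M_\fB\cap A^\ast$ is regular, i.e.\ that on finite words the \fB-automaton assigns a finite value to exactly the words it accepts in the ordinary (counter-forgetting) sense. Over a finite word any run passes through finitely many positions, so every counter is reset/incremented finitely often and its value before any reset is finite; thus the maximum over counters and reset positions in the definition of $\val{\rho}$ is finite, and the minimum over the (nonempty, for an accepted word) set of accepting runs is finite as well. Conversely, if $w$ has no accepting run then $\val{\aut{A}_\fB,w}=\infty$ by the empty-minimum convention. So $M_\fB\cap A^\ast$ is precisely the language accepted by $\aut{A}_\fB$ viewed as a nondeterministic finite automaton with final states $F$, which is regular and effectively obtainable. (One small care point: the $\epsilon$-transitions allowed in our automata are acyclic, so removing them and forgetting counters still yields an ordinary NFA; this does not affect which finite words are accepted.) Everything else is a routine application of the facts recorded in Section~\ref{s:basic}.
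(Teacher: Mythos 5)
Your proposal is correct and follows essentially the same route as the paper: forget the counters of $\aut{A}_\fB$ to obtain an ordinary NFA, take $K_R$ to be its language, and observe that on a finite word any accepting run makes only finitely many increments (the paper quantifies this as $\val{\aut{A}_\fB,\finA}\leq|\finA|$), so the forgotten-counter automaton accepts exactly $M_\fB\cap A^\ast$. Your density argument for $M_\fB\subseteq\comp{K_R}$ is just an unfolding of the paper's one-line observation that each $\comp{\lang(\aut{A}_\fB\leq n)}$ is contained in $\comp{K_R}$ by monotonicity of closure, so the two proofs coincide in substance.
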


\begin{proof}
Take a \fB-automaton $\aut{A}_B$ recognising $M_\fB$. Define a new automaton $\aut{A}_R$ by removing from $\aut{A}_B$ all the counters and all the counter operations. What remains are transitions, initial states, and final states. Put $\comp{K_R}=\comp{\lang(\aut{A}_R)}\subseteq\prom{A}$. Of course $M_\fB\subseteq \comp{\lang(\aut{A}_R)}$ by the definition of $M_\fB$. Clearly $\comp{\lang(\aut{A}_R)}\cap A^\ast=\lang(\aut{A}_R)$ by Theorem~\ref{th:iso}. What remains to show is that $\lang(\aut{A}_R)\subseteq M_\fB$. 

Take a finite word $\finA\in \lang(\aut{A}_R)$. Observe that $\aut{A}_B$ has an accepting run on $\finA$ because $\finA\in \lang(\aut{A}_R)$. So $\val{\aut{A}_B, \finA} \leq |\finA|$ because $\aut{A}_B$ cannot do more increments than the number of positions of the word. Therefore $\finA\in M_\fB$.
\end{proof}

\begin{proof}[Proof of Theorem~\ref{th:b_separation}]
Take two disjoint \fB-regular languages $M_1,M_2\subseteq\prom{A}$. Define $\comp{K_\sep}$ to be the language $\comp{K_R}$ from Lemma~\ref{lm:closed} for $M_1$. Thus we know that $M_1\subseteq \comp{K_R}$. We only need to show that $M_2\cap \comp{K_R}=\emptyset$. Assume the contrary, that $M_I := M_2\cap \comp{K_R}\neq \emptyset$. Since \fB-regular languages are open sets in $\prom{A}$, $M_I$ is an open set. Since $A^\ast$ is dense in $\prom{A}$ so $M_I$ contains a finite word $\finA\in A^\ast$. But by the definition of $\comp{K_R}$ in that case $\finA\in M_1$. So $\finA\in M_1\cap M_2$ --- a contradiction to the disjointness of $M_1,M_2$.
\end{proof}

\begin{remark}\label{rm:effect_prof}
Both separation results for \fB- and \fS-regular languages are effective: there is an algorithm that inputs two counter automata, verifies that the intersection of the languages is empty, and outputs an automaton recognising a separating language.
\end{remark}

\begin{proof}
By Theorem~\ref{th:effective_disjoint} it is decidable if two \fB- (resp. \fS)-regular languages are disjoint. As observed in Remark~\ref{rm:effect_S} and Lemma~\ref{lm:closed}, both constructions can be performed effectively.
\end{proof}


\section{Reduction}\label{s:reduction}

This section contains a proof of our crucial technical tool --- Theorem~\ref{th:recognition}. It is inspired by the \emph{reduction theorem} from~\cite{torunczyk_limitedness}.

Intuitively, \wB- and \wS-automata are composed of two \emph{orthogonal} parts, we can call them the $\omega$-regular part and the asymptotic part. The $\omega$-regular part corresponds to states and transitions of the automaton, while the asymptotic part represents quantitative conditions that can be measured by counters. In this section we show how to formally state this division. It can be seen as an extension of the technique presented in~\cite{bojanczyk_bounds}.

\begin{theorem}\label{th:recognition}
Fix an $\wT$-automaton $\aut{A}$ and a type $t=(s,e)$ in the trace monoid $\mtraces(\aut{A})$. There exists a \fT-regular language of profinite words $M_t\subseteq\prom{A}$ with the following property:

If $\infA$ is an $\omega$-word and $\seqA=\finA_0,\finA_1,\ldots$ is a decomposition of $\infA$ of type $t$ then the following conditions are equivalent:
\begin{enumerate}
\item $\infA \in \lang(\aut{A})$,\label{it:in_lang}
\item there exists a grouping $\seqB$ of $\seqA$ that strongly converges to a profinite word $\finB\in M_t$,\label{it:strongly_conv}
\item there exists a grouping $\seqB$ of $\seqA$ that converges to a profinite word $\finB\in M_t$.\label{it:conv}
\end{enumerate}

Additionally, one can ensure that $M_t\subseteq\comp{h_{\aut{A}}^{-1}(e)}$. The construction of a \fT-automaton recognising $M_t$ is effective given $\aut{A}$ and $t$.
\end{theorem}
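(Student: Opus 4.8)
The plan is to build the language $M_t$ directly from the counter automaton $\aut{A}$ by ``restricting'' it to runs that are compatible with the type $t=(s,e)$. Recall that a decomposition $\infA=\finA_0\finA_1\cdots$ of type $t$ means $h_{\aut{A}}(\finA_0)=s$ and $h_{\aut{A}}(\finA_n)=e$ for $n>0$, with $s\cdot e=s$, $e\cdot e=e$. The idempotent $e\in\mathcal P(Q\times Q)$ picks out, among all runs of $\aut{A}$ on an $\omega$-word that respects the decomposition, those that ``loop'' in a compatible way: there is a set of states $R\subseteq Q$ (a ``recurrent'' set closed under $e$, i.e.\ $R\times R$ meets $e$ appropriately and $e$ restricted to $R$ is transitive and reflexive on $R$) such that each block $\finA_n$ for $n>0$ carries the run from one state of $R$ to another, and $s$ carries an initial state into $R$. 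Since acceptance of an $\omega$-word by an \wT-automaton requires every counter to be reset infinitely often and a boundedness (resp.\ divergence) condition on the sequence of reset-values, and these conditions are ``tail'' conditions, I would first argue (using that $e$ is an idempotent in the finite trace monoid, together with Theorem~\ref{th:ramsey_decomposition} applied inside the accepting run) that $\infA\in\lang(\aut{A})$ via a run consistent with the decomposition $\seqA$ if and only if after grouping finitely many initial blocks into $\finA_0$ there is a run that (a) on $\finA_0$ goes from an initial state into a suitable recurrent set $R$, and (b) on the tail $\finA_1\finA_2\cdots$ loops within $R$ and witnesses the \wB- (resp.\ \wS-) counter condition across the blocks.

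Next I would package condition (b) as a \fT-automaton. Define a \fB- (resp.\ \fS-) automaton $\aut{B}_t$ over the alphabet $A$ whose states are pairs $(p,q)$ of states of $\aut{A}$ with $p,q\in R$ (or an analogous bookkeeping of ``entry state and current state within the loop''), whose counters are those of $\aut{A}$ reset at the block boundary, and whose transitions mimic those of $\aut{A}$; a word is accepting in $\aut{B}_t$ if the underlying run of $\aut{A}$ closes a loop $R$-to-$R$ consistent with $e$, and the assigned value is the maximum (resp.\ minimum) counter value attained. The point of Lemma~\ref{lm:val_regular} and the definitions \eqref{eq:def_B}, \eqref{eq:def_S} is that $\lang(\aut{B}_t)\subseteq\prom{A}$ is then exactly the set of profinite words $\finB$ such that: for arbitrarily large $n$ (resp.\ for all $n$) the profinite-regular approximants force the counter value to be $\le n$ (resp.\ $>n$). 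I would set $M_t$ to be a suitable intersection of $\lang(\aut{B}_t)$ with the profinite-regular language $\comp{h_{\aut{A}}^{-1}(e)}$ (so that the last clause of the theorem, $M_t\subseteq\comp{h_{\aut{A}}^{-1}(e)}$, holds by construction), possibly also intersecting with a profinite-regular constraint encoding that the run reaches the right recurrent structure on the ``$e$-part''. Since \fT-regular languages are closed under intersection with profinite-regular languages (Theorem~\ref{th:effective_disjoint} and Fact~\ref{ft:clopen}), $M_t$ is \fT-regular, and the construction is visibly effective from $\aut{A}$ and $t$.

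Then I would verify the three equivalences. The implication $\eqref{it:strongly_conv}\Rightarrow\eqref{it:conv}$ is immediate since strong convergence implies convergence. For $\eqref{it:conv}\Rightarrow\eqref{it:in_lang}$: given a grouping $\seqB=\finB_0,\finB_1,\ldots$ of $\seqA$ converging to $\finB\in M_t$, note each $\finB_n$ (for $n\ge 1$) is a concatenation of consecutive $\finA_i$'s, hence $h_{\aut{A}}(\finB_n)=e$ (idempotency: $e\cdots e=e$), so $\seqB$ is still a decomposition of $\infA$ of type $t$; since $\finB\in\lang(\aut{B}_t)$, the definition \eqref{eq:def_B}/\eqref{eq:def_S} and Fact~\ref{ft:conv_seq} (convergence $\seqB\to\finB$ transfers membership in each profinite-regular approximant to membership of almost all $\finB_n$) give, for the \fB-case, a uniform bound $N$ and cofinally many (indeed almost all) blocks with an accepting $\aut{B}_t$-run of value $\le N$; a compactness/König argument over the finitely many states and counter configurations $\le N$ stitches these block-runs together with the $\finB_0$-prefix run into an accepting \wB-run on $\infA$. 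The \fS-case is dual: for every $N$ almost all blocks admit a run of value $>N$, and one again threads a single run through the blocks whose reset-values tend to $\infty$. For $\eqref{it:in_lang}\Rightarrow\eqref{it:strongly_conv}$: take an accepting run $\rho$ of $\aut{A}$ on $\infA$; by a Ramsey argument on positions at block boundaries (Theorem~\ref{th:ramsey}) find an infinite set of block boundaries at which $\rho$ visits a fixed state $q\in R$ and has just reset all counters, regroup $\seqA$ along these boundaries, and then apply Theorem~\ref{th:ramsey_compact} to pass to a further grouping $\seqB$ that strongly converges to some profinite word $\finB$; the acceptance of $\rho$ guarantees each block of $\seqB$ has a local $\aut{B}_t$-run of bounded (resp.\ large) value, so $\finB\in M_t$.

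The main obstacle I expect is the back-and-forth between a \emph{single} infinite run of $\aut{A}$ and the \emph{per-block} runs recorded by the \fT-automaton $\aut{B}_t$ --- precisely, showing $\eqref{it:conv}\Rightarrow\eqref{it:in_lang}$, where one must assemble local accepting runs of $\aut{B}_t$ on almost all blocks $\finB_n$ (which a priori need not agree at the shared boundary states) into one coherent accepting run of $\aut{A}$ on $\infA$ meeting the global counter condition. This requires the bookkeeping of $\aut{B}_t$ to track both the entry and exit state within the recurrent set $R$, and a compactness argument (finitely many relevant states, and in the \fB-case finitely many relevant bounded counter configurations) to select a consistent chain; one must also be careful that the \wS-divergence condition is genuinely witnessed, which is why the $\aut{B}_t$-value is defined as the minimum reset-value and why we need it to exceed every $N$ for almost all blocks. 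Getting the definition of the recurrent set $R$ and the idempotent-compatibility conditions exactly right, so that Theorem~\ref{th:ramsey_decomposition} can be applied \emph{inside} an accepting run, is the delicate bookkeeping part of the proof.
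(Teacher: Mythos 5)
Your overall architecture matches the paper's: a per-block \fT-automaton that simulates $\aut{A}$ on loops around a fixed state $q$ with $(q_0,q)\in s$, $(q,q)\in e$, intersected with $\comp{K_e}$ for $K_e=h_{\aut{A}}^{-1}(e)$, a finite union over the choices of $q$, and the two Ramsey theorems (\ref{th:ramsey} and \ref{th:ramsey_compact}) to produce the strongly convergent grouping. (The ``recurrent set $R$'' is an unnecessary complication --- a single state $q$ suffices, and fixing the same entry and exit state $q$ also dissolves the stitching problem you worry about at the state level.) For $\fT=\fB$ your sketch is essentially the paper's proof: resetting all counters at the block boundary only perturbs reset-values by a factor of $2$, which is harmless for boundedness.

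The genuine gap is in the $\fT=\fS$ case, and it sits exactly where you placed your trust in an unachievable step: in the implication $(\ref{it:in_lang})\Rightarrow(\ref{it:strongly_conv})$ you propose to choose block boundaries ``at which $\rho$ visits a fixed state $q$ and has just reset all counters.'' An accepting \wS-run resets each counter infinitely often, but there need not exist infinitely many positions at which \emph{all} counters have recently been reset; in general every block boundary cuts some counter in the middle of an increment run. If your per-block automaton then resets that counter at the boundary (or starts it from $0$ at the block's beginning), the \emph{minimum} reset-value inside a block can be tiny --- the spanning increment run of global value $m+k$ is split into pieces of value $m$ and $k$, and $k$ may be $0$ --- so almost all blocks would get small $\aut{B}_t$-value and $\finB\notin M_t$ even though $\rho$ is accepting. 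The paper resolves this with the end-type bookkeeping of Definition~\ref{def:etype}: a vector $\tau\in\{\shl,\shr\}^\Gamma$, made constant across all boundaries by a further Ramsey grouping, records for each counter which side of the boundary carries the larger half of the spanning run; the automaton $\aut{A}_{q,\tau}$ then either skips the first reset and its preceding increments (crediting the previous block via an artificial final reset) or performs no artificial reset at all, guaranteeing every per-block reset-value exceeds $l/2$ when the global ones exceed $l$, and conversely that the reassembled run's reset-values are at least $\min(\val{\rho_{n-1}},\val{\rho_n})$. Without some mechanism of this kind your construction of $M_t$ does not satisfy the equivalence in the \fS case, so this is a missing idea rather than omitted routine detail.
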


The rest of this section is devoted to showing the above theorem. We fix for the whole proof an \wT-automaton $\aut{A}=\left<A, Q, I, \Gamma, \delta\right>$ and a type $t=(s,e)$ in $\mtraces(\aut{A})$.

Intuitively, the requirement for a decomposition $\seqA$ to be of the type $t$ corresponds to the $\omega$-regular part of $\aut{A}$ while the convergence of $\seqA$ to an element of $M_t$ takes care of the asymptotic part of $\aut{A}$.

Let us put $K_e=h_{\aut{A}}^{-1}(e)$ and assume that $\aut{B}_e=\left<A, Q_e, \{q_{I,e}\}, \delta_e,F_e\right>$ is a deterministic finite automaton recognising the regular language $K_e$. We will ensure that $M_t\subseteq \comp{K_e}$.

First we show how to construct a language $M_t$, later we prove its properties. The definition of $M_t$ depends on whether $\fT=\fB$ or $\fT=\fS$. The first case is a bit simpler.

{\bf Case $\fT=\fB$} The language $M_t$ is obtained as the union of finitely many \fB-regular languages indexed by states $q\in Q$:
\[M_t=\bigcup_{q\in Q}\ \lang(\aut{A}_q),\]
for \fB-automata $\aut{A}_q$ that we describe below. Intuitively, an automaton $\aut{A}_q$ measures \emph{loops} in $\aut{A}$ starting and ending in $q$.

If for no $q_0\in I$ we have $(q_0, q)\in s$ or if $(q,q)\notin e$ then $\lang(\aut{A}_{q})=\emptyset$. Assume otherwise. First we give an informal definition of $\aut{A}_q$:
\begin{itemize}
\item it is obtained from $\aut{A}$ by interpreting it as a finite word \fB-automaton,
\item it has initial and final state set to $q$,
\item it checks that all the counters are reset in a given word,
\item it checks that a given word belongs to $K_e$,
\item it resets all the counters at the end of the word.
\end{itemize}

Now we give a precise definition of $\aut{A}_q=\left<A, Q_q, I_q, \Gamma_q, \delta_q, F_q\right>$. 
Let:
\begin{itemize}
\item $Q_q=\{\ast\}\ \cup\ Q\times Q_e \times \{\bot,\top\}^\Gamma$,
\item $I_q=\left\{\left(q,q_{I,e},(\bot,\bot,\ldots,\bot)\right)\right\}$,
\item $\Gamma_q=\Gamma$,
\item $F_q=\{\ast\}$,
\end{itemize}
and let $\delta_q$ contain the following transitions:
\begin{itemize}
\item $(p,r,b)\trans{a,\cop}(p',r',b')$ if $p\trans{a,\cop}p'\in\delta$, $r\trans{a}r'\in \delta_e$ and for every $c\in\Gamma$ we have $b'(c)=b(c)\lor \left(\cop(c)=\cres\right)$,
\item $\left(q,r,(\top,\top,\ldots,\top)\right)\trans{\epsilon,\cop}\ast$ for $\cop=(\cres,\cres,\ldots,\cres)$ if $r\in F_e$.
\end{itemize}

The state $\ast$ is the only final state used to perform the reset at the end of a word. During a run, the automaton $\aut{A}_q$ simulates $\aut{A}$ and $\aut{B}_e$ in parallel, using $Q$ and $Q_e$. Additionally, a vector in $\{\bot,\top\}^\Gamma$ denotes for every counter whether it was already reset in a word or not.

{\bf Case $\fT=\fS$} In that case the language $M_{t}$ is obtained as the union of finitely many \fS-regular languages indexed by pairs $(q,\tau)\in Q\times \{\shl,\shr\}^\Gamma$:
\[M_{t}=\bigcup_{(q,\tau)}\ \lang(\aut{A}_{q,\tau}).\]
Intuitively, an automaton $\aut{A}_{q,\tau}$ recognises loops $q\to^\ast q$ as before. Additionally, the vector $\tau$ denotes whether a given counter $c\in\Gamma$ obtains bigger values before the first reset ($\tau(c)=\shr$) or after the last reset ($\tau(c)=\shl$) on a given finite word. The following definition formalises this property. A similar technique of assigning a \emph{reset type} to a finite run can be found in~\cite{bojanczyk_bounds}.

\begin{definition}\label{def:etype}
Let $\rho$ be a run of some counter automaton $\aut{A}$ over an $\omega$-word $\infA$. Let $k\in\N$ be a position in $\infA$ and let $c\in\Gamma$ be a counter of $\aut{A}$. Let:

\begin{itemize}
\item $V_L$ be the number of increments of $c$ between the last reset before $k$ and $k$,
\item $V_R$ be the number of increments of $c$ between $k$ and the first reset after $k$.
\end{itemize}
If there is no reset of $c$ at some side of $k$ then the respective value is $0$. Define the \emph{end-type} of $c$ on $\rho$ in $k$ (denoted as $\etype (c,\rho, k)$) by the following equation:

\[\etype(c,\rho, k)\ =
\begin{cases} \shr & \text{if $V_L< V_R$,}\\
\shl &\text{if $V_L\geq V_R$.}
\end{cases} \]
\end{definition}

As before if for no $q_0\in I$, we have $(q_0, q)\in s$ or if $(q,q)\notin e$ then $\lang(\aut{A}_{q,\tau})=\emptyset$. Assume otherwise. We start with an informal definition of $\aut{A}_{q,\tau}$:
\begin{itemize}
\item it is obtained from $\aut{A}$ by interpreting it as a finite word \fS-automaton,
\item it has initial and final state set to $q$,
\item it checks that all the counters are reset in a given word,
\item it checks that a given word belongs to $K_e$,
\item for every counter $c\in\Gamma$:
\begin{itemize}
\item if $\tau(c)=\shl$ then $A_{q,\tau}$ skips the first reset of $c$ and all the previous increments of $c$ but resets $c$ at the end of a given word,
\item if $\tau(c)=\shr$ then $A_{q,\tau}$ acts on $c$ exactly as $\aut{A}$ (with no additional reset at the end of the word).
\end{itemize}
\end{itemize}

Formally, let $\aut{A}_{q,\tau}=\left<A, Q_{q,\tau}, I_{q,\tau}, \Gamma_{q,\tau}, \delta_{q,\tau}, F_{q,\tau}\right>$ such that
\begin{itemize}
\item $Q_{q,\tau}=\{\ast\}\ \cup\ Q\times Q_e \times \{\bot,\top\}^\Gamma$,
\item $I_{q,\tau}=\left\{\left(q,q_{I,e},(\bot,\bot,\ldots,\bot)\right)\right\}$,
\item $\Gamma_{q,\tau}=\Gamma$,
\item $F_{q,\tau}=\{\ast\}$,
\end{itemize}
and $\delta_{q,\tau}$ contains the following transitions:
\begin{itemize}
\item $(p,r,b)\trans{a,\cop'}(p',r',b')$ if $p\trans{a,\cop}p'\in\delta$, $r\trans{a}r'\in \delta_e$, and for every $c\in\Gamma$ we have:
\begin{itemize}
\item $b'(c)=b(c)\lor \left(\cop(c)=\cres\right)$,
\item if $b(c)=\bot$ and $\tau(c)=\shl$ then $\cop'(c)=\ceps$, otherwise $\cop'(c)=\cop(c)$,
\end{itemize}
\item $\left(q,r,(\top,\top,\ldots,\top)\right)\trans{\epsilon,\cop}\ast$ if $r\in F_e$ and for every $c\in\Gamma$ we have $\cop(c)=\cres$ if $\tau(c)=\shl$ and $\cop(c)=\ceps$ otherwise.
\end{itemize}

\noindent Now we proceed with the proof that the above constructions give us the desired language $M_t$. First note that in both cases the constructed automata explicitly verify that a given word belongs to $K_e$. Therefore, $M_t\subseteq\comp{K_e}$.

We start by taking an $\omega$-word $\infA$ and its decomposition $\seqA=\finA_0,\finA_1,\ldots$ of the type $t$.

\subsection{Implication $(\ref{it:in_lang}) \Rightarrow (\ref{it:strongly_conv})$}

Assume that there exists an accepting run $\rho$ of $\aut{A}$ over $\infA$. We want to construct a grouping $\seqB=\finB_0,\finB_1,\ldots$ of $\seqA$ such that:
\begin{enumerate}[label=S.\arabic*]
\item for $n>0$ we have $\finB_n\in K_e$,\label{en:type}
\item all counters in $\Gamma$ are reset by $\rho$ in every word $\finB_n$,\label{en:counters}
\item the state that occurs in the run $\rho$ at the end-points of all the words $\finB_n$ is some fixed state $q\in Q$,\label{en:state}
\item there exists a vector $\tau\in \{\shl,\shr\}^\Gamma$ such that for every counter $c$ and every position $k$ between successive words $\finB_n,\finB_{n+1}$ in $\infA$ we have $\etype (c, \rho, k)=\tau(c)$,\label{en:vector}
\item the sequence of words $\seqB$ is strongly convergent to some profinite word $\finB$.\label{en:conv}
\end{enumerate}

The grouping $Z$ is obtained in steps. Observe that all the above properties are preserved when taking a grouping of a sequence. \ref{en:type} is already satisfied by the sequence $\seqA$. First, we group words of $W$ in such a way to satisfy \ref{en:counters} using the fact that the run $\rho$ is accepting. Then we further group the sequence to satisfy \ref{en:state} and \ref{en:vector} --- some state and value of $\etype$ must appear in infinitely many end-points. Finally, we apply Theorem~\ref{th:ramsey_compact} to group the sequence into a strongly convergent one. 

Now, it suffices to show that $\finB\in M_t$. First, observe that $\rho$ is a witness that there is a path from $I$ to $q$ and from $q$ to $q$ in $\aut{A}$.

We consider two cases:
\begin{description}
\item[\bf Case $\fT=\fB$] Since $\rho$ is accepting, there exists a constant $l$ such that the values of all counters during $\rho$ are bounded by $l$. We show that for every $n>0$ we have $\finB_n\in\lang(\aut{A}_q\leq l)$. It implies that $\finB\in \comp{\lang(\aut{A}_q\leq l)}$ and therefore $\finB\in\lang(\aut{A}_q)\subseteq M_t$.

Observe that $\rho$ induces a run $\rho_n$ of $\aut{A}_q$ on $\finB_n$. By \ref{en:type}, \ref{en:counters}, and \ref{en:state} we know that $\rho_n$ is an accepting run of $\aut{A}_q$ --- it starts in the only initial state and ends in $\ast$. Since $\aut{A}_q$ simulates all the resets of $\aut{A}$, we know that $\val{\rho_n}\leq l$ and therefore $\val{\aut{A}_q,\finB_n}\leq l$.

\item[\bf Case $\fT=\fS$] We show that for every $l\in\N$ the sequence $\seqB$ from some point on satisfies $\val{\aut{A}_{q,\tau}, \finB_n}> \frac{l}{2}$. It implies that for every $l$ we have $\finB\in\comp{\lang(\aut{A}_{q,\tau}> l)}$ and therefore $\finB\in\lang(\aut{A}_{q,\tau})$.

Since $\rho$ is accepting, for every constant $l$, from some point on, all the counters are reset with a value greater than $l$. Assume that the last reset with the value at most $l$ occurs before the word $\finB_N$. We show that for $n\geq N$ we have $\val{\aut{A}_{q,\tau}, \finB_n}> \frac{l}{2}$. Let $\rho_n'$ be the sequence of transitions of $\rho$ on $\finB_n$. Observe that $\rho_n'$ induces a run $\rho_n$ of $\aut{A}_{q,\tau}$ on $\finB_n$. As before, $\rho_n$ is accepting by \ref{en:type}, \ref{en:state}, and \ref{en:counters}. Take a counter $c\in\Gamma$ and a reset of this counter $r_c$ in $\rho_n$. Consider the following cases, recalling Definition~\ref{def:etype}:

\begin{itemize}
\item $r_c$ corresponds to the first reset of $c$ in the run $\rho_n'$. Since $\aut{A}_{q,\tau}$ did not skip $r_c$, $\tau(c)=\shr$. Therefore, $c$ has more increments after the beginning of $\finB_n$ than before it in $\rho$. Therefore $\val{c,\rho_n,r_c}>\frac{l}{2}$.
\item $r_c$ corresponds to a reset of $c$ in the run $\rho_n'$ but not the first one. In that case $\val{c,\rho_n,r_c}=\val{c,\rho_n',r_c}>l$.
\item $r_c$ is the additional reset performed by $\aut{A}_{q,\tau}$ at the end of the word $\finB_n$. In that case $\tau(c)=\shl$ so $c$ has greater or equal number of increments before the end of the word $\finB_n$ than after it in $\rho$. Therefore $\val{c,\rho_n,r_c}>\frac{l}{2}$.
\end{itemize}

In all three cases $\val{c,\rho_n,r_c}> \frac{l}{2}$. So we have shown that
\[\val{\aut{A}_{q,\tau}}\geq\val{\rho_n}>\frac{l}{2}.\]
\end{description}

\subsection{Implication $(\ref{it:strongly_conv}) \Rightarrow (\ref{it:conv})$}

This implication is trivial since strong convergence entails convergence.

\subsection{Implication $(\ref{it:conv}) \Rightarrow (\ref{it:in_lang})$}

Let $\seqB$ be a grouping of $\seqA$ such that $\seqB$ converges to a limit $\finB\in M_t$.

We consider two cases:
\begin{description}
\item[\bf Case $\fT=\fB$] Since $\finB\in M_t$, there exists a state $q\in Q$ such that $\finB\in \lang(\aut{A}_{q})$. Therefore, $\finB\in\comp{\lang(\aut{A}_q\leq l)}$ for some $l$. Since $\comp{\lang(\aut{A}_q\leq l)}$ is an open set and $\finB$ is a limit of $\seqB$, almost all elements of $\seqB$ belong to $\lang(\aut{A}_q\leq l)$. Assume that for $n\geq N$ we have $\finB_n\in\lang(\aut{A}_q\leq l)$. Let $\rho_n$ be a run that witnesses this fact. By the construction of $\aut{A}_q$, the run $\rho_n$ induces a run $\rho_n'$ of $\aut{A}$ on $\finB_n$. Also, since $\rho_n$ is accepting, $\rho_n'$ resets all the counters at least once.

By the assumption about $t$, there exists a run $\rho_0'$ of $\aut{A}$ on $\finB_0$ that starts in some state in $I$ and ends in $q$, and a sequence of runs $\rho_n'$ on $\finB_n$ for $0<n< N$ that lead from $q$ to $q$. Therefore, we can construct an infinite run $\rho$ of $\aut{A}$ on $\infA$ being the concatenation of the runs $\rho_n'$ on the words $\finB_n$ for $n\in\N$. We show that if $r_c$ is a reset of a counter $c$ in $\rho$ that appears after the word $\finB_N$ then $\val{c,\rho,r_c}\leq 2\cdot l$. Since there are only finitely many resets of counters before the word $\finB_N$, this bound suffices to show that the run $\rho$ is accepting.

Observe that the increments in $\rho$ correspond to the increments in the runs $\rho_n$. Also, $\rho$ performs all the resets that appear in runs $\rho_n$ except the resets at the end of the words. There can be at most one such skipped reset in a row because every counter is reset in every run $\rho_n'$. Therefore, $\val{c,\rho,r_c}\leq 2\cdot l$.


\item[\bf Case $\fT=\fS$] Let $q,\tau$ be parameters such that $\finB\in \lang(\aut{A}_{q,\tau})$. Therefore, for every $l\in\N$ we have $\finB\in\comp{\lang(\aut{A}_{q,\tau}> l)}$. As $\seqB$ is convergent to $\finB$ and languages $\comp{\lang(\aut{A}_{q,\tau}> l)}$ are open, it means that
\begin{equation}
\forall_l\ \exists_N\ \forall_{n\geq N}\ \val{\aut{A}_{q,\tau}, \finB_n}> l.
\label{eq:limits}
\end{equation}

As above we construct a run $\rho$ over $\infA$ that first leads on $\finB_0$ from some state of $I$ to $q$ and later consists of a concatenation of runs over words $\finB_n$. Let $\rho_0'$ be any run of $\aut{A}$ that leads from $I$ to $q$ on $\finB_0$. For $n>0$ we pick a run $\rho_n$ in such a way that it is accepting and\footnote{Since there are only finitely many runs of an automaton on a finite word, there always exists a run realising the value $\val{\aut{A}_{q,\tau},\finB_n}$, no matter whether the value is finite or not.}
\[\val{\rho_n}= \val{\aut{A}_{q,\tau},\finB_n}.\]
Observe that by \eqref{eq:limits}, we obtain
\begin{equation}
\lim_{n\to\infty}\val{\aut{A}_{q,\tau},\finB_n} =\lim_{n\to\infty}\val{\rho_n}=\infty.
\label{eq:tends}
\end{equation}

For $n>0$ by $\rho_n'$ be denote the run of $\aut{A}$ on $\finB_n$ induced by $\rho_n$. Similarly as in the previous case, runs $\rho_n'$ for $n\in\N$ can be combined into a run $\rho$ of $\aut{A}$ on $\infA$. By the construction of $\aut{A}_{q,\tau}$, $\rho$ resets every counter infinitely often.

Let $r_c$ be a position in $\infA$ where a counter $c\in\Gamma$ is reset during $\rho$. Assume that $r_c$ is contained in a word $\finB_n$ and $n>1$ --- we do not care about first two words.

Consider two cases:
\begin{description}
\item[$(\tau(c)=\shr)$] In that case $\rho$ performs the same increments and resets of $c$ as the runs $\rho_n$. Therefore, $\val{c,\rho,r_c}\geq \val{\rho_n}$.
\item[$(\tau(c)=\shl)$] If $r_c$ is not the first reset of $c$ in $\rho_n'$ then the value of $c$ before $r_c$ in $\rho$ is the same as in $\rho_n$. Assume that $r_c$ is the first reset of $c$ in $\rho_n'$. Note that $\rho_{n-1}$ performs an additional reset of $c$ at the end of $\finB_{n-1}$. This reset does not appear in $\rho$ so $\val{c,\rho,r_c}\geq \val{\rho_{n-1}}$.
\end{description}

In all the cases
\[\val{c,\rho,r_c}\geq \min\left(\val{\rho_{n-1}},\val{\rho_n}\right),\]
so the values of $c$ before successive resets tend to infinity by \eqref{eq:tends}. It means that $\rho$ is an accepting run and $\infA\in\lang(\aut{A})$.
\end{description}


\section{Separation for $\omega$-languages}\label{s:omega_sep}

In this section we show the main result of the paper. The technique is to lift the separation results for \fT-regular languages of profinite words into the $\omega$-word case.

\begin{theorem}\label{th:main}
Let $\fT\in\{\fB,\fS\}$. If $L_1,L_2$ are disjoint languages of $\omega$-words both recognised by \wT-automata then there exists an $\omega$-regular language $L_\sep$ such that
\[L_1\subseteq L_\sep\quad\mathrm{and}\quad L_2\subseteq L_\sep^c.\]

Additionally, the construction is effective.
\end{theorem}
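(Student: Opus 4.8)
The plan is to combine the profinite separation result (Theorem~\ref{th:separation}) with the reduction theorem (Theorem~\ref{th:recognition}), using Ramsey's theorem to pass between $\omega$-words and finite decompositions. Fix \wT-automata $\aut{A}_1,\aut{A}_2$ recognising $L_1,L_2$. The first step is to pass to a common trace monoid: let $M=\mtraces(\aut{A}_1)\times\mtraces(\aut{A}_2)$ with the product homomorphism $h=h_{\aut{A}_1}\times h_{\aut{A}_2}\colon A^\ast\to M$, so that a single decomposition of an $\omega$-word records run-information for both automata simultaneously. For each linked pair $t=(s,e)$ in $M$, Theorem~\ref{th:recognition} (applied to each component) yields \fT-regular languages $M^1_t,M^2_t\subseteq\prom{A}$ with the property that, for any $\omega$-word $\infA$ with a decomposition $\seqA$ of type $t$, we have $\infA\in L_i$ iff some grouping of $\seqA$ converges into $M^i_t$.

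The second step is the key observation that for each type $t$ the profinite languages $M^1_t$ and $M^2_t$ are \emph{disjoint}. Indeed, if $\finB\in M^1_t\cap M^2_t$, then since $A^\ast$ is dense in $\prom{A}$ and \fB-regular languages are open (Lemma~\ref{lm:open_closed}) --- or using convergent sequences of finite words in the \fS case --- one produces a decomposition of type $t$ of some $\omega$-word $\infA$ a grouping of which converges to (near) $\finB$, forcing $\infA\in L_1\cap L_2$, a contradiction. (This step needs care: in the \fS case $M^i_t$ is only closed, so one argues instead with a sequence of finite words strongly convergent to $\finB$, builds an $\omega$-word from it by concatenating a type-$t$ decomposition realising $\finB$ infinitely often, and invokes Theorem~\ref{th:recognition}\eqref{it:conv}$\Rightarrow$\eqref{it:in_lang} for both automata.) Having disjointness, apply Theorem~\ref{th:separation} to obtain for each $t$ a profinite-regular language $\comp{K_t}$ separating $M^1_t$ from $M^2_t$; by Theorem~\ref{th:iso} this corresponds to a regular language $K_t\subseteq A^\ast$.

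The third step assembles the $\omega$-regular separator. Define $L_\sep$ to contain exactly those $\omega$-words $\infA$ such that: for \emph{some} decomposition $\seqA$ of $\infA$ of \emph{some} type $t$, some grouping of $\seqA$ converges into $\comp{K_t}$. The point is that this condition is $\omega$-regular: "being a decomposition of type $t$" is checked by a Büchi automaton over blocks (using $h$), and "some grouping converges into $\comp{K_t}$" translates, via Fact~\ref{ft:conv_seq} and a Ramsey argument on the regular language $K_t$, into a condition on the block-values that is again $\omega$-regular --- concretely one uses that the set of $\omega$-words admitting such a decomposition is obtained from regular ingredients by the standard $\omega$-regular closure operations (image under the block-coding, plus the Büchi-style "infinitely often in the monochromatic colour" acceptance). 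One then checks $L_1\subseteq L_\sep$: given $\infA\in L_1$, Theorem~\ref{th:ramsey_decomposition} gives a decomposition of some type $t$, and Theorem~\ref{th:recognition}\eqref{it:in_lang}$\Rightarrow$\eqref{it:strongly_conv} gives a grouping converging into $M^1_t\subseteq\comp{K_t}$; and $L_2\cap L_\sep=\emptyset$: if $\infA\in L_2$ had a decomposition $\seqA$ of type $t$ with a grouping converging to some $\finB\in\comp{K_t}$, then since $M^2_t\subseteq\comp{K_t}^c$ we get $\finB\notin M^2_t$, but a \emph{different} decomposition of $\infA$ (from the Ramsey decomposition witnessing $\infA\in L_2$) converges into $M^2_t$ --- here one must reconcile the two decompositions, which is done by passing to a common grouping refining both and using that types and limits are stable under grouping, together with Theorem~\ref{th:recognition}\eqref{it:conv}$\Rightarrow$\eqref{it:in_lang}.

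\textbf{Main obstacle.} The delicate point is the bookkeeping around "some decomposition of some type": an $\omega$-word may admit decompositions of several distinct types, so to get $L_2\cap L_\sep=\emptyset$ one cannot simply fix one type. The resolution is that any two decompositions of $\infA$ have a common grouping (interleave the cut-points), every grouping of a type-$t$ decomposition is again of type $t$ with the same limit behaviour, and convergence into a clopen (profinite-regular) set is preserved along groupings; combined with the equivalence \eqref{it:in_lang}$\Leftrightarrow$\eqref{it:conv} in Theorem~\ref{th:recognition} applied to both $\aut{A}_1$ and $\aut{A}_2$, this pins down membership in $L_i$ independently of the chosen decomposition. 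Effectiveness follows since all automata constructions (Theorems~\ref{th:recognition},~\ref{th:separation}, Remark~\ref{rm:effect_prof}) and the Büchi-automaton assembly are effective, and disjointness of $L_1,L_2$ is decidable by Theorem~\ref{th:effective_disjoint}.
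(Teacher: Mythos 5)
Your overall architecture matches the paper's: pair the two trace monoids (your product monoid plays the role of the paper's ``coherent pairs of types''), use Theorem~\ref{th:recognition} to obtain, for each realisable type $t$, disjoint \fT-regular languages $M^1_t,M^2_t$, separate them by a profinite-regular $\comp{K_t}$ via Theorem~\ref{th:separation}, and lift back to $\omega$-words. Your disjointness argument for $M^1_t\cap M^2_t=\emptyset$ is also essentially the paper's; it hinges on the clause $M^i_t\subseteq\comp{h_i^{-1}(e_i)}$ of Theorem~\ref{th:recognition}, which is what lets you realise a sequence converging to a common limit as an actual type-$t$ decomposition of an $\omega$-word.

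The genuine gap is in your definition of $L_\sep$ and the resulting disjointness from $L_2$. You declare $\infA\in L_\sep$ iff \emph{some} grouping of some type-$t$ decomposition converges into $\comp{K_t}$. For $\infA\in L_2$ with a type-$t$ decomposition $\seqA$, Theorem~\ref{th:recognition} only tells you that \emph{some} grouping of $\seqA$ converges into $M^2_t\subseteq\comp{K_t}^c$; a \emph{different} grouping of the same $\seqA$ can perfectly well converge into $\comp{K_t}$, because distinct groupings of a merely convergent sequence may have distinct limits. Your proposed repair rests on claims that are false: limits are \emph{not} stable under grouping (the constant sequence $a,a,\ldots$ converges to $a$, but its grouping $aa,aa,\ldots$ converges to $aa$ --- this is precisely why the paper introduces \emph{strong} convergence), convergence into a clopen set is therefore not preserved by grouping, and two decompositions of an $\omega$-word need not admit a common grouping (their cut-point sets may have finite intersection; interleaving cut-points yields a common refinement, which is not a grouping of either). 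The paper sidesteps all of this by quantifying \emph{universally}: $S_{t_1,t_2}$ demands a decomposition \emph{every} grouping of which eventually lies in $R_{t_1,t_2}$ --- a condition that is also directly MSO-definable, unlike ``some grouping converges''. With that definition the grouping supplied by Theorem~\ref{th:recognition} for $\infA\in L_2$ is itself forced to have its limit in $\comp{R_{t_1,t_2}}$, contradicting $M^2_{t_2}\cap\comp{R_{t_1,t_2}}=\emptyset$; and for $L_1\subseteq L_\sep$ one first applies Theorem~\ref{th:ramsey_compact} to make the decomposition strongly convergent, so that all groupings share a single limit lying in $\comp{R_{t_1,t_2}}$. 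You would need to adopt this universal formulation (or an equivalent one) for your third step to go through.
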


The rest of the section is devoted to showing this theorem. As observed by Thomas Colcombet, in the case of $\fT=\fB$ the thesis can be proved directly, without referring to profinite words. This simpler proof is presented in Section~\ref{s:direct} with his kind permission. However, the $\fT=\fB$ case is also treated here for two reasons: first it reveals the symmetry and generality of the ``profinite approach'', second it can be used as a guideline for the more complex case of $\fT=\fS$.

Let $i\in\{1,2\}$ and $\mtraces^i$ denote the trace-monoid for an \wT-automaton $\aut{A}_i$ recognising $L_i$. Let $h_i=h^i_{\aut{A}_i}$ be the canonical homomorphisms from $\prom{A}$ to $\mtraces^i$. Define $\types{i}$ as the set of types $t_i=(s_i,e_i)$ in the trace-monoid $\mtraces^i$.

For every type $t_i=(s_i,e_i)\in \types{i}$ define $M_{t_i}^i\subseteq\prom{A}$ as the \fT-regular language of profinite words given by Theorem~\ref{th:recognition} for $\aut{A}=\aut{A}_i$ and $t=t_i$. By the statement of the theorem we know that $M_{t_i}^i\subseteq \comp{h_i^{-1}(e_i)}$.

\begin{definition}
For a pair of types $t_1=(s_1,e_1)\in \types{1}, t_2=(s_2,e_2)\in \types{2}$, we say that $t_1,t_2$ are \emph{coherent} if there exist finite words $\finA_s, \finA_e\in A^\ast$ such that: $h_i(\finA_s)=s_i$ and $h_i(\finA_e)=e_i$ for $i=1,2$.
\end{definition}

An important application of Theorem~\ref{th:recognition} is the following lemma.

\begin{lemma}
If a pair of types $t_1\in\types{1},t_2\in\types{2}$ is coherent then the languages $M_{t_1}^1,M_{t_2}^2$ are disjoint.
\end{lemma}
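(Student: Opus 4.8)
The plan is to prove this by contradiction, exploiting the equivalence between conditions \eqref{it:in_lang} and \eqref{it:conv} from Theorem~\ref{th:recognition}. Suppose $t_1\in\types{1}$ and $t_2\in\types{2}$ are coherent but the languages $M_{t_1}^1$ and $M_{t_2}^2$ are not disjoint, so there is a profinite word $\finB\in M_{t_1}^1\cap M_{t_2}^2$. The goal is to manufacture a single $\omega$-word $\infA$ that lies in both $L_1$ and $L_2$, contradicting $L_1\cap L_2=\emptyset$.

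First I would use coherence to build an appropriate $\omega$-word together with decompositions of the correct types for both automata simultaneously. Let $\finA_s,\finA_e\in A^\ast$ be the finite words witnessing coherence, so $h_i(\finA_s)=s_i$ and $h_i(\finA_e)=e_i$ for $i=1,2$. The natural candidate is $\infA=\finA_s\finA_e\finA_e\cdots$, with the decomposition $\seqA=\finA_s,\finA_e,\finA_e,\ldots$. By the definition of $h_i$-type and the identities $s_i e_i=s_i$, $e_i e_i=e_i$ (which hold since $t_i$ is a type), this decomposition $\seqA$ has $h_i$-type exactly $t_i$ for $i=1,2$. So $\seqA$ is a common witness decomposition of $\infA$ for both automata.

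The slightly delicate point is that Theorem~\ref{th:recognition} requires a grouping converging to an element \emph{of the specific language} $M_{t_i}^i$, whereas here I have a fixed $\finB$ in the intersection, and the theorem's equivalence for automaton $\aut{A}_i$ uses the language $M_{t_i}^i$. So I need a grouping $\seqB$ of $\seqA$ that converges to $\finB$. Since $\finB\in M^1_{t_1}\subseteq\comp{h_1^{-1}(e_1)}$ and also $\finB\in M^2_{t_2}\subseteq\comp{h_2^{-1}(e_2)}$, and each $\finA_e$ already satisfies $h_i(\finA_e)=e_i$, the profinite word $\finB$ is in principle a limit of words of the form $\finA_e^k$-like blocks — but one must actually exhibit a grouping of $\seqA$ whose limit is $\finB$. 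I would argue as follows: apply Theorem~\ref{th:ramsey_compact} to $\seqA$ (or rather to its tail $\finA_e,\finA_e,\ldots$) to get a strongly convergent grouping $\seqB'$ with some limit $\finB'$; this $\finB'$ lies in both $M^1_{t_1}$ and $M^2_{t_2}$? That is not automatic. The cleaner route is: since $\finB\in M^i_{t_i}$ and these are \fT-regular (open in the \fB case, closed in the \fS case) and $\finB$ is a limit of finite words, use the fact that $M^i_{t_i}\subseteq\comp{h_i^{-1}(e_i)}$ together with density of $A^\ast$ to pick a sequence of finite words converging to $\finB$; then by coherence and the structure of $\seqA$ one can interleave/regroup so that these words appear as a grouping of $\seqA$ — the key is that every grouping of $\finA_e,\finA_e,\ldots$ has all its blocks in $h_i^{-1}(e_i)$, matching the coordinate constraints that define the closure. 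I expect the main obstacle to be exactly this bookkeeping: verifying that one can choose the grouping $\seqB$ of $\seqA$ so that it converges to the \emph{given} $\finB$ in $M^1_{t_1}\cap M^2_{t_2}$, rather than merely to \emph{some} common limit.

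Once such a grouping $\seqB$ of $\seqA$ converging to $\finB\in M^i_{t_i}$ is in hand, the conclusion is immediate: for each $i\in\{1,2\}$, condition \eqref{it:conv} of Theorem~\ref{th:recognition} holds for $\aut{A}_i$, $t=t_i$, the decomposition $\seqA$ of $\infA$, and the grouping $\seqB$ with limit $\finB\in M_{t_i}^i$. Hence \eqref{it:in_lang} holds, i.e.\ $\infA\in\lang(\aut{A}_i)=L_i$. Therefore $\infA\in L_1\cap L_2$, contradicting disjointness. This proves that $M_{t_1}^1$ and $M_{t_2}^2$ are disjoint.
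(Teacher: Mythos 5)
Your overall strategy---argue by contradiction, build a single $\omega$-word admitting a decomposition of $h_i$-type $t_i$ for both $i$, and invoke the implication $(\ref{it:conv})\Rightarrow(\ref{it:in_lang})$ of Theorem~\ref{th:recognition} twice---is exactly the paper's. But the concrete construction you propose does not work, and the difficulty you flag as ``bookkeeping'' is in fact fatal to it rather than a technicality. If you set $\infA=\finA_s\finA_e\finA_e\cdots$ with decomposition $\seqA=\finA_s,\finA_e,\finA_e,\ldots$, then every grouping of $\seqA$ has all blocks after the first of the form $\finA_e^{k}$, so any limit of such a grouping lies in the topological closure of $\{\finA_e^{k}:k\geq 1\}$ in $\prom{A}$. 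The given $\finB\in M_{t_1}^1\cap M_{t_2}^2$ has no reason to lie in that closure: membership in $h_1^{-1}(e_1)\cap h_2^{-1}(e_2)$ pins down only two ``coordinates'' of a profinite word, whereas $\finB$ is determined by its membership in \emph{all} regular languages. Concretely, in the $\fS$ case $\finB$ may be a limit of finite words containing unboundedly many occurrences of a pattern that no power $\finA_e^{k}$ contains, and then the ultimately periodic word $\finA_s\finA_e^\omega$ simply does not belong to $L_1\cap L_2$, so no choice of grouping can rescue the argument. Your fallback (``interleave/regroup so that these words appear as a grouping of $\seqA$'') is not available for the same reason: a sequence of finite words converging to $\finB$ need not consist of powers of $\finA_e$.

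The paper's fix is to abandon periodicity of the witness word. Since $\finB\in M_{t_i}^i\subseteq\comp{h_i^{-1}(e_i)}$ for $i=1,2$, Theorem~\ref{th:iso} gives $\finB\in\comp{h_1^{-1}(e_1)\cap h_2^{-1}(e_2)}$, a clopen neighbourhood of $\finB$; by density of $A^\ast$ and Fact~\ref{ft:conv_seq} one can therefore choose a sequence $\finA_1,\finA_2,\ldots$ of \emph{finite} words converging to $\finB$ with $h_i(\finA_n)=e_i$ for both $i$ and all $n$. Coherence is used only to supply a single word $\finA_0$ with $h_i(\finA_0)=s_i$. Then $\infA=\finA_0\finA_1\finA_2\cdots$ carries a decomposition of $h_i$-type $t_i$ for both $i$ (using $s_ie_i=s_i$ and $e_ie_i=e_i$), and this decomposition itself already converges to $\finB\in M_{t_i}^i$, so condition~(\ref{it:conv}) holds with the trivial grouping and the contradiction $\infA\in L_1\cap L_2$ follows. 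Replacing your periodic word with this non-periodic one closes the gap.
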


\begin{proof}
Take coherent types $t_1=(s_1,e_1)$ and $t_2=(s_2,e_2)$.

Assume that there exists a profinite word $\finA\in M_{t_1}^1\cap M_{t_2}^2$. Since $\finA\in\comp{h_i^{-1}(e_i)}$ for $i=1,2$, there exists a sequence $\seqA=\finA_1,\finA_2,\ldots$ of finite words converging to $\finA$ such that $h_1(\finA_n)=e_1$ and $h_2(\finA_n)=e_2$ for all $n>0$. Moreover, by coherency of $t_1,t_2$ there exists a finite word $\finA_0$ such that $h_1(\finA_0)=s_1$ and $h_2(\finA_0)=s_2$. Let $\infA=\finA_0 \finA_1 \finA_2\ldots$ We show that $\infA\in L_1\cap L_2$ --- a contradiction.

Take $i\in\{1,2\}$. Observe that $\infA=\finA_0 \finA_1\ldots$ is a decomposition of $\infA$ of $h_i$-type $t_i$. Additionally observe that the sequence $\seqA$ converges to $\finA$ and $\finA$ belongs to $M_{t_i}^i$. So, by Theorem~\ref{th:recognition} we have $\infA\in L_i$.
\end{proof}

Take a pair of coherent types $t_1,t_2$. Since the languages $M_{t_1}^1,M_{t_2}^2$ are disjoint, we can use Theorem~\ref{th:separation} to find a separating profinite-regular language $\comp{R_{t_1,t_2}}\subseteq\prom{A}$ such that
\[M_{t_1}^1\subseteq \comp{R_{t_1,t_2}}\quad\text{and}\quad M_{t_2}^2\subseteq \comp{R_{t_1,t_2}}^c.\]

Now we can introduce the $\omega$-regular language $L_\sep$ separating $L_1,L_2$.

\begin{definition}
Consider a coherent pair of types $(t_1,t_2)$. Let $S_{t_1,t_2}$ be defined as follows: $S_{t_1,t_2}$ is the language of $\omega$-words $\infA$ such that there exists a decomposition $\infA=\finA_0\finA_1\ldots$ of types $t_1,t_2$ with respect to $h_1,h_2$, such that every grouping of $(\finA_n)_{n\in\N}$ from some point on belongs to the regular language $R_{t_1,t_2}$.

Note that the above definition can be expressed in MSO so $S_{t_1,t_2}$ is an $\omega$-regular language.

Let $L_\sep$ be the $\omega$-regular language defined as
\[L_\sep=\bigcup_{t_1,t_2}\ S_{t_1,t_2},\]
where the sum ranges over pairs of coherent types.
\end{definition}

Clearly $L_\sep$ is an $\omega$-regular language. What remains is to show the following lemma.

\begin{lemma}
The language $L_\sep$ separates $L_1$ and $L_2$.
\end{lemma}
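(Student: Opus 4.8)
The goal is to show two inclusions: $L_1 \subseteq L_\sep$ and $L_2 \subseteq L_\sep^c$ (equivalently $L_2 \cap L_\sep = \emptyset$). The plan is to handle each of these separately, using Theorem~\ref{th:recognition} (the reduction theorem) in one direction and Theorem~\ref{th:ramsey_decomposition} together with Theorem~\ref{th:ramsey_compact} to produce the witnessing decompositions.

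First I would prove $L_1 \subseteq L_\sep$. Take $\infA \in L_1$. Apply Theorem~\ref{th:ramsey_decomposition} simultaneously to the two homomorphisms $h_1, h_2$: formally, consider the product monoid $\mtraces^1 \times \mtraces^2$ with the homomorphism $\infA \mapsto (h_1(\infA), h_2(\infA))$, and obtain a decomposition $\infA = \finA_0 \finA_1 \finA_2 \ldots$ whose type with respect to this product homomorphism is some linked pair, i.e.\ it is simultaneously of some $h_1$-type $t_1 = (s_1, e_1)$ and some $h_2$-type $t_2 = (s_2, e_2)$. Then $t_1, t_2$ are coherent (the words $\finA_0$ and $\finA_1$ witness coherency, since $h_i(\finA_0) = s_i$ and $h_i(\finA_1) = e_i$). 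Next, since $\infA \in L_1 = \lang(\aut{A}_1)$ and $\seqA$ is a decomposition of $\infA$ of $h_1$-type $t_1$, implication $(\ref{it:in_lang}) \Rightarrow (\ref{it:strongly_conv})$ of Theorem~\ref{th:recognition} gives a grouping $\seqB$ of $\seqA$ that strongly converges to some $\finB \in M_{t_1}^1$. Since $M_{t_1}^1 \subseteq \comp{R_{t_1,t_2}}$, we get $\finB \in \comp{R_{t_1,t_2}}$. Because $R_{t_1,t_2}$ is regular and $\seqB$ \emph{strongly} converges to $\finB$, every grouping of $\seqB$ is convergent to $\finB \in \comp{R_{t_1,t_2}}$, hence by Fact~\ref{ft:conv_seq} almost all words of every such grouping lie in $R_{t_1,t_2}$. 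Since a grouping of $\seqB$ is a grouping of $\seqA$, and $\seqB$ is a decomposition of $\infA$ of types $t_1, t_2$, this $\seqB$ witnesses $\infA \in S_{t_1,t_2} \subseteq L_\sep$. (One subtlety: the definition of $S_{t_1,t_2}$ asks for a decomposition \emph{of types $t_1,t_2$} such that every grouping eventually lies in $R_{t_1,t_2}$; $\seqB$ is a grouping of $\seqA$ hence still a decomposition of $\infA$ of types $t_1,t_2$, as noted after the definition of grouping.)

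Next I would prove $L_2 \cap L_\sep = \emptyset$. Suppose toward a contradiction that $\infA \in L_2 \cap L_\sep$. From $\infA \in L_\sep$ pick coherent types $t_1, t_2$ and a decomposition $\infA = \finA_0 \finA_1 \ldots$ of $h_1$-type $t_1$ and $h_2$-type $t_2$ such that every grouping of $\seqA = (\finA_n)_n$ is eventually in $R_{t_1,t_2}$. By Theorem~\ref{th:ramsey_compact} there is a grouping $\seqB$ of $\seqA$ that strongly converges to some profinite word $\finB$; replacing $\seqA$ by $\seqB$ (it is still a decomposition of $\infA$ of types $t_1, t_2$, and every grouping of $\seqB$ is a grouping of $\seqA$ hence eventually in $R_{t_1,t_2}$), we may assume $\seqA$ itself strongly converges to $\finB$. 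Since $\seqA$ converges to $\finB$ and almost all words of $\seqA$ lie in $R_{t_1,t_2}$, Fact~\ref{ft:conv_seq} gives $\finB \in \comp{R_{t_1,t_2}}$; by the separation property $M_{t_2}^2 \subseteq \comp{R_{t_1,t_2}}^c$, so $\finB \notin M_{t_2}^2$. Now here is the point where I must use the \emph{reverse} implication of Theorem~\ref{th:recognition}: if $\infA$ were in $L_2 = \lang(\aut{A}_2)$, then since $\seqA$ is a decomposition of $\infA$ of $h_2$-type $t_2$, implication $(\ref{it:in_lang}) \Rightarrow (\ref{it:strongly_conv})$ (applied to $\aut{A}_2$, $t_2$) would yield a grouping $\seqB'$ of $\seqA$ strongly converging to some $\finB' \in M_{t_2}^2$. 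But $\seqA$ already strongly converges to $\finB$, and strong convergence is preserved under grouping with the \emph{same} limit, so $\finB' = \finB$; thus $\finB \in M_{t_2}^2$, contradicting the previous paragraph. Hence $\infA \notin L_2$, contradiction.

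\textbf{Main obstacle.} The delicate point is the uniqueness-of-limit argument in the second part: I need that if a sequence strongly converges to $\finB$ then every one of its groupings converges to the \emph{same} $\finB$ (not merely converges). This is exactly the content of the definition of strong convergence together with the remark after the proof of Theorem~\ref{th:ramsey_compact}, so it is available, but it must be invoked carefully --- in particular, the grouping $\seqB'$ produced by Theorem~\ref{th:recognition} for $\aut{A}_2$ is a grouping of a \emph{strongly convergent} sequence, so its limit is forced. A secondary bookkeeping issue is making sure that all the manipulations (passing to the product monoid to get a single decomposition that is simultaneously of a $t_1$-type and a $t_2$-type, and repeatedly passing to groupings) preserve the property ``decomposition of $\infA$ of types $t_1, t_2$'' and preserve coherency of $(t_1, t_2)$; these are routine given the observations recorded after the definitions of type and grouping. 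Effectiveness of the construction follows from the effectiveness clauses in Theorems~\ref{th:recognition}, \ref{th:separation}, and~\ref{th:effective_disjoint}, together with the fact that $L_\sep$ is presented by an MSO formula built effectively from the regular languages $R_{t_1,t_2}$.
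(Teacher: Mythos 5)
Your proof is correct and follows essentially the same route as the paper: a Ramsey decomposition with respect to both trace monoids simultaneously, Theorem~\ref{th:recognition} in both directions, and Fact~\ref{ft:conv_seq} to pass between convergence to an element of $\comp{R_{t_1,t_2}}$ and eventual membership in $R_{t_1,t_2}$. The only notable difference is that the "main obstacle" you flag (uniqueness of the limit under grouping) is avoidable in the second half: the paper simply observes that the grouping $\seqB$ produced by Theorem~\ref{th:recognition} is itself a grouping of the witnessing decomposition, so by the definition of $S_{t_1,t_2}$ almost all its words already lie in $R_{t_1,t_2}$ and hence its limit lies in $\comp{R_{t_1,t_2}}$, with no need to first pass to a strongly convergent sequence.
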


\begin{proof}
First observe that $L_1\subseteq L_\sep$. Take $\infA\in L_1$. We want to construct a decomposition $\seqA=\finA_0,\finA_1,\ldots$ of $\infA$ such that:
\begin{itemize}
\item the $h^i$-type of $\seqA$ is $t_i$ for $i=1,2$ and some pair of coherent types $(t_1,t_2)$ in $\types{1}\times\types{2}$,
\item the sequence $\seqA$ is strongly convergent to some profinite word $\finA\in\prom{A}$.
\end{itemize}

The sequence $\seqA$ is obtained in steps. First we use Theeorem~\ref{th:ramsey} to find a decomposition of $\infA$ with respect to both monoids $\mtraces^1,\mtraces^2$ at the same time. Such decomposition satisfies the first bullet above. Then, using Theorem~\ref{th:ramsey_compact}, we can group our sequence into $\seqA$ in such a way that $\seqA$ is strongly convergent.

By Theorem~\ref{th:recognition}, there exists a grouping $\seqB$ of $\seqA$ that converges to a profinite word $\finB\in M_{t_1}^1\subseteq \comp{R_{t_1,t_2}}$. But since $\seqA$ is strongly convergent, $\finB=\finA$. Therefore, by the strong convergence of $\seqA$, every grouping of $\seqA$ converges to $\finA\in \comp{R_{t_1,t_2}}$. So every grouping of $\infA$ from some point on belongs to $R_{t_1,t_2}$ as in the definition of $L_\sep$. Therefore, $\infA\in L_\sep$.

Now we show that $L_2\cap L_\sep=\emptyset$. Assume otherwise, that there exists an $\omega$-word $\infA\in L_2\cap L_\sep$. Since $\infA\in L_\sep$, there exists a coherent pair of types $t_1,t_2$ such that $\infA\in S_{t_1,t_2}$. Therefore, $\infA$ can be decomposed as $\infA=\finA_0\finA_1\ldots$ of types $t_1,t_2$ respectively. Let $\seqA=\finA_0,\finA_1,\ldots$ Because $\infA\in L_2$ so by Theorem~\ref{th:recognition} there exists a grouping $\seqB$ of $\seqA$ with a limit $\finB\in M_{t_2}^2$ . But by the definition of $S_{t_1,t_2}$ almost all words in $\seqB$ belong to $R_{t_1,t_2}$ so $\finB\in \comp{R_{t_1,t_2}}$. Since $\comp{R_{t_1,t_2}}\cap M_{t_2}^2=\emptyset$, we have the required contradiction.
\end{proof}

Now we can deduce the corollary from the introduction.

\begin{corollary}\label{cl:delta}
If a given language of $\omega$-words $L$ and its complement $L^c$ are both \wB-regular (resp. \wS-regular) then $L$ is (effectively) $\omega$-regular.
\end{corollary}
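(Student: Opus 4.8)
The plan is to apply Theorem~\ref{th:main} directly, with no further machinery. Suppose first that $L$ and $L^c$ are both \wB-regular (the \wS-regular case is handled identically, using the corresponding instance of Theorem~\ref{th:main}). The two languages $L_1 := L$ and $L_2 := L^c$ are trivially disjoint, and both are recognised by \wB-automata by hypothesis. Hence Theorem~\ref{th:main} yields an $\omega$-regular language $L_\sep$ with $L_1 \subseteq L_\sep$ and $L_2 \subseteq L_\sep^c$.

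The key observation is that the second inclusion, $L^c \subseteq L_\sep^c$, is equivalent (by taking complements) to $L_\sep \subseteq L$. Combined with the first inclusion $L \subseteq L_\sep$, this forces $L = L_\sep$. Since $L_\sep$ is $\omega$-regular, so is $L$, which is what we wanted.

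For the effectiveness claim, one starts from \wB-automata (resp.\ \wS-automata) recognising $L$ and $L^c$ and feeds them into the effective construction underlying Theorem~\ref{th:main}; the output, say a B\"uchi automaton for $L_\sep$, recognises $L$ itself by the equality established above. So from the two given automata one effectively obtains an $\omega$-regular representation of $L$.

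There is essentially no obstacle here: all of the work has been absorbed into Theorem~\ref{th:main} (and, through it, into Theorem~\ref{th:recognition} and the profinite separation results). The only point that requires (minimal) care is the direction of the inclusion after complementation, namely that a separator for the complementary pair $(L,L^c)$ is pinned down to equal $L$ rather than merely to over- or under-approximate it.
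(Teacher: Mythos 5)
Your proposal is correct and coincides with the paper's own argument: apply Theorem~\ref{th:main} to the disjoint pair $(L, L^c)$ and observe that the two separation inclusions force $L_\sep = L$, with effectiveness inherited from the effective construction of $L_\sep$. Nothing further is needed.
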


\begin{proof}
Let $L$ be a language of $\omega$-words such that $L$ and $L^c$ are both $\wT$-regular. By Theorem~\ref{th:main} there exists an $\omega$-regular language $L_\sep$ that separates $L$ and $L^c$. But in that case $L_\sep=L$ so $L$ is $\omega$-regular.
\end{proof}


\section{A direct proof of separation for \wB-regular languages}\label{s:direct}

As observed by Thomas Colcombet, the separation property for \wB-regular languages can be shown directly, without referring to profinite words. This simpler proof is presented here with his kind permission.

Let $\lang(\aut{A})$ be an \wB-regular language recognised by an \wB-automaton $\aut{A}$. Consider a B\"uchi automaton $\aut{A}'$ obtained from $\aut{A}$ by removing all the counter operations (similarly to Lemma~\ref{lm:closed}) and requiring that every counter is reset infinitely often.

Clearly, the language recognized by $\aut{A}'$ is $\omega$-regular and $\lang(\aut{A})\subseteq\lang(\aut{A}')$.

\begin{claim}\label{cl:ultimat}
If $\infA=\finA\finB\finB\ldots$ is an ultimately periodic $\omega$-word in $\lang(\aut{A}')$ then $\infA\in \lang(\aut{A})$.
\end{claim}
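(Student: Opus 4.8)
The plan is to exploit the ultimate periodicity of $\infA$ directly in terms of runs of $\aut{A}'$ and $\aut{A}$. First I would take an accepting run $\rho'$ of $\aut{A}'$ on $\infA=\finA\finB\finB\ldots$. Since $\aut{A}'$ is the Büchi automaton obtained from $\aut{A}$ by deleting counter operations and demanding that each counter be reset infinitely often, the run $\rho'$ is literally a run of $\aut{A}$ on $\infA$ in which every counter is reset infinitely often; the only thing missing is that the counter values might be unbounded. So the task reduces to: modify $\rho'$ into a run of $\aut{A}$ on $\infA$ whose counter values stay bounded, using that $\infA$ is ultimately periodic.

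The key step is a pumping/pigeonhole argument on the periodic part. I would factor $\infA = \finA\,\finB^\omega$, and look at the sequence of configurations (state together with, say, the vector recording which counters have been reset so far in the current period) that $\rho'$ passes through at the block boundaries between successive copies of $\finB$. Since the state set is finite, some state $q$ occurs at infinitely many such boundaries; moreover, by passing to a further infinite subsequence and using that each counter is reset infinitely often, I can find two boundary positions $k_1 < k_2$, both carrying state $q$, such that between them every counter is reset at least once. Let $\sigma$ be the finite segment of $\rho'$ strictly between $k_1$ and $k_2$: it is a run of $\aut{A}$ from $q$ to $q$ on $\finB^m$ for some $m\ge 1$, along which every counter is reset. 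Now build a new run $\rho$ of $\aut{A}$ on $\infA$ by following $\rho'$ up to position $k_1$ and then repeating $\sigma$ forever (this reads exactly $\finA\,(\finB^?)\,(\finB^m)^\omega = \finA\finB^\omega = \infA$, after absorbing the initial finite discrepancy into $\finA$). In $\rho$ every counter is reset infinitely often, and the value attained by any counter before any reset is bounded by the total number of increments inside a single copy of $\sigma$, which is a fixed finite constant; hence all counter values are bounded. Therefore $\rho$ is an accepting run of $\aut{A}$, giving $\infA\in\lang(\aut{A})$.

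The main obstacle, and the point requiring care, is arranging simultaneously that (i) the chosen boundary positions $k_1,k_2$ carry the \emph{same} state, and (ii) \emph{every} counter is reset at least once in the segment between them. Point (i) alone is immediate from finiteness of $Q$, but combining it with (ii) needs the "reset infinitely often" clause of $\aut{A}'$: from some boundary on, I can always extend the window until all $|\Gamma|$ counters have been reset, which happens after finitely many further blocks, and then pigeonhole forces a repeated state among the (infinitely many) candidate windows I can start at. A clean way to phrase this is to colour each unordered pair of block-boundaries $\{i,j\}$, $i<j$, by the state at position $i$ together with the state at position $j$ together with the information "did every counter get reset between $i$ and $j$"; Ramsey's theorem (Theorem~\ref{th:ramsey}) then yields an infinite monochromatic set of boundaries, and the monochromatic colour must be one where the two states agree and all counters are reset — otherwise, taking three boundaries $i<j<l$ in the set, the pair $\{i,l\}$ would see all counters reset (since $\{i,j\}$ and $\{j,l\}$ together cover a full period's worth of resets, and actually any window containing a sufficiently long stretch does) while $\{i,j\}$ might not, contradicting monochromaticity once the window is long enough. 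I would spell out this Ramsey colouring, extract $k_1<k_2$ from the monochromatic set, form $\sigma$, and conclude as above.
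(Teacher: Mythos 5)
Your proof is correct and follows essentially the same route as the paper: the paper simply observes that $\aut{A}'$ has an ultimately periodic accepting run on $\infA$ and that periodicity plus infinitely many resets bounds the counter values, while you spell out the standard lasso/pigeonhole (Ramsey) construction of that ultimately periodic run, taking care that the repeated segment returns to the same state and resets every counter. The extra detail is harmless and fills in exactly the step the paper leaves as an observation.
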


\begin{proof}
Observe that $\aut{A}'$ has an ultimately periodic accepting run $\rho$ on $\infA$. By the acceptance condition of $\aut{A}'$, every counter $c$ of $\aut{A}$ is reset infinitely often during $\rho$. Since $\rho$ is ultimately periodic, the values of the counter $c$ are bounded in $\rho$. Therefore, $\infA\in \lang(\aut{A})$.
\end{proof}

It means that, since an $\omega$-regular language is entirely defined by the ultimately periodic $\omega$-words it contains~\cite{buchi_decision}, $\lang(\aut{A}')$ is the least $\omega$-regular language that contains $\lang(\aut{A})$. It also means that it depends only on $\lang(\aut{A})$ but not on the specific automaton $\aut{A}$ that recognizes it. Let us call this language $\closure(\lang(\aut{A}))$.

Consider now two \wB-regular languages of empty intersection $L_1$ and $L_2$. Assume $\closure(L_1)$ intersects $\closure(L_2)$ then, since these languages are $\omega$-regular, there is an ultimately periodic $\omega$-word in this intersection. But according to Claim~\ref{cl:ultimat}, this ultimately periodic $\omega$-word belongs to both $L_1$ and $L_2$. A contradiction.

It follows that if $L_1$ and $L_2$ are disjoint then $\closure(L_1)$ (respectively $\closure(L_2)$) are separators. Also, this construction shows that in order to construct a separator of two \wB-regular languages, only one language needs to be known.


\section{Acknowledgements}\label{s:ack}

The author would like to thank Miko{\l}aj Boja{\'n}czyk for his suggestions. This paper makes noticeable use of results by Szymon Toru{\'n}czyk: his PhD thesis~\cite{torunczyk_phd} and paper~\cite{torunczyk_limitedness}. The author would like to thank Filip Mazowiecki and the referees for careful reading the text and providing a number of important comments. Moreover, Thomas Colcombet had a gainful influence on the paper by providing a direct proof for the separation of \wB-regular languages.


\bibliographystyle{alpha}
\bibliography{mskrzypczak}

\begin{thebibliography}{HMN09}

\bibitem[Alm03]{almeida_profinite}
Jorge Almeida.
\newblock Profinite semigroups and applications.
\newblock In {\em Structural Theory of Automata, Semigroups, and Universal
  Algebra}, pages 7--18, 2003.

\bibitem[AMN12]{michalewski_separation}
Andr{\'e} Arnold, Henryk Michalewski, and Damian Niwi{\'n}ski.
\newblock On the separation question for tree languages.
\newblock In {\em STACS}, pages 396--407, 2012.

\bibitem[BC06]{bojanczyk_bounds}
Miko{\l}aj Boja{\'n}czyk and Thomas Colcombet.
\newblock Bounds in $\omega$-regularity.
\newblock In {\em LICS}, pages 285--296, 2006.

\bibitem[BKT12]{bojanczyk_ramsey_compact}
Miko{\l}aj Boja{\'n}czyk, Eryk Kopczy{\'n}ski, and Szymon Toru{\'n}czyk.
\newblock {Ramsey}’s theorem for colors from a metric space.
\newblock {\em Semigroup Forum}, 85:182--184, 2012.

\bibitem[B{\"u}c62]{buchi_decision}
Julius~Richard B{\"u}chi.
\newblock On a decision method in restricted second-order arithmetic.
\newblock In {\em Proc. 1960 Int. Congr. for Logic, Methodology and Philosophy
  of Science}, pages 1--11, 1962.

\bibitem[CL10]{colcombet_cost_trees}
Thomas Colcombet and Christof L{\"o}ding.
\newblock Regular cost functions over finite trees.
\newblock In {\em LICS}, pages 70--79, 2010.

\bibitem[Col09]{colcombet_stabilisation}
Thomas Colcombet.
\newblock The theory of stabilisation monoids and regular cost functions.
\newblock In {\em ICALP (2)}, pages 139--150, 2009.

\bibitem[Col13]{colcombet_hab}
Thomas Colcombet.
\newblock Fonctions r{\'e}guli{\`e}res de co{\^u}t.
\newblock Habilitation thesis, Universit{\'e} Paris Diderot---Paris 7, 2013.

\bibitem[HMN09]{hummel_separation}
Szczepan Hummel, Henryk Michalewski, and Damian Niwi{\'n}ski.
\newblock On the {Borel} inseparability of game tree languages.
\newblock In {\em STACS}, pages 565--575, 2009.

\bibitem[Kec95]{kechris_descriptive}
Alexander Kechris.
\newblock {\em Classical descriptive set theory}.
\newblock Springer-Verlag, New York, 1995.

\bibitem[KV99]{kupferman_complementation}
Orna Kupferman and Moshe~Y. Vardi.
\newblock The weakness of self-complementation.
\newblock In {\em STACS}, pages 455--466, 1999.

\bibitem[Pin09]{pin_profinite}
{Jean-\'{E}ric} Pin.
\newblock Profinite methods in automata theory.
\newblock In {\em STACS}, pages 31--50, 2009.

\bibitem[PP04]{perrin_pin_words}
Dominique Perrin and {Jean-\'{E}ric} Pin.
\newblock {\em Infinite Words: Automata, Semigroups, Logic and Games}.
\newblock Elsevier, 2004.

\bibitem[Rab70]{rabin_separation}
Michael~O. Rabin.
\newblock Weakly definable relations and special automata.
\newblock In {\em Proceedings of the Symposium on Mathematical Logic and
  Foundations of Set Theory}, pages 1--23. North-Holland, 1970.

\bibitem[Tor11]{torunczyk_phd}
Szymon Toru{\'n}czyk.
\newblock {\em Languages of profinite words and the limitedness problem}.
\newblock PhD thesis, University of Warsaw, 2011.

\bibitem[Tor12]{torunczyk_limitedness}
Szymon Toru{\'n}czyk.
\newblock Languages of profinite words and the limitedness problem.
\newblock In Artur Czumaj, Kurt Mehlhorn, Andrew~M. Pitts, and Roger
  Wattenhofer, editors, {\em ICALP (2)}, volume 7392 of {\em Lecture Notes in
  Computer Science}, pages 377--389. Springer, 2012.

\end{thebibliography}

\end{document}